\newtheorem{theorem}{Theorem}[section]
\newtheorem{lemma}[theorem]{Lemma}
\newtheorem{proposition}[theorem]{Proposition}
\newtheorem{corollary}[theorem]{Corollary}
\newtheorem{Definition}[theorem]{Definition}
\newtheorem{Example}[theorem]{Example}
\newtheorem{Remark}[theorem]{Remark}
\newenvironment{example}{\begin{Example}\begin{em}}{\end{em}\end{Example}}
\newproof{proof}{Proof}
\pgfplotsset{compat=newest}
\tikzset{
    invisible/.style={opacity=0},
    visible on/.style={alt={#1{}{invisible}}},
    alt/.code args={<#1>#2#3}{%
      \alt<#1>{\pgfkeysalso{#2}}{\pgfkeysalso{#3}} % \pgfkeysalso doesn't change the path
    },
  }
\def\eqref#1{(\ref{#1})}
\def\tuple#1{\langle#1\rangle}
\newcommand{\mL}{\mathcal{L}}
\newcommand{\bL}{\mathbf{L}}
\newcommand{\bLdb}{\bL^{\leq n}}
\newcommand{\mA}{\mathcal{A}}
\newcommand{\mAp}{{\mathcal{A}'}}
\newcommand{\mAdp}{{\mathcal{A}''}}
\newcommand{\mB}{\mathcal{B}}
\newcommand{\NN}{\mathbb{N}}
\newcommand{\myend}{\mbox{}\hfill{\small$\blacksquare$}}
\newcommand{\comment}[1]{}
\newcommand{\fand}{\varotimes}
\newcommand{\fto}{\Rightarrow}
\newcommand{\fequiv}{\Leftrightarrow}
\newcommand{\deltaA}{\delta^\mA}
\newcommand{\sigmaA}{\sigma^\mA}
\newcommand{\tauA}{\tau^\mA}
\newcommand{\deltaAp}{\delta^\mAp}
\newcommand{\sigmaAp}{\sigma^\mAp}
\newcommand{\tauAp}{\tau^\mAp}
\newcommand{\deltaAdp}{\delta^\mAdp}
\newcommand{\sigmaAdp}{\sigma^\mAdp}
\newcommand{\tauAdp}{\tau^\mAdp}
\newcommand{\normS}[3]{\|#1\|_{{#2} \fto {#3}}}
\newcommand{\normBS}[3]{\|#1\|_{{#2} \fequiv {#3}}}
\newcommand{\nZs}{\normS{\varphi}{\mA}{\mAp}}
\newcommand{\nZbs}{\normBS{\varphi}{\mA}{\mAp}}
\newcommand{\nZdbs}{\normS{\Phi}{\mA}{\mAp}}
\newcommand{\nZdbbs}{\normBS{\Phi}{\mA}{\mAp}}
\newcommand{\mF}{\mathcal{F}}
\newcommand{\mFs}{\mF_{\!\!_\to}(\Sigma,\mL)}
\newcommand{\mFbs}{\mF_{\!\!_\leftrightarrow}(\Sigma,\mL)}
\newcommand{\mFdbs}[1]{\mF^{\,\leq #1}_{\!\!_\to}(\Sigma,\mL)}
\newcommand{\mFdbbs}[1]{\mF^{\,\leq #1}_{\!\!_\leftrightarrow}(\Sigma,\mL)}
\newcommand{\Succ}{\mathit{succ}}
\newcommand{\Pred}{\mathit{pred}}
\newcommand{\CompDBFS}{\mbox{$\mathsf{ComputeDepthBoundedFuzzySimulation}$}\xspace}
\newcommand{\CompDBFB}{\mbox{$\mathsf{ComputeDepthBoundedFuzzyBisimulation}$}\xspace}
\newcommand{\True}{\mathit{true}}
\newcommand{\False}{\mathit{false}}
\newcommand{\changed}{\mathit{changed}}
\journal{arXiv}
\begin{document}
\sloppy
	
\begin{frontmatter}		
	
\title{Depth-Bounded Fuzzy Simulations and Bisimulations between~Fuzzy~Automata}
%\title{Depth-Bounded Fuzzy Simulations and Bisimulations}

\author[inst1,inst2]{Linh Anh Nguyen}
\ead{nguyen@mimuw.edu.pl}
\author[inst3]{Ivana Mici\' c}
\ead{ivana.micic@pmf.edu.rs}
%\author[inst3]{Stefan Stanimirovi\' c\corref{labcor}}
\author[inst3]{Stefan Stanimirovi\' c}
\ead{stefan.stanimirovic@pmf.edu.rs}
%\cortext[labcor]{Corresponding author.}

\address[inst1]{Institute of Informatics, University of Warsaw, Banacha 2, 02-097 Warsaw, Poland}
\address[inst2]{Faculty of Information Technology, Nguyen Tat Thanh University, Ho Chi Minh City, Vietnam}
\address[inst3]{University of Ni\v s, Faculty of Sciences and Mathematics, Vi\v segradska 33, 18000 Ni\v s, Serbia}

\begin{abstract}
Simulations and bisimulations are well-established notions in crisp/fuzzy automata theory and are widely used to compare the behaviors of automata. Their main drawback is that they compare the behaviors of fuzzy automata in a crisp manner. Recently, fuzzy simulations and fuzzy bisimulations have been defined for fuzzy automata as a kind of approximate simulations and approximate bisimulations that compare the behaviors of fuzzy automata in a fuzzy manner. However, they still suffer from serious shortcomings. First, they still cannot correlate all fuzzy automata that are intuitively \enquote{more or less} (bi)similar. Second, the currently known algorithms for computing the greatest fuzzy simulation or bisimulation between two finite fuzzy automata have an exponential time complexity when the {\L}ukasiewicz or product structure of fuzzy values is used. This work deals with these problems, providing approximations of fuzzy simulations and fuzzy bisimulations. We define such approximations via a novel notion of decreasing sequences of fuzzy relations whose infima are, under some conditions, fuzzy simulations (respectively, bisimulations). We call such a sequence a depth-bounded fuzzy simulation (respectively, bisimulation), as the $n$th element from the sequence compares the behaviors of fuzzy automata, but only for words with a length bounded by~$n$. We further provide a logical characterization of the greatest depth-bounded fuzzy simulation or bisimulation between two fuzzy automata by proving that it satisfies the corresponding Hennessy-Milner property. Finally, we provide polynomial-time algorithms for computing the $n$th component of the greatest depth-bounded fuzzy simulation (respectively, bisimulation) between two finite fuzzy automata.
\end{abstract}

\begin{keyword}
fuzzy simulation \sep fuzzy bisimulation \sep fuzzy automata \sep residuated lattice
\end{keyword}

\end{frontmatter}

%===============================================================================

\section{Introduction}
\label{section:intro}

Bisimulation is a concept in computer science and mathematical logic used to define the notion of equivalence between systems. 
It is particularly useful in the study of automata, which are mathematical models that represent systems that change states under the influence of the input symbols. Fuzzy automata are a type of automata that allow for imprecise or uncertain transitions, initial and final states, making them useful in applications where precise transitions are difficult to obtain, such as decision-making systems, medical systems and natural language processing. 
In the context of fuzzy automata, bisimulations are used to compare two automata and determine whether they are equivalent. They have been studied both as crisp and fuzzy relations.

Approximate bisimulations are a generalization of bisimulations that allow for a certain degree of error or approximation instead of requiring an exact equivalence between states. In other words, when an approximate bisimulation relates two states, we can treat them as approximately equivalent if they are sufficiently similar, even if they are not identical.
The notion of approximate bisimulations is particularly useful in the context of fuzzy automata, where the states and transitions may be imprecise or uncertain. For example, in a fuzzy automaton that models a temperature control system, the states may be fuzzy sets that represent a range of possible temperature values. In this case, an exact equivalence between states may not be possible or desirable, as slight variations in temperature may not significantly impact the system's behavior.

Approximate bisimulations for fuzzy automata have been first defined as crisp relations~\cite{YL.18b, YL.18a, DBLP:journals/fss/YangL20}. Later, they were defined in~\cite{SMC.20,MNS.22} as fuzzy relations. However, the latter are defined only for fuzzy automata over complete Heyting algebras, and like \cite{YL.18b,YL.18a,DBLP:journals/fss/YangL20}, the greatest $\lambda$-approximate bisimulation between two fuzzy automata defined in~\cite{SMC.20,MNS.22} may not exist. These facts motivated the author in~\cite{FB4FA} to seek a more general definition of approximate bisimulation. More precisely, the work~\cite{FB4FA} introduces fuzzy simulations and fuzzy bisimulations between fuzzy automata over any complete residuated lattice. 
Moreover, the same work demonstrates that fuzzy simulations fuzzily preserve the fuzzy language recognized by a fuzzy automaton, while fuzzy bisimulations keep this fuzzy language fuzzily invariant. In addition, fuzzy simulations and fuzzy bisimulations have the Hennessy-Milner properties. Finally, the greatest fuzzy simulation and bisimulation always exist between two fuzzy automata.

However, fuzzy simulations and bisimulations still suffer from a serious shortcoming. Namely, observe the fuzzy automata $\mA$ and $\mAp$ depicted in Figure~\ref{fig: FSB}. Each of them has only one state ($u$ or~$u'$), which is both initial and final in the degree $1$, and only one transition from that state to itself under the influence of the unique symbol~$s$ from the alphabet. The degree of the transition in $\mA$ is equal to~$1$, whereas in $\mAp$ is equal to $1 - \varepsilon$, where $0 < \varepsilon < 1$ is a parameter. When $\varepsilon$ is very small (i.e., close to 0), we can treat these fuzzy automata as \enquote{almost equal}. Unfortunately, when the product or {\L}ukasiewicz t-norm is used, the greatest fuzzy (bi)simulation between~$\mA$ and~$\mAp$ is equal to the empty fuzzy relation. This unexpected phenomenon motivates us to seek a more refined notion than the one of a fuzzy (bi)simulation.

\begin{figure}[!hbt]
\begin{center}
\begin{tikzpicture}[->,>=stealth,shorten >=1pt,node distance=1.2cm,auto]
    \tikzset{every state/.style={inner sep=0.12cm,minimum size=0.8cm}}
    \tikzstyle{every node}=[font=\footnotesize]
    \node (A) {$\mA$};
    \node[state] (q) [below of=A, node distance = 2cm] {$u$};
    \path[->] (q) edge[loop above,in=45,out=135,looseness=5] node{$s/1$} (q);
    \node[below left of=q] (in) {};
    \node[below right of=q] (out) {};
    \path[->](in) edge[above, sloped] node{1} (q);
    \path[->](q) edge[above, sloped] node{1} (out);
    \node (Ap) [node distance=4cm, right of=A] {$\mAp$};
    \node[state] (qp) [below of=Ap, node distance = 2cm] {$u'$};
    \path[->] (qp) edge[loop above,in=45,out=135,looseness=5] node{$s/(1 - \varepsilon)$} (qp);
    \node[below left of=qp] (inp) {};
    \node[below right of=qp] (outp) {};
    \path[->](inp) edge[above, sloped] node{1} (qp);
    \path[->](qp) edge[above, sloped] node{1} (outp);
\end{tikzpicture}
\caption{An illustration of the fuzzy automata $\mA$ and $\mAp$ discussed in Section~\ref{section:intro}.\label{fig: FSB}}
\end{center}
\end{figure}

This work provides a generalization of fuzzy simulations and bisimulations. The proposed generalization is new and original, as it perceives a simulation (respectively, bisimulation) as a {\em decreasing sequence} of fuzzy relations satisfying some conditions. Under certain light assumptions, which we determine in this work, the infimum of such a sequence is a fuzzy simulation (respectively, bisimulation) from~\cite{FB4FA}. Thus, we can treat such a sequence as an \emph{approximation} of a fuzzy simulation (respectively, bisimulation). Moreover, the proposed approximations can measure the extent to which the fuzzy language recognized by a fuzzy automaton is fuzzily preserved by the simulation (respectively, fuzzily invariant under the bisimulation), when restricted to the words bounded by a specific length. Namely, the so-called norm of the $n$th fuzzy relation from the sequence is the infimum of the degrees in which the words with a length bounded by $n$ are fuzzily preserved by the simulation (respectively, fuzzily invariant under the bisimulation). This is why such a sequence is called a {\em depth-bounded} fuzzy simulation (respectively, bisimulation). 

According to our definitions given in this work, the greatest depth-bounded fuzzy (bi)simulation between the fuzzy automata $\mA$ and $\mAp$ discussed above and depicted in Figure~\ref{fig: FSB} is the sequence $(\varphi_n)_{n \in \NN}$ of fuzzy relations (between $\{u\}$ and $\{u'\}$) specified below for the most well-known structures of fuzzy values:
\begin{itemize}
\item when the G\"odel structure is used: $\varphi_0(u,u') = 1$ and $\varphi_n(u,u') = 1-\varepsilon$ for all $n \geq 1$;

\item when the {\L}ukasiewicz structure is used: $\varphi_n(u,u')$ is $1 - n\varepsilon$ if $n \leq 1/\varepsilon$, and 0 otherwise;

\item when the product structure is used: $\varphi_n(u,u') = (1 - \varepsilon)^n$.
\end{itemize}
When the {\L}ukasiewicz or product structure is used, the sequence $(\varphi_n)_{n \in \NN}$ converges to the empty fuzzy relation, implying that if $\varphi$ is the greatest fuzzy (bi)simulation between $\mA$ and $\mAp$, then $\varphi(u,u') = 0$. As stated before, this latter fact is an unexpected phenomenon, as $\varepsilon$ may be very small and just a noise, in which case we should treat the states $u$ and $u'$ as almost (bi)similar and expect $\varphi(u,u')$ to be close to~1. 
If we use $\varphi_n$ instead of $\varphi$, for a not too big~$n$, then $\varphi_n(u,u')$ is indeed very close to~1 when~$\varepsilon$ is very small. Furthermore, for every word $w$ (over the considered alphabet) of a length bounded by $n$, 
\[ \varphi_n(u,u') \leq (\bL(\mA)(w) \fequiv \bL(\mAp)(w)), \]
where $\bL(\mA)(w)$ and $\bL(\mAp)(w)$ are the degrees in which $\mA$ and $\mAp$ accept $w$, respectively, and~$\fequiv$ denotes the involved biresiduum. These facts motivate our intention to introduce depth-bounded fuzzy (bi)simulations as a generalization of fuzzy (bi)simulations. 

Our results are as follows: we relate depth-bounded fuzzy (bi)simulations to fuzzy (bi)simulations and prove the fuzzy preservation (respectively, invariance) of length-bounded fuzzy languages under depth-bounded fuzzy simulations (respectively, bisimulations). We also formulate and prove the Hennessy-Milner properties of depth-bounded fuzzy (bi)simulations, which are logical characterizations of the greatest depth-bounded fuzzy (bi)simulations.

Moreover, we provide algorithms that, given finite fuzzy automata $\mB$ and $\mB'$ together with a natural number $k$, compute the component $\varphi_k$ of the greatest depth-bounded fuzzy simulation (respectively, bisimulation) $(\varphi_i)_{i \in \NN}$ between $\mB$ and $\mB'$. These algorithms have a time complexity of order $O(k(m+n)n)$, where $n$ is the number of states and $m$ is the number of (nonzero) transitions in the input fuzzy automata. As this order is polynomial in $k$, $n$ and $m$, the algorithms are of particular importance in the context that the currently known algorithms (with the termination property) for computing the greatest fuzzy simulation or bisimulation between two finite fuzzy automata have an exponential time complexity when the {\L}ukasiewicz or product structure of fuzzy values is used~\cite{FuzzyMinimaxNets}. 

The rest of this work is structured as follows. 
Section~\ref{section: prel} contains preliminaries. 
In Section~\ref{section: fs-4-fa}, we define and study depth-bounded fuzzy simulations between fuzzy automata. Theorem~\ref{theorem: HDKAK} relates depth-bounded fuzzy simulations to fuzzy simulations. Theorem~\ref{theorem: JHFLS} states the fuzzy preservation of the fuzzy length-bounded languages recognized by a fuzzy automaton under depth-bounded fuzzy simulations. Theorem~\ref{theorem: HDJHA} states the Hennessy-Milner property of depth-bounded fuzzy simulations between fuzzy automata. 
In Section~\ref{section: fbs}, we define and study depth-bounded fuzzy bisimulations between fuzzy automata, giving Theorems~\ref{theorem: HDKAK 2}, \ref{theorem: HGFKW} and~\ref{theorem: HDJHA 2}, which are counterparts of Theorems~\ref{theorem: HDKAK}, \ref{theorem: JHFLS} and~\ref{theorem: HDJHA}, respectively.
In Section~\ref{section: computation}, we present our above-mentioned algorithms.
We discuss related work in Section~\ref{section: related work} and give concluding remarks in Section~\ref{section: cons}.

%===============================================================================

\section{Preliminaries}
\label{section: prel}

In this section, we recall basic definitions related to residuated lattices, fuzzy sets and relations, fuzzy automata, fuzzy simulations and bisimulations. Its contents come from~\cite{FB4FA,NguyenFSS2021} with some extensions.

\subsection{Residuated Lattices}

A {\em residuated lattice} \cite{Hajek1998,Belohlavek2002} is an algebra $\mL = \tuple{L$, $\leq$, $\fand$, $\fto$, $0$, $1}$ such that%\footnote{See also \url{https://en.wikipedia.org/wiki/Residuated_lattice}} 
\begin{itemize}
\item $\tuple{L, \leq, 0, 1}$ is a lattice with the smallest element 0 and the greatest element 1,
\item $\tuple{L, \fand, 1}$ is a commutative monoid with the unit 1, 
\item for every $x, y, z \in L$, the following adjunction property holds:
\begin{equation}
x \fand y \leq z \ \ \textrm{iff}\ \ x \leq (y \fto z). \label{fop: GDJSK 00} 
\end{equation}
\end{itemize}

%The expression \mbox{$y \fto z$} is called the {\em residual} of $z$ by $y$. 
Given a residuated lattice \mbox{$\mL = \tuple{L, \leq, \fand, \fto, 0, 1}$}, the {\em meet} is denoted by $\lor$ and the {\em join} by $\land$. Moreover, the {\em biresiduum} \mbox{$(x \fto y) \land (y \fto x)$} is denoted by \mbox{$x \fequiv y$}. The {\em supremum} and {\em infimum} of a non-empty set $X\subseteq L$, if they exist, are denoted by $\bigvee\!X$ and $\bigwedge\!X$, respectively. Similarly, for $X = \{x_i \mid i \in I\} \subseteq L$, we write $\bigvee_{i \in I} x_i$ and $\bigwedge_{i \in I} x_i$ to denote $\bigvee\!X$ and $\bigwedge\!X$, respectively, if they exist. We adopt the convention that $\land$ and $\fand$ bind stronger than $\lor$, which in turn binds stronger than~$\fequiv$ and~$\fto$. 

A residuated lattice $\mL = \tuple{L, \leq, \fand, \fto, 0, 1}$ is called a {\em Heyting algebra} if $\fand$ and $\land$ are the same. It is said to be {\em linear} (respectively, {\em complete}) if the lattice \mbox{$\tuple{L, \leq, 0, 1}$} is linear (respectively, complete). The operator $\fand$ is {\em continuous} (with respect to infima) if, for every $x \in L$ and $Y \subseteq L$, 
\begin{equation} \label{eq.ContTNorm}
x \fand {\textstyle\bigwedge} Y = \bigwedge_{y \in Y}\!(x \fand y). 
\end{equation}

%We will need the following lemma. 

\begin{lemma}\label{lemma: JHFJW}
	For a residuated lattice $\mL = \tuple{L, \leq, \fand, \fto, 0, 1}$, the following properties hold for each $x,y,z,x',y' \in L$:
	\begin{eqnarray}
	x \leq x' \textrm{ and } y \leq y' & \!\!\textrm{implies}\!\! & x \fand y \leq x' \fand y' \label{fop: GDJSK 10}\\
	x' \leq x \textrm{ and } y \leq y' & \!\!\textrm{implies}\!\! & (x \fto y) \leq (x' \fto y') \label{fop: GDJSK 20}\\
	x \leq y & \textrm{iff} & (x \fto y) = 1 \label{fop: GDJSK 30}
	\end{eqnarray}
	\begin{eqnarray}
	x \fand 0 & = & 0 \label{fop: GDJSK 40} \\
	% x \fand (y \lor z) & = & x \fand y \,\lor\, x \fand z \label{fop: GDJSK 50} \\
	x \fand (x \fto y) & \leq & y \label{fop: GDJSK 60} \\
	% x \fand (y \fto z) & \leq & (x \fto y) \fto z \label{fop: GDJSK 70} \\
	% x \fand (y \fequiv z) & \leq & (x \fto y) \fto z \label{fop: GDJSK 70a} \\
	% x \fand (y \fequiv z) & \leq & (x \fand z \fto y) \label{fop: GDJSK 80b} \\
	% x \fand (y \fequiv z) & \leq & y \,\fequiv\, x \fand z \label{fop: GDJSK 80} \\
	% x \fto (y \fto z) & = & y \fto (x \fto z) \label{fop: GDJSK 90} \\
	x \fto (y \fto z) & \leq & x \fand y \,\fto\, z \label{fop: GDJSK 100} \\
	x \fand (y \fto z) & \leq & y \,\fto\, x \fand z \label{fop: GDJSK 80a} \\
	% x \fto (y \fequiv z) & \leq & x \fand y \,\fto\, z \label{fop: GDJSK 110} \\
	% (x \fto y) \fand (y \fto z) & \leq & x \fto z \label{fop: GDJSK 115} \\
	% (x \fequiv y) \fand (y \fequiv z) & \leq & x \fequiv z \label{fop: GDJSK 115a} \\
	% (x \fequiv x') \land (y \fequiv y') & \leq & x \land y \,\fequiv\, x' \land y' \label{fop: GDJSK 120} \\
	% (x \fequiv x') \land (y \fequiv y') & \leq & x \lor y \,\fequiv\, x' \lor y' \label{fop: GDJSK 130} \\
	% x \fequiv y & \leq & (z \fto x) \fequiv (z \fto y) \label{fop: GDJSK 140} \\
	% x \fequiv y & \leq & (x \fto z) \fequiv (y \fto z) \label{fop: GDJSK 150} \\
	x \fequiv y & \leq & (z \fequiv x) \fequiv (z \fequiv y). \label{fop: GDJSK 150b}
	\end{eqnarray}
	If $\mL$ is complete, then the following properties hold for all $x,y \in L$ and $X,Y \subseteq L$:
	\begin{eqnarray}
	x \fand \textstyle\bigvee\!Y & = & \bigvee_{y \in Y} (x \fand y) \label{fop: GDJSK 155} \\
	%x \fto \sup Y & = & \sup \{x \fto y \mid y \in Y\} \label{fop: GDJSK 200} \\
	(\textstyle\bigvee\!X) \fto y & = & \bigwedge_{x \in X} (x \fto y) \label{fop: GDJSK 210} \\
    \bigvee_{y \in Y} (x \fto y) & \le &   x \fto \textstyle\bigvee\!Y \label{fop: GDJSK 230} \\
	x \fto \textstyle\bigwedge\!Y & = & \bigwedge_{y \in Y} (x \fto y). \label{fop: GDJSK 220}
	\end{eqnarray}
\end{lemma}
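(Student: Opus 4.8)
The plan is to derive every item from the single adjunction \eqref{fop: GDJSK 00} together with the commutativity and associativity of $\fand$ and the lattice order. I would first build a small toolkit, each piece coming straight from the adjunction. \textbf{(i)} \emph{Monotonicity} of $\fand$, item \eqref{fop: GDJSK 10}: from $y' \fand x \leq y' \fand x$ the adjunction gives $y' \leq (x \fto y' \fand x)$, so $y \leq y'$ yields $y \fand x \leq y' \fand x$, and commutativity handles the other argument. \textbf{(ii)} \emph{Modus ponens}, item \eqref{fop: GDJSK 60}: reading the adjunction from right to left on $(x \fto y) \leq (x \fto y)$ gives $(x \fto y) \fand x \leq y$. \textbf{(iii)} \emph{Transitivity} of the residuum, $(x \fto y) \fand (y \fto z) \leq (x \fto z)$: by the adjunction this amounts to $(x \fto y) \fand (y \fto z) \fand x \leq z$, which follows from two uses of modus ponens and monotonicity. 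Every remaining item then reduces, after one application of the adjunction, to an inequality settled by this toolkit.

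The remaining elementary items follow quickly. For \eqref{fop: GDJSK 40}, commutativity and $0$ being least give $0 \leq (x \fto z)$ for every $z$, hence $0 \fand x \leq z$ for every $z$ by the adjunction, so $x \fand 0 = 0$. For \eqref{fop: GDJSK 20}, the adjunction turns the goal into $(x \fto y) \fand x' \leq y'$, which holds since $(x \fto y) \fand x' \leq (x \fto y) \fand x \leq y \leq y'$ using $x' \leq x$, monotonicity, modus ponens and $y \leq y'$. Item \eqref{fop: GDJSK 30} is immediate from the adjunction applied with $1$, together with $1 \fand x = x$. Finally, \eqref{fop: GDJSK 100} and \eqref{fop: GDJSK 80a} each become, after one use of the adjunction, inequalities such as $(x \fto (y \fto z)) \fand x \fand y \leq z$ and $(x \fand (y \fto z)) \fand y \leq x \fand z$, both of which close by associativity and repeated modus ponens.

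The main obstacle is the biresiduum inequality \eqref{fop: GDJSK 150b}. Here I would expand each occurrence of $\fequiv$ by its definition $a \fequiv b = (a \fto b) \land (b \fto a)$ and exploit the symmetry $x \fequiv y = y \fequiv x$ to reduce the goal to the single inequality $x \fequiv y \leq ((z \fequiv x) \fto (z \fequiv y))$. By the adjunction this is equivalent to $(x \fequiv y) \fand (z \fequiv x) \leq (z \fequiv y)$. Since $z \fequiv y$ is the infimum of the two residua $z \fto y$ and $y \fto z$, it suffices to bound the left-hand side below each of them. Estimating $(x \fequiv y) \fand (z \fequiv x) \leq (x \fto y) \fand (z \fto x)$ and invoking transitivity of the residuum yields the bound $(z \fto y)$, while the symmetric estimate through $(y \fto x) \fand (x \fto z)$ yields $(y \fto z)$; together these give the claim. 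This is the step where the transitivity lemma is indispensable and where the bookkeeping is heaviest.

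For the completeness items I would use that the adjunction makes $x \fand (-)$ and $x \fto (-)$ an adjoint pair, so the former preserves arbitrary suprema and the latter arbitrary infima, while $(-) \fto y$ sends suprema to infima. Concretely, each identity splits into two inequalities: the monotone direction is item \eqref{fop: GDJSK 10} or \eqref{fop: GDJSK 20} applied termwise followed by taking the sup or inf, and the other direction is one application of the adjunction. For \eqref{fop: GDJSK 155}, writing $s = \bigvee_{y \in Y}(x \fand y)$, the inequality $x \fand \bigvee Y \leq s$ follows from $\bigvee Y \leq (x \fto s)$, which holds because each $y \in Y$ satisfies $y \leq (x \fto s)$ by the adjunction. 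Then \eqref{fop: GDJSK 210} follows from \eqref{fop: GDJSK 155} and modus ponens, \eqref{fop: GDJSK 230} is pure monotonicity via \eqref{fop: GDJSK 20}, and \eqref{fop: GDJSK 220} is obtained exactly as \eqref{fop: GDJSK 210} with infima in place of suprema. The only genuinely delicate argument throughout is the biresiduum inequality \eqref{fop: GDJSK 150b}; the rest is a disciplined but routine unwinding of the adjunction.
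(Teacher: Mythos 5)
Your proof is correct, but it is considerably more self-contained than the paper's. The paper proves almost nothing of this lemma directly: it cites the literature (e.g., the sources behind \eqref{fop: GDJSK 10}--\eqref{fop: GDJSK 210}), observes that \eqref{fop: GDJSK 230} follows from \eqref{fop: GDJSK 20}, and writes out only the proof of \eqref{fop: GDJSK 220} --- namely, one direction by \eqref{fop: GDJSK 20}, and the converse by noting $x \fand \bigwedge_{y' \in Y}(x \fto y') \leq y$ for each $y \in Y$ via \eqref{fop: GDJSK 10} and \eqref{fop: GDJSK 60}, then applying the adjunction \eqref{fop: GDJSK 00}. On those two items your argument coincides exactly with the paper's (your remark that \eqref{fop: GDJSK 220} mirrors \eqref{fop: GDJSK 210} is even slightly generous to yourself: the infimum case needs no distributivity analogue of \eqref{fop: GDJSK 155}, only the definition of infimum, so it is easier). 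Everything else --- the derivation of monotonicity, modus ponens, and residuum transitivity from the adjunction alone, the reduction of \eqref{fop: GDJSK 150b} by symmetry of $\fequiv$ to $(x \fequiv y) \fand (z \fequiv x) \leq (z \fequiv y)$ and its two bounds via transitivity, and the adjoint-pair treatment of \eqref{fop: GDJSK 155} and \eqref{fop: GDJSK 210} --- is sound and is essentially the standard residuation-theoretic development that the paper delegates to its references. What your route buys is a fully self-contained lemma requiring no external citation; what the paper's route buys is brevity, spending ink only on the one identity it considers worth displaying.
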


\begin{proof}
The proofs of~\eqref{fop: GDJSK 10}--\eqref{fop: GDJSK 210} can be found in other sources, e.g.~\cite{FB4FA,NguyenFSS2021}. 
The assertion~\eqref{fop: GDJSK 230} directly follows from~\eqref{fop: GDJSK 20}.
We present here a proof of~\eqref{fop: GDJSK 220} although such a proof can be found in the literature. 
The fact that the LHS of~\eqref{fop: GDJSK 220} is less than or equal to the RHS follows directly from~\eqref{fop: GDJSK 20}. To prove the converse, note that $x \fand \bigwedge_{y' \in Y} (x \fto y') \leq y$ follows from~\eqref{fop: GDJSK 10} and~\eqref{fop: GDJSK 60}, for every $x \in L$ and $y \in Y$. Therefore, $x \fand \bigwedge_{y' \in Y} (x \fto y') \leq \bigwedge\!Y$. By~\eqref{fop: GDJSK 00}, it follows that $\bigwedge_{y \in Y} (x \fto y) \leq (x \fto \bigwedge\!Y)$, which completes the proof.
\myend
\end{proof}

\begin{example}\label{exm.CRL}
The most studied residuated lattices are the unit interval $L=[0,1]$ with the usual order and the operator $\fand$ that is one of the t-norms specified below together with its corresponding residuum:
\begin{itemize}
    \item $x \fand y = \min\{x,y\}$ and $(x \fto y) = 
		\left\{
		\!\!\!\begin{array}{ll}
		1 & \!\textrm{if $x \leq y$} \\ 
		y & \!\textrm{otherwise}
		\end{array}\!\!\!
		\right.$ (the G\"odel structure);
    \item $x \fand y = \max\{0, x+y-1\}$ and $(x \fto y) = 
		\min\{1, 1 - x + y\}$ (the {\L}ukasiewicz structure);
    \item $x \fand y = x \cdot y$ and $(x \fto y) = 
		\left\{
		\!\!\!\begin{array}{ll}
		1 & \!\textrm{if $x \leq y$} \\ 
		y/x & \!\textrm{otherwise}
		\end{array}\!\!\!
		\right.$ (the product structure).
\end{itemize}
Note that the above defined structures are linear and complete, with the corresponding t-norms $\fand$ being continuous.
\myend
\end{example}

From now on, let $\mL = \tuple{L, \leq, \fand, \fto, 0, 1}$ be an arbitrary complete residuated lattice. 

\subsection{Fuzzy Sets}
A {\em fuzzy subset} of a non-empty set $A$, simply called a {\em fuzzy set}, is any function $f: A \to L$. The value $f(a)$, for $a \in A$, represents the fuzzy degree in which $a$ belongs to the fuzzy set. The {\em support} of $f: A \to L$ is the set $\{a \in A \mid f(a) > 0\}$. A fuzzy set is called {\em empty} and denoted by $\emptyset$ if its support is empty.

Given $\{a_1,\ldots,a_n\} \subseteq A$ and $\{v_1,\ldots,v_n\} \subseteq L$, by $\{a_1:v_1$, \ldots, $a_n:v_n\}$ we denote the fuzzy set \mbox{$f : A \to L$} such that $f(a_i) = v_i$ for $1 \leq i \leq n$ and $f(a) = 0$ for $a \in A \setminus \{a_1,\ldots,a_n\}$. 
Similarly, given $\{a_i \mid i \in I\} \subseteq A$ and $\{v_i \mid i \in I\} \subseteq L$, by $\{a_i:v_i \mid i \in I\}$ we denote the fuzzy set \mbox{$f : A \to L$} such that $f(a_i) = v_i$ for $i \in I$ and $f(a) = 0$ for $a \in A \setminus \{a_i \mid i \in I\}$. 

Given fuzzy sets $f, g: A \to L$, we say that $f$ is {\em greater than or equal to}~$g$, denoted by $f \geq g$ or $g \leq f$, if $g(a) \leq f(a)$ for all $a \in A$. We write $g < f$ to denote that $g \leq f$ and $g \neq f$.
The {\em fuzzy degree of that $g$ is a subset of $f$}, denoted by $S(g,f)$, is defined as follows:
\[ S(g,f) = \bigwedge_{a \in A} (g(a) \fto f(a)). \] 
The {\em fuzzy degree of that $g$ is equal to $f$}, denoted by $E(g,f)$, is defined as follows: 
\[ E(g,f) = \bigwedge_{a \in A} (g(a) \fequiv f(a)). \]

A {\em fuzzy relation} $\varphi$ between non-empty sets $A$ and $B$ is any fuzzy subset of the Cartesian product of $A$ and $B$. It is {\em image-finite} if, for every $a \in A$, the set $\{b \in B \mid \varphi(a,b) > 0\}$ is finite. 
A fuzzy relation between $A$ and itself is called a fuzzy relation on~$A$. By $id_A$ we denote the identity fuzzy relation on $A$, which is defined as $\{\tuple{a,a}\!:\!1 \mid a \in A\}$.

The {\em inverse} of a fuzzy relation $\varphi: A \times B \to L$ is the fuzzy relation \mbox{$\varphi^{-1} : B \times A \to L$} defined by $\varphi^{-1}(b,a) = \varphi(a,b)$, for all $a\in A$ and $b \in B$.

Given fuzzy relations \mbox{$\varphi: A \times B \to L$}, \mbox{$\psi: B \times C \to L$} and fuzzy sets $f: A \to L$, $g: B \to L$, we define the {\em compositions} \mbox{$\varphi \circ \psi : A \times C \to L$}, \mbox{$(f \circ \varphi) : B \to L$} and \mbox{$(\varphi \circ g) : A \to L$} as follows:
\begin{eqnarray*}
(\varphi \circ \psi)(a,c) & = & \bigvee_{b \in B} (\varphi(a,b) \fand \psi(b,c)) \\
(f \circ \varphi)(b) & = & \bigvee_{a \in A} (f(a) \fand \varphi(a,b)) \\
(\varphi \circ g)(a) & = & \bigvee_{b \in B} (\varphi(a,b) \fand g(b)),
\end{eqnarray*}
for all $a \in A$, $b \in B$ and $c \in C$. Note that the composition operator $\circ$ is associative, and we have $(\varphi \circ \psi)^{-1} = \psi^{-1} \circ \varphi^{-1}$.

A fuzzy relation $\varphi$ on $A$ is 
{\em reflexive} if $id_A \leq \varphi$, 
{\em symmetric} if $\varphi = \varphi^{-1}$, 
and {\em transitive} if $\varphi \circ \varphi \leq \varphi$. 
If a fuzzy relation is reflexive and transitive, then  it is called a {\em fuzzy preorder}. 
Also, if a fuzzy relation is reflexive, symmetric and transitive, then  it is called a  {\em fuzzy equivalence}.

Given a family $\Phi$ of fuzzy relations between $A$ and $B$, the fuzzy relations $\bigvee\!\Phi$ and $\bigwedge\!\Phi$ between $A$ and $B$ are specified by: 
\[
({\textstyle\bigvee}\Phi)(a,b) = \bigvee_{\varphi \in \Phi}\!\varphi(a,b), \qquad\qquad
({\textstyle\bigwedge}\Phi)(a,b) = \bigwedge_{\varphi \in \Phi}\!\varphi(a,b).
\]

Let $\Phi = (\varphi_n)_{n \in \NN}$ and $\Phi' = (\varphi'_n)_{n \in \NN}$ be sequences of fuzzy relations between $A$ and $B$, and $\Psi = (\psi_n)_{n \in \NN}$ a sequence of fuzzy relations between $B$ and $C$. We say that $\Phi$ is {\em less than or equal to} $\Phi'$, denoted by $\Phi \leq \Phi'$, if $\varphi_n \leq \varphi'_n$ for all $n \in \NN$. By $\Phi \circ \Psi$ we denote the sequence $(\varphi_n \circ \psi_n)_{n \in \NN}$ of fuzzy relations between $A$ and $C$, and by $\Phi^{-1}$ the sequence $(\varphi_n^{-1})_{n \in \NN}$ of fuzzy relations between $B$ and $A$. 
We treat a sequence $\Phi = (\varphi_n)_{n \in \NN}$ as the function that maps each natural number $n$ to $\varphi_n$. 
Given a family $\mathbf{\Phi}$ of sequences of fuzzy relations between $X$ and $Y$, by $\bigvee\!\mathbf{\Phi}$ we denote the sequence $\big(\bigvee\{\Phi(n) \mid \Phi \in \mathbf{\Phi}\}\big)_{n \in \NN}$.

%===============================================================================

\subsection{Fuzzy Automata}

A {\em fuzzy automaton} over an alphabet $\Sigma$ (and a complete residuated lattice $\mL$) is a structure $\mA = \tuple{A, \deltaA, \sigmaA, \tauA}$, where $A$ is a non-empty set of states, $\delta^\mA : A \times \Sigma \times A \to L$ is the fuzzy transition function, $\sigmaA : A \to L$ is the fuzzy set of initial states, and $\tauA : A \to L$ is the fuzzy set of terminal states. 
For $s \in \Sigma$, by $\deltaA_s$ we denote the fuzzy relation on $A$ specified by $\deltaA_s(x,y) = \deltaA(x,s,y)$. 
For $x \in A$, by $\mA_x$ we denote the fuzzy automaton that differs from $\mA$ only in that $\sigma^{\mA_x} = \{x:1\}$. 

A fuzzy automaton $\mA = \tuple{A, \deltaA, \sigmaA, \tauA}$ is said to be {\em image-finite} if, for every $s \in \Sigma$ and $x \in A$, the set $\{y \in A \mid \deltaA_s(x,y) > 0\}$ is finite and, furthermore, the support of $\sigmaA$ is also finite.

A fuzzy subset of $\Sigma^*$ is called a \emph{fuzzy language} over the alphabet $\Sigma$ (and $\mL$). The \emph{fuzzy language recognized by a fuzzy automaton} $\mA = \tuple{A,\deltaA,\sigmaA,\tauA}$ is the fuzzy language $\bL(\mA)$ over $\Sigma$ specified by:
\[ \bL(\mA)(s_1 s_2 \ldots s_n)=\sigmaA\circ\deltaA_{s_1}\circ\deltaA_{s_2}\circ\ldots\circ\deltaA_{s_n}\circ\tauA, \]
for $n \geq 0$ and $s_1,\ldots,s_n \in \Sigma$. 
By $\bLdb(\mA)$ we denote the restriction of $\bL(\mA)$ to words of length not greater than $n$. That is, for $w \in \Sigma^*$, $\bLdb(\mA)(w)$ is $\bL(\mA)(w)$ if $|w| \leq n$, and 0 otherwise. 
We call $\bLdb(\mA)$ the fuzzy language consisting of words recognized by $\mA$ with a length bounded by $n$ (or a fuzzy length-bounded language recognized by $\mA$, for short).

Given fuzzy automata $\mA = \tuple{A, \deltaA, \sigmaA, \tauA}$ and $\mAp = \tuple{A', \deltaAp, \sigmaAp, \tauAp}$ and a fuzzy relation \mbox{$\varphi: A \times A' \to L$}, we denote 
\begin{eqnarray}
\nZs & = & S(\sigma^\mA, \sigmaAp \circ \varphi^{-1}), \label{eq.FuzzySimNorm}\\
\nZbs & = & \nZs \land \normS{\varphi^{-1}}{\mAp}{\mA}. \label{eq.FuzzyBiSimNorm}
\end{eqnarray}

From now on, if not stated otherwise, let $\mA = \tuple{A, \deltaA, \sigmaA, \tauA}$ and $\mAp = \tuple{A', \deltaAp, \sigmaAp, \tauAp}$ be arbitrary fuzzy automata over an alphabet~$\Sigma$. 

\subsection{Fuzzy Simulations and Bisimulations between Fuzzy Automata}

A fuzzy relation \mbox{$\varphi: A \times A' \to L$} is called a {\em fuzzy simulation} between fuzzy automata~$\mA$ and~$\mAp$~\cite{FB4FA} if it satisfies the following conditions:
\begin{eqnarray}
\varphi^{-1} \circ \tauA & \leq & \tauAp \label{eq: HFHAJ 3} \\
\varphi^{-1} \circ \deltaA_s & \leq & \deltaAp_s \circ \varphi^{-1}\quad \textrm{for all $s \in \Sigma$.} \label{eq: HFHAJ 2}
\end{eqnarray}
A fuzzy simulation between $\mA$ and itself is called a~{\em fuzzy auto-simulation} of~$\mA$.  
The {\em norm} of a fuzzy simulation $\varphi$ between $\mA$ and $\mAp$ is defined to be $\nZs$, which is specified by~\eqref{eq.FuzzySimNorm}.

Moreover, a fuzzy relation \mbox{$\varphi: A \times A' \to L$} is called a~{\em fuzzy bisimulation} between fuzzy automata $\mA$ and $\mAp$ \cite{FB4FA} if it is a fuzzy simulation between $\mA$ and $\mAp$ and its inverse $\varphi^{-1}$ is a fuzzy simulation between~$\mAp$ and~$\mA$. In other words, we say that $\varphi$ is a fuzzy bisimulation between $\mA$ and $\mAp$ iff it satisfies the conditions \eqref{eq: HFHAJ 3}, \eqref{eq: HFHAJ 2} and the following ones:
\begin{eqnarray}
\varphi \circ \tauAp & \leq & \tauA \label{eq: HFHAJ 6} \\
\varphi \circ \deltaAp_s & \leq & \deltaA_s \circ \varphi\quad \textrm{ for all $s \in \Sigma$.} \label{eq: HFHAJ 5}
\end{eqnarray}
A fuzzy bisimulation between $\mA$ and itself is called a~{\em fuzzy auto-bisimulation} of~$\mA$. 
The {\em norm} of a fuzzy bisimulation $\varphi$ between $\mA$ and $\mAp$ is defined to be $\nZbs$, which is specified by~\eqref{eq.FuzzyBiSimNorm}.

\begin{example}\label{example: LJDNS 0}
Let the underlying lattice $\tuple{L, \leq, 0, 1}$ be the unit interval $[0,1]$ with the usual order. 
Consider the fuzzy automata $\mA = \tuple{A, \deltaA, \sigmaA, \tauA}$ and $\mAp = \tuple{A', \deltaAp, \sigmaAp, \tauAp}$ over the alphabet $\Sigma = \{s\}$ depicted in Figure~\ref{fig: HDBKA} (on page~\pageref{fig: HDBKA}) and specified below:
\begin{itemize}
\item $A = \{u,v\}$, $\sigmaA = \{u\!:\!1\}$, $\tauA = \{v\!:\!1\}$ and $\deltaA_s = \{\tuple{u,v}\!:\!0.4, \tuple{v,v}\!:\!0.5\}$,
\item $A' = \{u',v'\}$, $\sigmaAp = \{u'\!:\!1\}$, $\tauAp = \{v'\!:\!0.8\}$ and $\deltaAp_s = \{\tuple{u',v'}\!:\!0.5, \tuple{v',v'}\!:\!0.4\}$.
\end{itemize}
For the cases where $\fand$ is the G\"odel, {\L}ukasiewicz or product t-norm, we can verify that the fuzzy relation $\varphi : A \times A' \to L$ (respectively, $\psi : A \times A' \to L$) specified below is a fuzzy simulation (respectively, bisimulation) between $\mA$ and~$\mAp$. 

\begin{center}
    \begin{NiceTabular}{@{}p{6em}p{17em}p{17em}@{}}[]
        \toprule
        \textbf{T-norm ($\fand$)} & \textbf{Fuzzy simulation ($\varphi$)} & \textbf{Fuzzy bisimulation ($\psi$)} \\
        \midrule
        G\"odel & $\varphi = \{\tuple{u,u'}\!:\!1, \tuple{u,v'}\!:\!1, \tuple{v,v'}\!:\!0.4\}$ & $\psi = \{\tuple{u,u'}\!:\!0.4, \tuple{v,v'}\!:\!0.4\}$ \\
         & $\nZs = 1$ & $\normBS{\psi}{\mA}{\mAp} = 0.4$ \\
         \midrule
        {\L}ukasiewicz & $\varphi  =  \{\tuple{u,u'}\!:\!0.6, \tuple{u,v'}\!:\!0.6, \tuple{v,v'}\!:\!0.5\}$ & $\psi = \{\tuple{u,u'}\!:\!0.5, \tuple{u,v'}\!:\!0.2, \tuple{v,v'}\!:\!0.5\}$ \\
         & $\nZs = 0.6$ & $\normBS{\psi}{\mA}{\mAp} = 0.5$ \\
         \midrule
        product & $\varphi = \emptyset$ & $ \psi = \emptyset$ \\
         & $\nZs = 0$ & $\normBS{\psi}{\mA}{\mAp} = 0$ \\
        \bottomrule
    \end{NiceTabular}
\end{center}

By Examples~\ref{example: LJDNS 1} and~\ref{example: LJDNS 2} and Corollaries~\ref{cor: HGRKW} and~\ref{cor: HGRKW 2}, which are given in the further parts of this article, we can further verify that $\varphi$ (respectively, $\psi$) is the greatest such fuzzy simulation (respectively, bisimulation). 
\myend
\end{example}

%===============================================================================

\section{Depth-Bounded Fuzzy Simulations between Fuzzy Automata}
\label{section: fs-4-fa}

In this section, we formally define depth-bounded fuzzy simulations between fuzzy automata and study their properties. In particular, we investigate their relationship with fuzzy simulations, the fuzzy preservation of the fuzzy length-bounded languages recognized by a fuzzy automaton under depth-bounded fuzzy simulations, as well as the Hennessy-Milner property of such simulations.

A~{\em depth-bounded fuzzy simulation} between fuzzy automata $\mA$ and $\mAp$ is a sequence $(\varphi_n)_{n \in \NN}$ of fuzzy relations \mbox{$\varphi_n: A \times A' \to L$} such that:
\begin{eqnarray}
\varphi_n & \leq & \varphi_{n-1}\quad \textrm{for all $n \geq 1$} \label{eq: HFKJA 1} \\
\varphi_0^{-1} \circ \tauA & \leq & \tauAp \label{eq: HFKJA 2} \\
\varphi_n^{-1} \circ \deltaA_s & \leq & \deltaAp_s \circ \varphi_{n-1}^{-1}\quad \textrm{for all $s \in \Sigma$ and $n \geq 1$.} \label{eq: HFKJA 3}
\end{eqnarray}

The condition~\eqref{eq: HFKJA 1} states that the sequence $(\varphi_n)_{n \in \NN}$ is decreasing. We will use the word \enquote{decreasing} with this sense instead of \enquote{non-increasing} because the underlying residuated lattice is not required to be linear. 

The {\em norm} of a depth-bounded fuzzy simulation $\Phi = (\varphi_n)_{n \in \NN}$ between $\mA$ and $\mAp$ is defined to be: 
\begin{equation}
\nZdbs\ =\ \bigwedge_{n \in \NN} \normS{\varphi_n}{\mA}{\mAp}. \label{eq: JDJAKJ}
\end{equation}
In addition, a depth-bounded fuzzy simulation between $\mA$ and itself is called a~{\em depth-bounded fuzzy auto-simulation} of~$\mA$.

\begin{figure}
\begin{center}
\begin{tikzpicture}[->,>=stealth,auto]
    \tikzset{every state/.style={inner sep=0.12cm,minimum size=0.8cm}}
    \tikzstyle{every node}=[font=\footnotesize]
\node (A) {$\mA$};
\node[state] (u) [node distance=1.3cm, below of=A] {$u$};
\node[state] (v) [node distance=2.0cm, below of=u] {$v$};
\node[left of=u] (in) {};
\node[left of=v] (out) {};
\path[->](in) edge[above, sloped] node{1} (u);
\path[->](v) edge[above, sloped] node{1} (out);
\draw (u) to node[right]{0.4} (v);
\draw (v) edge[loop below,out=-120,in=-60,looseness=5] node{$0.5$} (v);
\node (Ap) [node distance=4cm, right of=A] {$\mAp$};
\node[state] (up) [node distance=1.3cm, below of=Ap] {$u'$};
\node[state] (vp) [node distance=2.0cm, below of=up] {$v'$};
\node[left of=up] (inp) {};
\node[left of=vp] (outp) {};
\path[->](inp) edge[above, sloped] node{1} (up);
\path[->](vp) edge[above, sloped] node{0.8} (outp);
\draw (up) to node[right]{0.5} (vp);
\draw (vp) edge[loop below,out=-120,in=-60,looseness=5] node{$0.4$} (vp);
\end{tikzpicture}
\caption{An illustration for Examples~\ref{example: LJDNS 1} and~\ref{example: LJDNS 2}.\label{fig: HDBKA}}
\end{center}
\end{figure}
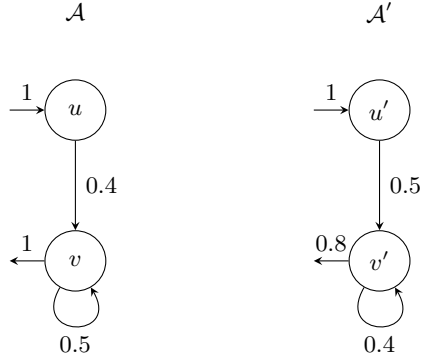

\begin{example}\label{example: LJDNS 1}
Let $\tuple{L, \leq}$, $\Sigma$, $\mA$ and $\mAp$ be as in Example~\ref{example: LJDNS 0}. Recall that the fuzzy automata $\mA$ and $\mAp$ are depicted in Figure~\ref{fig: HDBKA} (on page~\pageref{fig: HDBKA}). 
\begin{center}
    \begin{NiceTabular}{@{}p{6em}p{27em}p{7em}@{}}[]
        \toprule
        \textbf{T-norm} & \textbf{Depth-bounded fuzzy simulation} & \textbf{Norm}  \\
        \textbf{($\fand$)} & \textbf{($\Phi = (\varphi_n)_{n \in \NN}$)} & \textbf{($\nZdbs$)}  \\
        \midrule
        G\"odel & $\varphi_0 = \{\tuple{u,u'}\!:\!1, \tuple{u,v'}\!:\!1, \tuple{v,v'}\!:\!0.8\}$ & $\nZdbs = 1$ \\
                & $\varphi_n = \{\tuple{u,u'}\!:\!1, \tuple{u,v'}\!:\!1, \tuple{v,v'}\!:\!0.4\}$ for all $n \geq 1$ &  \\
         \midrule
        {\L}ukasiewicz & $\varphi_0 = \{\tuple{u,u'}\!:\!1, \tuple{u,v'}\!:\!1, \tuple{v,v'}\!:\!0.8\}$ & $\nZdbs = 0.6$ \\
	             & $\varphi_1 = \{\tuple{u,u'}\!:\!0.9, \tuple{u,v'}\!:\!0.8, \tuple{v,v'}\!:\!0.7\}$ &\\
	           & $\varphi_2 = \{\tuple{u,u'}\!:\!0.8, \tuple{u,v'}\!:\!0.7, \tuple{v,v'}\!:\!0.6\}$ &\\
	             & $\varphi_3 = \{\tuple{u,u'}\!:\!0.7, \tuple{u,v'}\!:\!0.6, \tuple{v,v'}\!:\!0.5\}$ &\\
	           & $\varphi_n = \{\tuple{u,u'}\!:\!0.6, \tuple{u,v'}\!:\!0.6, \tuple{v,v'}\!:\!0.5\}$ for all $n \geq 4$ &
 	 \\
         \midrule
        product & $\varphi_0 = \{\tuple{u,u'}\!:\!1, \tuple{u,v'}\!:\!1, \tuple{v,v'}\!:\!0.8\}$ & $\nZdbs = 0$  \\
	        & $\varphi_n = \{\tuple{u,u'}\!:\!0.8^{n-1}, \tuple{u,v'}\!:\!0.8^n, \tuple{v,v'}\!:\!0.8^{n+1}\}$ for all $n \geq 1$ & 
 	\\
        \bottomrule
    \end{NiceTabular}
\end{center}

For the cases where $\fand$ is the G\"odel, {\L}ukasiewicz or product t-norm, it can be checked that the sequence $\Phi = (\varphi_n)_{n \in \NN}$ specified above is the greatest depth-bounded fuzzy simulation between $\mA$ and~$\mAp$. 
\myend
\end{example}

%\subsection{Basic Properties}

\begin{proposition}\label{prop: JHFNW}
Let $\mA$, $\mAp$ and $\mAdp$ be fuzzy automata. Furthermore, let $\mathbf{\Phi}$ be a family of depth-bounded fuzzy simulations between $\mA$ and $\mAp$, and $\Psi$ a depth-bounded fuzzy simulation between $\mAp$ and $\mAdp$. Then, the following properties hold:
\begin{enumerate}[(a)]
\item If $\Phi_1, \Phi_2 \in \mathbf{\Phi}$ and $\Phi_1 \le \Phi_2$, then 
\[ \normS{\Phi_1}{\mA}{\mAp} \leq \normS{\Phi_2}{\mA}{\mAp}. \]
\item If $\Phi \in \mathbf{\Phi}$, then the composition $\Phi \circ \Psi$ is a depth-bounded fuzzy simulation between $\mA$ and $\mAdp$ such that 
\begin{eqnarray}\label{eq: HFJHV}
\normS{\Phi}{\mA}{\mAp} \fand \normS{\Psi}{\mAp}{\mAdp} \leq \normS{\Phi \circ \Psi}{\mA}{\mAdp}.
\end{eqnarray}
\item The meet $\bigvee\!\mathbf{\Phi}$ is also a depth-bounded fuzzy simulation between $\mA$ and $\mAp$. 
\item There always exists the greatest depth-bounded fuzzy simulation between $\mA$ and~$\mAp$.
\end{enumerate}
\end{proposition}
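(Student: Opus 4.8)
The plan is to prove the four claims in the order (a), (c), (d), (b), since (d) drops out of (c) and all the genuine difficulty is concentrated in (b). Part (a) is pure monotonicity. Expanding the norm of a sequence via \eqref{eq: JDJAKJ} and \eqref{eq.FuzzySimNorm} as $\bigwedge_{n} S(\sigmaA, \sigmaAp \circ \Phi_i(n)^{-1})$, the hypothesis $\Phi_1 \le \Phi_2$ gives $\Phi_1(n)^{-1} \le \Phi_2(n)^{-1}$, whence $\sigmaAp \circ \Phi_1(n)^{-1} \le \sigmaAp \circ \Phi_2(n)^{-1}$ by monotonicity of composition, and then $S(\sigmaA, \sigmaAp \circ \Phi_1(n)^{-1}) \le S(\sigmaA, \sigmaAp \circ \Phi_2(n)^{-1})$ because $S$ is monotone in its second argument (a direct consequence of \eqref{fop: GDJSK 20}). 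Infimizing over $n$ finishes (a).

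For (c), set $\Xi = \bigvee\mathbf{\Phi}$, so that $\Xi(n) = \bigvee\{\Phi(n) \mid \Phi \in \mathbf{\Phi}\}$, and verify \eqref{eq: HFKJA 1}--\eqref{eq: HFKJA 3} using two facts valid in any complete residuated lattice: the inverse commutes with suprema, and composition distributes over suprema on either side (the latter from \eqref{fop: GDJSK 155}). Decreasingness holds because $\Xi(n-1)$ dominates every $\Phi(n)$ (as $\Phi(n)\le\Phi(n-1)\le\Xi(n-1)$) and hence dominates their supremum $\Xi(n)$. For \eqref{eq: HFKJA 2} and \eqref{eq: HFKJA 3} I distribute the composition with $\tauA$ (respectively $\deltaA_s$) through the supremum, apply the respective condition for each $\Phi\in\mathbf{\Phi}$, and pull $\deltaAp_s$ back out. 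Part (d) is then immediate: the set of all depth-bounded fuzzy simulations between $\mA$ and $\mAp$ contains the constant sequence $(\emptyset)_{n\in\NN}$, so it is nonempty, and by (c) it is closed under $\bigvee$; therefore its own supremum is its greatest element.

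Part (b) splits into a structural claim and a numerical one. That $\Phi\circ\Psi = (\Phi(n)\circ\Psi(n))_{n\in\NN}$ satisfies \eqref{eq: HFKJA 1}--\eqref{eq: HFKJA 3} is routine: decreasingness from monotonicity of composition, and \eqref{eq: HFKJA 2}--\eqref{eq: HFKJA 3} from associativity of composition, the identity $(\varphi\circ\psi)^{-1}=\psi^{-1}\circ\varphi^{-1}$, and two successive applications of the defining conditions for $\Psi$ and for $\Phi$. For the norm inequality \eqref{eq: HFJHV}, I reduce to the per-index estimate
\[ \normS{\Phi(n)}{\mA}{\mAp} \fand \normS{\Psi(n)}{\mAp}{\mAdp} \leq \normS{\Phi(n)\circ\Psi(n)}{\mA}{\mAdp}, \]
and then take infima: for each fixed $m$, \eqref{fop: GDJSK 10} gives $\big(\bigwedge_n \normS{\Phi(n)}{\mA}{\mAp}\big)\fand\big(\bigwedge_n\normS{\Psi(n)}{\mAp}{\mAdp}\big) \le \normS{\Phi(m)}{\mA}{\mAp}\fand\normS{\Psi(m)}{\mAp}{\mAdp} \le \normS{\Phi(m)\circ\Psi(m)}{\mA}{\mAdp}$, and infimizing over $m$ yields \eqref{eq: HFJHV} after expanding the norms via \eqref{eq: JDJAKJ}. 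Note that only the favorable direction of distributing $\fand$ over $\bigwedge$ is used, so continuity of $\fand$ is not required here.

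The crux, and the step I expect to be the main obstacle, is the per-index estimate. Setting $h = \sigmaAdp\circ\Psi(n)^{-1}$ and using $(\Phi(n)\circ\Psi(n))^{-1}=\Psi(n)^{-1}\circ\Phi(n)^{-1}$, it becomes the abstract inequality $S(f, g\circ\varphi^{-1})\fand S(g,h)\le S(f, h\circ\varphi^{-1})$ for fuzzy sets $f,g,h$ and a fuzzy relation $\varphi$. I would prove this in two moves. First, a monotonicity-under-composition lemma $S(g,h)\le S(g\circ\varphi^{-1}, h\circ\varphi^{-1})$, established by checking $S(g,h)\fand(g\circ\varphi^{-1})(x)\le(h\circ\varphi^{-1})(x)$ pointwise: distribute $\fand$ over the join defining $(g\circ\varphi^{-1})(x)$ via \eqref{fop: GDJSK 155}, bound each disjunct using $S(g,h)\fand g(y)\le h(y)$ (from the definition of $S$ and \eqref{fop: GDJSK 60}), and conclude by the adjunction \eqref{fop: GDJSK 00}. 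Second, the transitivity of $S$, namely $S(f,p)\fand S(p,q)\le S(f,q)$, which is standard and itself derivable from \eqref{fop: GDJSK 60} and \eqref{fop: GDJSK 00}. Chaining the two with $p=g\circ\varphi^{-1}$ and $q=h\circ\varphi^{-1}$ gives the estimate. Since the norm used here coincides with the fuzzy-simulation norm of \cite{FB4FA} and its proof never invokes the simulation conditions, this per-index estimate may alternatively be quoted from there rather than reproved.
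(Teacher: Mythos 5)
Your proposal is correct and follows essentially the same route as the paper's proof: part (a) by monotonicity, part (c) by distributing inverses and compositions over suprema, part (d) as an immediate consequence of (c), and part (b) by checking the three simulation conditions via associativity and $(\varphi\circ\psi)^{-1}=\psi^{-1}\circ\varphi^{-1}$, then reducing the norm inequality to the per-index estimate via~\eqref{fop: GDJSK 10}. The only difference is cosmetic: you package the per-index estimate as two abstract lemmas about $S$ (monotonicity under composition and transitivity), whereas the paper carries out the same residuation steps --- \eqref{fop: GDJSK 80a}, \eqref{fop: GDJSK 20}, \eqref{fop: GDJSK 155}, \eqref{fop: GDJSK 00} --- as an inline chain of \emph{it-suffices-to-show} reductions.
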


\begin{proof}
\begin{enumerate}[(a)]
\item It follows directly from~\eqref{fop: GDJSK 10} and~\eqref{fop: GDJSK 20}.
	
\item Let $\Phi = (\varphi_n)_{n \in \NN} \in \mathbf{\Phi}$ and $\Psi = (\psi_n)_{n \in \NN}$. Then, for every $s \in \Sigma$ and $n \geq 1$, we have:
\[
\begin{array}{rclrcl}
\varphi_n & \leq & \varphi_{n-1} &
\psi_n & \leq & \psi_{n-1} \\
\varphi_0^{-1} \circ \tauA & \leq & \tauAp & 
\psi_0^{-1} \circ \tauAp & \leq & \tauAdp \\ 
\varphi_n^{-1} \circ \deltaA_s & \leq & \deltaAp_s \circ \varphi_{n-1}^{-1}\qquad & 
\psi_n^{-1} \circ \deltaAp_s & \leq & \deltaAdp_s \circ \psi_{n-1}^{-1}.
\end{array}
\]
As  $\fand$ is monotonic and $\circ$ is associative, it follows that, for every $s \in \Sigma$ and $n \geq 1$, 
%Since $\circ$ is associative and $\fand$ is monotonic (as shown by~\eqref{fop: GDJSK 10}), for every $s \in \Sigma$ and $n \geq 1$, 
\begin{eqnarray*}
& \varphi_n \circ \psi_n \leq \varphi_{n-1} \circ \psi_{n-1}, \\ 
& (\varphi_0 \circ \psi_0)^{-1} \circ \tauA  
= \psi_0^{-1} \circ \varphi_0^{-1} \circ \tauA 
\leq \psi_0^{-1} \circ \tauAp  
\leq \tauAdp, \\
& (\varphi_n \circ \psi_n)^{-1} \circ \deltaA_s 
= \psi_n^{-1} \circ \varphi_n^{-1} \circ \deltaA_s
\leq \psi_n^{-1} \circ \deltaAp_s \circ \varphi_{n-1}^{-1}\ \leq \\ 
& \leq\ \deltaAdp_s \circ \psi_{n-1}^{-1} \circ \varphi_{n-1}^{-1} 
= \deltaAdp_s \circ (\varphi_{n-1} \circ \psi_{n-1})^{-1}.
\end{eqnarray*}
Therefore, $\Phi \circ \Psi = (\varphi_n \circ \psi_n)_{n \in \NN}$ is a depth-bounded fuzzy simulation between $\mA$ and $\mAdp$. 
The inequality~\eqref{eq: HFJHV} means
\[ 
\left(\bigwedge_{n \in \NN} \normS{\varphi_n}{\mA}{\mAp}\right) \fand 
	\bigwedge_{n \in \NN} \normS{\psi_n}{\mAp}{\mAdp} \leq
	\bigwedge_{n \in \NN} \normS{\varphi_n \circ \psi_n}{\mA}{\mAdp}.
\]
To prove this, by~\eqref{fop: GDJSK 10}, it suffices to show that, for every $n \in \NN$,  
\[ \normS{\varphi_n}{\mA}{\mAp} \fand 
	\normS{\psi_n}{\mAp}{\mAdp} \leq
	\normS{\varphi_n \circ \psi_n}{\mA}{\mAdp}.
\]
Similarly, it suffices to show that, for every $n \in \NN$ and $x \in A$,  
\[ (\sigma^\mA(x) \fto (\sigmaAp \circ \varphi_n^{-1})(x)) \fand 
	\normS{\psi_n}{\mAp}{\mAdp} \leq
	(\sigma^\mA(x) \fto (\sigmaAdp \circ (\varphi_n \circ \psi_n)^{-1})(x)).
\]
By \eqref{fop: GDJSK 80a} and~\eqref{fop: GDJSK 20}, it suffices to show that, for every $n \in \NN$ and $x \in A$, 
\[ (\sigmaAp \circ \varphi_n^{-1})(x) \fand \normS{\psi_n}{\mAp}{\mAdp} \leq (\sigmaAdp \circ \psi_n^{-1} \circ \varphi_n^{-1})(x). \]
By \eqref{fop: GDJSK 155} and~\eqref{fop: GDJSK 10}, it suffices to show that, for every $n \in \NN$ and $x' \in A'$, 
\[ \sigmaAp(x') \fand \normS{\psi_n}{\mAp}{\mAdp} \leq (\sigmaAdp \circ \psi_n^{-1})(x'). \]
This holds due to~\eqref{fop: GDJSK 00} and the definition of $\normS{\psi_n}{\mAp}{\mAdp}$. 
\item Since every $\Phi \in \mathbf{\Phi}$ is a decreasing sequence, for every $n \geq 1$, 
\[ \textstyle\bigvee\{\Phi(n) \mid \Phi \in \mathbf{\Phi}\} \leq \textstyle\bigvee\{\Phi(n-1) \mid \Phi \in \mathbf{\Phi}\}. \]
By using~\eqref{fop: GDJSK 155} and the assumption about $\mathbf{\Phi}$, we have 
\[
	\big(\textstyle\bigvee\{\Phi(0) \mid \Phi \in \mathbf{\Phi}\}\big)^{-1} \circ \tauA 
	= (\textstyle\bigvee\{\Phi^{-1}(0) \mid \Phi \in \mathbf{\Phi}\}) \circ \tauA 
	= \textstyle\bigvee\{\Phi^{-1}(0) \circ \tauA \mid \Phi \in \mathbf{\Phi}\}
	\leq \tauAp. 
\]
Similarly, we also have that, for every $n \geq 1$ and $s \in \Sigma$, 
\begin{eqnarray*}
\big(\textstyle\bigvee\{\Phi(n) \mid \Phi \in \mathbf{\Phi}\}\big)^{-1} \circ \deltaA_s 
& = & (\textstyle\bigvee\{\Phi^{-1}(n) \mid \Phi \in \mathbf{\Phi}\}) \circ \deltaA_s \\
& = & \textstyle\bigvee\{\Phi^{-1}(n) \circ \deltaA_s \mid \Phi \in \mathbf{\Phi}\} \\
& \leq & \textstyle\bigvee\{\deltaAp_s \circ \Phi^{-1}(n-1) \mid \Phi \in \mathbf{\Phi}\} \\
& = & \deltaAp_s \circ \textstyle\bigvee\{\Phi^{-1}(n-1) \mid \Phi \in \mathbf{\Phi}\} \\
& = & \deltaAp_s \circ \big(\textstyle\bigvee\{\Phi(n-1) \mid \Phi \in \mathbf{\Phi}\}\big)^{-1}. 
\end{eqnarray*}
Therefore, $\bigvee\!\mathbf{\Phi}$ is a depth-bounded fuzzy simulation between $\mA$ and $\mAp$. 
\item It follows directly from (c).
\myend
\end{enumerate}
\end{proof}

In what follows, we determine the conditions under which we can relate depth-bounded fuzzy simulations with fuzzy simulations. To that effort, let us say that a lattice $\tuple{L, \le}$ satisfies the {\em join-meet distributivity law} if for any $B \subseteq L$ and $a \in L$ we have
\begin{equation} \label{eq.JMDL}
    a \vee \left( \bigwedge B \right) = \bigwedge_{b \in B} (a \vee b).
\end{equation}

Note that if $\cal L$ is linear, then it satisfies the join-meet distributivity law. 

The following lemma has been proved in~\cite[Lemma~4.2.]{jcss/CIRIC2010}.

\begin{lemma}\label{lem.NonIncSeqAB}
Suppose that $\mL$ satisfies the join-meet distributivity law. Then, for all decreasing sequences $(a_k)_{k \in \NN} \subseteq L$ and $(b_k)_{k \in \NN} \subseteq L$, the following is satisfied:
\begin{equation}
    \bigwedge_{k \in \NN} (a_k \vee b_k) = \left( \bigwedge_{k \in \NN} a_k \right) \vee \left(\bigwedge_{k \in \NN} b_k \right).
\end{equation}
\end{lemma}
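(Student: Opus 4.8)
The plan is to prove the two inequalities separately, noting that one of them holds in every complete lattice while the other is exactly where both hypotheses --- the join-meet distributivity law \eqref{eq.JMDL} and the decreasing character of the two sequences --- come into play.

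First I would dispatch the direction $\left(\bigwedge_{k \in \NN} a_k\right) \vee \left(\bigwedge_{k \in \NN} b_k\right) \leq \bigwedge_{k \in \NN}(a_k \vee b_k)$. Since $\bigwedge_{k} a_k \leq a_k$ and $\bigwedge_{k} b_k \leq b_k$ for every $k$, monotonicity of the lattice operations gives $\left(\bigwedge_k a_k\right) \vee \left(\bigwedge_k b_k\right) \leq a_k \vee b_k$ for each $k$; as the left-hand side is then a lower bound of all the terms $a_k \vee b_k$, taking the infimum over $k$ on the right yields the claim. This step uses neither \eqref{eq.JMDL} nor the hypothesis that the sequences are decreasing.

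For the converse inequality I would first use \eqref{eq.JMDL} (applied twice, together with commutativity of $\vee$ and completeness of $\mL$, so that all the infima involved exist) to unfold the right-hand side into a single infimum over all pairs of indices:
\[
\left(\bigwedge_{j \in \NN} a_j\right) \vee \left(\bigwedge_{l \in \NN} b_l\right)
= \bigwedge_{j \in \NN}\left(a_j \vee \bigwedge_{l \in \NN} b_l\right)
= \bigwedge_{j \in \NN}\bigwedge_{l \in \NN}(a_j \vee b_l)
= \bigwedge_{(j,l) \in \NN \times \NN}(a_j \vee b_l).
\]
It then suffices to show $\bigwedge_{k \in \NN}(a_k \vee b_k) \leq a_j \vee b_l$ for every pair $(j,l)$. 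Here I would invoke the decreasing hypothesis: setting $m = \max\{j,l\}$ gives $a_m \leq a_j$ and $b_m \leq b_l$, so by monotonicity $a_m \vee b_m \leq a_j \vee b_l$, and since the infimum $\bigwedge_k(a_k \vee b_k)$ lies below its diagonal term $a_m \vee b_m$, the desired bound follows. Taking the infimum over all pairs $(j,l)$ and combining with the first inequality gives the equality.

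The main obstacle is the converse inequality, and specifically the interplay of the two hypotheses there: \eqref{eq.JMDL} is what allows the binary join of the two infima to be rewritten as the meet of all the pairwise joins $a_j \vee b_l$, while the decreasing hypothesis is what lets each off-diagonal term $a_j \vee b_l$ be dominated by a diagonal term $a_m \vee b_m$, and hence by $\bigwedge_k(a_k \vee b_k)$. Dropping either assumption breaks the argument: without distributivity one cannot reduce to the pairwise joins, and without monotone decrease the off-diagonal terms need not be bounded below by the diagonal infimum, so the identity fails in general.
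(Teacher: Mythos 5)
Your proof is correct. Both directions are sound: the inequality $\left(\bigwedge_k a_k\right) \vee \left(\bigwedge_k b_k\right) \leq \bigwedge_k (a_k \vee b_k)$ indeed holds in any complete lattice by monotonicity, and your converse direction correctly isolates where each hypothesis is used --- the join-meet distributivity law (applied twice, once for each infimum, which is a legitimate instantiation of \eqref{eq.JMDL} with $B$ taken to be the set of sequence terms) rewrites the right-hand side as $\bigwedge_{(j,l)} (a_j \vee b_l)$, and the decreasing hypothesis lets the diagonal term at $m = \max\{j,l\}$ dominate every off-diagonal term, since $a_m \leq a_j$ and $b_m \leq b_l$ under the paper's convention that ``decreasing'' means $a_{k+1} \leq a_k$.

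Note, however, that the paper itself does not prove this lemma: it is quoted from the literature (Lemma 4.2 of the cited work of \'Ciri\'c et al.), so there is no in-paper proof to compare against. Your argument is the natural self-contained one (and, as far as the standard treatment goes, essentially the argument used in that reference): reduce the binary join of infima to an infimum of pairwise joins via distributivity, then collapse the double index to the diagonal via monotonicity of the sequences. This makes the lemma's dependence on its two hypotheses transparent, which is a virtue the bare citation in the paper does not provide.
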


Using the above lemma, we can derive the following one.

\begin{lemma}\label{lem.CompFuzzyRelatMJDL}
Suppose that $\mL$ satisfies the join-meet distributivity law and $\fand$ is continuous. Let $\varphi$ be an image-finite fuzzy relation between $X$ and $Y$, and $(\psi_k)_{k \in \NN}$ a decreasing sequence of fuzzy relations between $Y$ and $Z$. Then:
\begin{equation}\label{eq.CompFuzzyRelatDist}
    \varphi \circ \left( \bigwedge_{k \in \NN} \psi_k \right) = \bigwedge_{k \in \NN} (\varphi \circ \psi_k).
\end{equation}
\end{lemma}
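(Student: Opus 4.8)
The plan is to prove the identity pointwise and reduce it to an interchange of a supremum and an infimum that is then settled by Lemma~\ref{lem.NonIncSeqAB}. Fix $x \in X$ and $z \in Z$. Unfolding the definition of composition and of the infimum of a family of fuzzy relations, the value of the left-hand side at $(x,z)$ is $\bigvee_{y \in Y}\big(\varphi(x,y) \fand \bigwedge_{k \in \NN}\psi_k(y,z)\big)$, while the value of the right-hand side is $\bigwedge_{k \in \NN}\bigvee_{y \in Y}\big(\varphi(x,y) \fand \psi_k(y,z)\big)$. First I would use the continuity of $\fand$, i.e.\ property~\eqref{eq.ContTNorm}, to pull the infimum over $k$ through the inner conjunction, rewriting the left-hand side as $\bigvee_{y \in Y}\bigwedge_{k \in \NN}\big(\varphi(x,y) \fand \psi_k(y,z)\big)$. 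Writing $c_{y,k} = \varphi(x,y) \fand \psi_k(y,z)$, the claim then reduces to the commutation $\bigvee_{y \in Y}\bigwedge_{k \in \NN} c_{y,k} = \bigwedge_{k \in \NN}\bigvee_{y \in Y} c_{y,k}$.

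The inequality $\leq$ here holds in any complete lattice (it is the standard $\sup\inf \le \inf\sup$), so the work lies entirely in the reverse inequality. To obtain it I would first exploit image-finiteness of $\varphi$: the set $Y_x = \{y \in Y \mid \varphi(x,y) > 0\}$ is finite, and for $y \notin Y_x$ we have $c_{y,k} = 0$ by~\eqref{fop: GDJSK 40} and commutativity of $\fand$. Hence each supremum over $y$ may be restricted to the finite set $Y_x$ without changing its value, the degenerate case $Y_x = \emptyset$ giving $0$ on both sides. Next I would record two monotonicity facts: since $(\psi_k)_{k \in \NN}$ is decreasing and $\fand$ is monotone by~\eqref{fop: GDJSK 10}, each sequence $(c_{y,k})_{k \in \NN}$ is decreasing in $k$; and a finite join of decreasing sequences is again decreasing.

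With these preparations the commutation is established by induction on $|Y_x|$, with Lemma~\ref{lem.NonIncSeqAB} supplying the two-sequence case. Writing $Y_x = \{y_1,\dots,y_m\}$ and $b_k = \bigvee_{i=1}^{m-1} c_{y_i,k}$, the sequence $(b_k)_{k \in \NN}$ is decreasing, so Lemma~\ref{lem.NonIncSeqAB} applied to $(b_k)$ and $(c_{y_m,k})$ gives $\bigwedge_k (b_k \vee c_{y_m,k}) = (\bigwedge_k b_k)\vee(\bigwedge_k c_{y_m,k})$; combining this with the induction hypothesis $\bigwedge_k b_k = \bigvee_{i=1}^{m-1}\bigwedge_k c_{y_i,k}$ yields the desired identity for $m$ terms. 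The main obstacle is precisely this interchange of supremum and infimum, which fails in general complete residuated lattices: the argument turns on combining all three hypotheses, namely continuity of $\fand$ to handle the inner conjunction, image-finiteness of $\varphi$ to reduce the outer supremum to a finite join, and the join-meet distributivity law (through Lemma~\ref{lem.NonIncSeqAB}) together with the decreasing nature of $(\psi_k)_{k \in \NN}$ to commute a finite join with the infimum over $k$.
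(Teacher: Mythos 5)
Your proof is correct and takes essentially the same route as the paper's: both reduce the identity pointwise to an interchange of the supremum over the support of $\varphi$ (finite by image-finiteness, with the remaining terms vanishing) with the infimum over $k$, handled by Lemma~\ref{lem.NonIncSeqAB} via the join-meet distributivity law, combined with continuity of $\fand$ to pull the infimum through the conjunction. The only differences are cosmetic — you argue from the left-hand side to the right rather than the reverse, and you make explicit the finite induction extending Lemma~\ref{lem.NonIncSeqAB} from two decreasing sequences to $m$ of them, a step the paper's proof applies implicitly.
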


\begin{proof}
    Consider arbitrary $x \in X$ and $z \in Z$. By definitions, we have:
    \begin{equation}\label{eq: JHDJS}
    \left( \bigwedge_{k \in \NN} (\varphi \circ \psi_k) \right)(x, z)
    = \bigwedge_{k \in \NN} (\varphi \circ \psi_k) (x, z)
    = \bigwedge_{k \in \NN} \bigvee_{y \in Y} \varphi(x, y) \otimes \psi_k(y, z).
    \end{equation}
    By the assumption, for every $y \in Y$, $(\psi_k(y, z))_{k \in \NN}$ is a decreasing sequence, therefore $(\varphi(x, y) \otimes \psi_k(y, z))_{k \in \NN}$ is also a decreasing sequence. Since $\varphi$ is image-finite, the set $\{ y \in Y \mid \varphi(x, y) > 0 \}$ is finite. Thus, applying Lemma~\ref{lem.NonIncSeqAB} to the RHS of~\eqref{eq: JHDJS} and then using the fact that $\fand$ is continuous, we obtain:
    \[
    \left( \bigwedge_{k \in \NN} (\varphi \circ \psi_k) \right)(x, z)
    =  \bigvee_{y \in Y} \bigwedge_{k \in \NN} \varphi(x, y) \otimes \psi_k(y, z)
    =  \bigvee_{y \in Y} \varphi(x, y) \otimes \left( \bigwedge_{k \in \NN} \psi_k(y, z) \right).
    \]
    Continuing the above derivation, we obtain:
    \[
    \left( \bigwedge_{k \in \NN} (\varphi \circ \psi_k) \right)(x, z)
    =  \bigvee_{y \in Y} \varphi(x, y) \otimes \left( \bigwedge_{k \in \NN} \psi_k\right)(y, z)
    = \left( \varphi \circ \left( \bigwedge_{k \in \NN} \psi_k\right) \right)(x, z).
    \]
    As this is valid for every $x \in X$ and $z \in Z$, \eqref{eq.CompFuzzyRelatDist} follows and the proof is completed.
\myend
\end{proof}

The following lemma is similar to the above one and can be proved analogously. 

\begin{lemma}\label{lem.CompFuzzyRelatMJDL 2}
Suppose that $\mL$ satisfies the join-meet distributivity law and $\fand$ is continuous. Let $f$ be a fuzzy subset of $Y$ with a finite support and $(\psi_k)_{k \in \NN}$ a decreasing sequence of fuzzy relations between $Y$ and $Z$. Then:
\begin{equation}\label{eq.CompFuzzyRelatDist 2}
    f \circ \left( \bigwedge_{k \in \NN} \psi_k \right) = \bigwedge_{k \in \NN} (f \circ \psi_k).
\end{equation}
\end{lemma}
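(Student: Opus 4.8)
The plan is to follow the proof of Lemma~\ref{lem.CompFuzzyRelatMJDL} almost verbatim, with the only structural change that image-finiteness of the fuzzy relation $\varphi$ is now played by the finiteness of the support of~$f$. First I would fix an arbitrary $z \in Z$ and unfold both sides of~\eqref{eq.CompFuzzyRelatDist 2} pointwise at~$z$. By the definition of $\bigwedge$ and of the composition $f \circ \psi_k$, the right-hand side becomes
\[
\left( \bigwedge_{k \in \NN} (f \circ \psi_k) \right)(z)
= \bigwedge_{k \in \NN} \bigvee_{y \in Y} f(y) \otimes \psi_k(y,z),
\]
so that, for each fixed $z$, the whole claim reduces to interchanging the outer infimum over~$k$ with the inner supremum over~$y$.

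The key step is exactly this interchange. For every $y \in Y$, the sequence $(\psi_k(y,z))_{k \in \NN}$ is decreasing by hypothesis, hence $(f(y) \otimes \psi_k(y,z))_{k \in \NN}$ is also decreasing by the monotonicity of~$\otimes$ recorded in~\eqref{fop: GDJSK 10}. Because $f$ has finite support and $f(y) \otimes \psi_k(y,z) = 0$ whenever $f(y) = 0$ by~\eqref{fop: GDJSK 40}, the join $\bigvee_{y \in Y}$ effectively ranges over the finite set $\{y \in Y \mid f(y) > 0\}$. I would then apply Lemma~\ref{lem.NonIncSeqAB} — extended from two decreasing sequences to any finite family of them by a routine induction on the size of the support — to move the infimum over~$k$ past the finite join, obtaining $\bigvee_{y \in Y} \bigwedge_{k \in \NN} f(y) \otimes \psi_k(y,z)$.

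It remains to push the infimum over~$k$ inside $f(y) \otimes (-)$ using the continuity of~$\otimes$ from~\eqref{eq.ContTNorm}, and then to fold the result back into a composition:
\[
\bigvee_{y \in Y} f(y) \otimes \left( \bigwedge_{k \in \NN} \psi_k(y,z) \right)
= \bigvee_{y \in Y} f(y) \otimes \left( \bigwedge_{k \in \NN} \psi_k \right)(y,z)
= \left( f \circ \left( \bigwedge_{k \in \NN} \psi_k \right) \right)(z).
\]
Since $z \in Z$ was arbitrary, this yields~\eqref{eq.CompFuzzyRelatDist 2}.

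I expect the only delicate point to be the same one as in Lemma~\ref{lem.CompFuzzyRelatMJDL}, namely the legitimacy of swapping $\bigwedge_{k}$ with $\bigvee_{y}$: this swap fails for arbitrary joins and depends crucially on reducing the join to a finite one. Here that reduction is supplied by the finite-support hypothesis on~$f$ rather than by image-finiteness of a relation, but the remaining bookkeeping (the induction extending Lemma~\ref{lem.NonIncSeqAB} to finitely many sequences, together with the appeal to continuity) carries over unchanged, so I do not anticipate any genuinely new obstacle.
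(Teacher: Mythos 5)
Your proposal is correct and follows exactly the route the paper intends: the paper itself dispenses with this lemma by noting it ``can be proved analogously'' to Lemma~\ref{lem.CompFuzzyRelatMJDL}, and your adaptation --- letting the finite support of $f$ play the role that image-finiteness of $\varphi$ played in reducing $\bigvee_{y \in Y}$ to a finite join before invoking Lemma~\ref{lem.NonIncSeqAB} and continuity of $\fand$ --- is precisely that analogy, with the added care (the explicit appeal to~\eqref{fop: GDJSK 40} and the finite induction extending Lemma~\ref{lem.NonIncSeqAB}) that the paper leaves implicit.
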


The following theorem relates depth-bounded fuzzy simulations to fuzzy simulations. Roughly speaking, a depth-bounded fuzzy simulation is a decreasing sequence of fuzzy relations that, under certain light conditions, converges to a fuzzy simulation. 

\begin{theorem}\label{theorem: HDKAK}
Suppose that $\mL$ satisfies the join-meet distributivity law and $\fand$ is continuous. Let $\Phi = (\varphi_n)_{n \in \NN}$ be a depth-bounded fuzzy simulation between fuzzy automata $\mA$ and $\mAp$ such that $\mAp$ is image-finite. Then, the fuzzy relation $\varphi = \bigwedge\!\Phi$ is a fuzzy simulation between $\mA$ and $\mAp$ with $\nZs = \nZdbs$.
\end{theorem}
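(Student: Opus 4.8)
The plan is to verify that $\varphi = \bigwedge\Phi$ satisfies the two defining conditions \eqref{eq: HFHAJ 3} and \eqref{eq: HFHAJ 2} of a fuzzy simulation, and then to establish the norm equality $\nZs = \nZdbs$. Throughout I will use that $\varphi \leq \varphi_n$ for every $n$, that taking inverses and forming compositions are monotonic (via \eqref{fop: GDJSK 10}), and that $(\varphi_n^{-1})_{n \in \NN}$ is a decreasing sequence of fuzzy relations between $A'$ and $A$. Condition \eqref{eq: HFHAJ 3} is then immediate: since $\varphi \leq \varphi_0$ we have $\varphi^{-1} \leq \varphi_0^{-1}$, hence $\varphi^{-1} \circ \tauA \leq \varphi_0^{-1} \circ \tauA \leq \tauAp$ by monotonicity of composition and \eqref{eq: HFKJA 2}.

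For condition \eqref{eq: HFHAJ 2} — the heart of the argument — I would avoid pushing the infimum through the composition on the left, which would require $\mA$ (rather than $\mAp$) to be image-finite. Instead, fix $s \in \Sigma$ and observe that for every $k \geq 1$ monotonicity gives $\varphi^{-1} \circ \deltaA_s \leq \varphi_k^{-1} \circ \deltaA_s$, while the defining inequality \eqref{eq: HFKJA 3} gives $\varphi_k^{-1} \circ \deltaA_s \leq \deltaAp_s \circ \varphi_{k-1}^{-1}$. Composing these yields $\varphi^{-1} \circ \deltaA_s \leq \deltaAp_s \circ \varphi_j^{-1}$ for every $j \in \NN$, and hence $\varphi^{-1} \circ \deltaA_s \leq \bigwedge_{j \in \NN}(\deltaAp_s \circ \varphi_j^{-1})$. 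Now $\deltaAp_s$ is image-finite because $\mAp$ is, and $(\varphi_j^{-1})_{j \in \NN}$ is decreasing, so Lemma~\ref{lem.CompFuzzyRelatMJDL} applies with $\deltaAp_s$ on the left and gives $\bigwedge_{j}(\deltaAp_s \circ \varphi_j^{-1}) = \deltaAp_s \circ (\bigwedge_{j}\varphi_j^{-1}) = \deltaAp_s \circ \varphi^{-1}$. Combining the two displays establishes $\varphi^{-1} \circ \deltaA_s \leq \deltaAp_s \circ \varphi^{-1}$, as required.

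It then remains to prove $\nZs = \nZdbs$. Since the support of $\sigmaAp$ is finite (because $\mAp$ is image-finite) and $(\varphi_n^{-1})_{n}$ is decreasing, Lemma~\ref{lem.CompFuzzyRelatMJDL 2} gives $\sigmaAp \circ \varphi^{-1} = \bigwedge_{n}(\sigmaAp \circ \varphi_n^{-1})$. Expanding $\nZs = \bigwedge_{x \in A}(\sigmaA(x) \fto (\sigmaAp \circ \varphi^{-1})(x))$, using \eqref{fop: GDJSK 220} to move the residuum inside the infimum over $n$, and finally swapping the two infima, I obtain $\nZs = \bigwedge_{n \in \NN}\bigwedge_{x \in A}(\sigmaA(x) \fto (\sigmaAp \circ \varphi_n^{-1})(x)) = \bigwedge_{n}\normS{\varphi_n}{\mA}{\mAp} = \nZdbs$.

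The main obstacle is the verification of \eqref{eq: HFHAJ 2}, where one must commute an infimum with a composition — an operation that fails in general and is licensed only by the join-meet distributivity law together with the continuity of $\fand$, packaged in Lemma~\ref{lem.CompFuzzyRelatMJDL}. The subtle point is choosing the correct factor on which to apply the lemma: exploiting the index shift in \eqref{eq: HFKJA 3} together with monotonicity lets one land the infimum on the right of $\deltaAp_s$, so that only the image-finiteness of $\mAp$ (which is assumed) is needed, and never that of $\mA$.
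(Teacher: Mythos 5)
Your proof is correct and follows essentially the same route as the paper's: both verify \eqref{eq: HFHAJ 3} by monotonicity, establish $\varphi^{-1} \circ \deltaA_s \leq \deltaAp_s \circ \varphi_j^{-1}$ for all $j$ and then apply Lemma~\ref{lem.CompFuzzyRelatMJDL} (with $\deltaAp_s$ on the left, using image-finiteness of $\mAp$), and derive the norm equality via Lemma~\ref{lem.CompFuzzyRelatMJDL 2} together with \eqref{fop: GDJSK 220} and an exchange of infima. The only differences are cosmetic: you obtain the intermediate inequality by pointwise monotonicity rather than by first pushing the meet through the composition, and you compute the norm chain starting from $\nZs$ instead of $\nZdbs$.
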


\begin{proof}
    The fact that $\varphi$ satisfies~\eqref{eq: HFHAJ 3} immediately follows from \eqref{eq: HFKJA 1}, \eqref{eq: HFKJA 2} and \eqref{fop: GDJSK 10}. Let us prove that $\varphi$ also satisfies~\eqref{eq: HFHAJ 2}. First, note that, for every $s \in \Sigma$, we have:
    \[
        \varphi^{-1} \circ \deltaA_s = 
        \left( \bigwedge_{n \in \NN}  \varphi_{n} \right)^{-1} \circ \deltaA_s =
        \left( \bigwedge_{n \in \NN}  \varphi_{n}^{-1} \right) \circ \deltaA_s
        \le \bigwedge_{n \in \NN}  \left( \varphi_{n}^{-1} \circ \deltaA_s \right).
    \]
    Applying~\eqref{eq: HFKJA 3} we further obtain:
    \[
        \varphi^{-1} \circ \deltaA_s 
        \le \bigwedge_{n \geq 1}  \left( \deltaAp_s \circ \varphi_{n - 1}^{-1} \right).
    \]
    Because $\mAp$ is image-finite, applying Lemma~\ref{lem.CompFuzzyRelatMJDL}, we get:
    \[
        \varphi^{-1} \circ \deltaA_s 
        \le \bigwedge_{n \geq 1}  \left( \deltaAp_s \circ \varphi_{n - 1}^{-1} \right)
        = \deltaAp_s \circ \left( \bigwedge_{n \geq 1}  \varphi_{n - 1}^{-1} \right)
        = \deltaAp_s \circ \left( \bigwedge_{n \in \NN}  \varphi_{n}\right)^{-1} 
        =  \deltaAp_s \circ \varphi^{-1}.
    \]
    Thus, we have proved that $\varphi$ satisfies~\eqref{eq: HFHAJ 2}, and it is therefore a fuzzy simulation between~$\mA$ and~$\mAp$. It remains to show that $\nZs = \nZdbs$. Using~\eqref{fop: GDJSK 220}, we have
\begin{eqnarray*}
\nZdbs & = & \bigwedge_{n \in \NN} S(\sigma^\mA, \sigmaAp \circ \varphi_n^{-1}) \\ 
& = & \bigwedge_{n \in \NN} \bigwedge_{x \in A} (\sigma^\mA(x) \fto (\sigmaAp \circ \varphi_n^{-1})(x)) \\
& = & \bigwedge_{x \in A} \bigwedge_{n \in \NN} (\sigma^\mA(x) \fto (\sigmaAp \circ \varphi_n^{-1})(x)) \\
& = & \bigwedge_{x \in A} (\sigma^\mA(x) \fto \bigwedge_{n \in \NN} (\sigmaAp \circ \varphi_n^{-1})(x)).
\end{eqnarray*}
Note that the support of $\sigmaAp$ is finite because $\mAp$ is image-finite. Thus, we can apply Lemma~\ref{lem.CompFuzzyRelatMJDL 2} and continue the above derivation as follows:
\begin{eqnarray*}
\nZdbs & = &  \bigwedge_{x \in A} \left(\sigma^\mA(x) \fto \left(\sigmaAp \circ \left( \bigwedge_{n \in \NN} \varphi_n^{-1}\right) \right)(x)\right) \\
& = & \bigwedge_{x \in A} \left(\sigma^\mA(x) \fto \left(\sigmaAp \circ \varphi^{-1} \right)(x)\right) \\
& = & S( \sigmaA, \sigmaAp \circ \varphi^{-1})\\
& = & \nZs,
\end{eqnarray*}
which completes the proof.
\myend
\end{proof}

The following two examples show that the assumptions that $\fand$ is continuous and $\mAp$ is image-finite are necessary for Theorem~\ref{theorem: HDKAK}.

\begin{figure}
\begin{center}
\begin{tikzpicture}[->,>=stealth,auto]
\tikzset{every state/.style={inner sep=0.12cm,minimum size=0.8cm}}
\tikzstyle{every node}=[font=\footnotesize]
\node (A) {$\mA$};
\node[state] (u) [node distance=1.3cm, below of=A] {$u$};
\node[state] (v) [node distance=2.0cm, below of=u] {$v$};
    \node[left of=u] (in) {};
    \node[left of=v] (out) {};
    \path[->](in) edge[above, sloped] node{1} (u);
    \path[->](v) edge[above, sloped] node{1} (out);
\draw (u) to node[right]{1} (v);
\draw (v) edge[loop below,out=-120,in=-60,looseness=5] node{$1$} (v);
\node (Ap) [node distance=3cm, right of=A] {$\mAp$};
\node[state] (up) [node distance=1.3cm, below of=Ap] {$u'$};
\node[state] (vp0) [node distance=2.0cm, below of=up] {$v'_0$};
\node[state] (vp1) [node distance=2.5cm, right of=vp0] {$v'_1$};
\node[state] (vp2) [node distance=2.5cm, right of=vp1] {$v'_2$};
\node (vp3) [node distance=1.5cm, right of=vp2] {\ldots};
\node (avp2) [node distance=1.2cm, above of=vp2] {};
\node (lavp2) [node distance=1.2cm, left of=avp2] {\ldots};
\draw (up) to node[left]{1} (vp0);
\draw (up) to node[left, shift={(0,-2pt)}]{1} (vp1);
\draw (up) to node[right, shift={(-5pt,6pt)}]{1} (vp2);
\draw (vp1) to node[below, shift={(2pt,0)}]{1} (vp0);
\draw (vp2) to node[below, shift={(2pt,0)}]{1} (vp1);
    \node[left of=up] (inp) {};
    \node[below of=vp0] (outp0) {};
    \node[below of=vp1] (outp1) {};
    \node[below of=vp2] (outp2) {};
    \path[->](inp) edge[above, sloped] node{1} (up);
    \path[->](vp0) edge[right] node{1} (outp0);
    \path[->](vp1) edge[right] node{1} (outp1);
    \path[->](vp2) edge[right] node{1} (outp2);
\end{tikzpicture}
\caption{An illustration for Example~\ref{example: OFJAH 1}.\label{fig: GDJWJ}}
\end{center}
\end{figure}
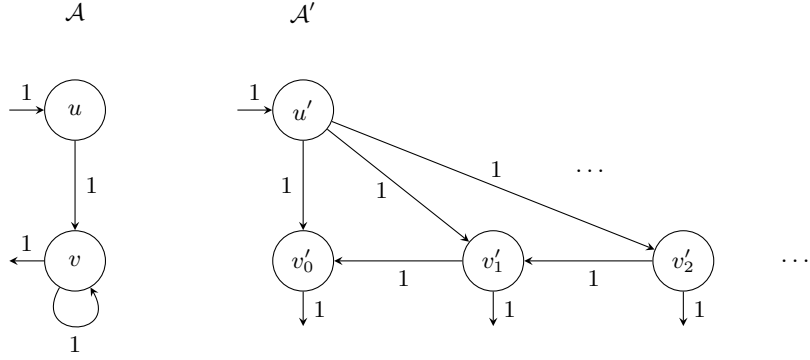

\begin{example}\label{example: OFJAH 1}
Consider the fuzzy automata $\mA = \tuple{A, \deltaA, \sigmaA, \tauA}$ and $\mAp = \tuple{A', \deltaAp, \sigmaAp, \tauAp}$ over the alphabet $\Sigma = \{s\}$ depicted in Figure~\ref{fig: GDJWJ} and specified below:
\begin{itemize}
\item $A = \{u,v\}$, $\sigmaA = \{u\!:\!1\}$, $\tauA = \{v\!:\!1\}$ and $\deltaA_s = \{\tuple{u,v}\!:\!1, \tuple{v,v}\!:\!1\}$,
\item $A' = \{u',v'_i \mid i \in \NN\}$, $\sigmaAp = \{u'\!:\!1\}$, $\tauAp = \{v'_i\!:\!1 \mid i \in \NN\}$ and\\ 
$\deltaAp_s = \{\tuple{u',v'_i}\!:\!1, \tuple{v'_{i+1},v'_i}\!:\!1 \mid i \in \NN\}$.
\end{itemize}
For $n \in \NN$, let $\varphi_n = \{\tuple{u,u'}\!:\!1, \tuple{u,v'_i}\!:\!1, \tuple{v,v'_i}\!:\!1 \mid i \geq n\}$ be a fuzzy relation between $A$ and $A'$. It is easy to check that $\Phi = (\varphi_n)_{n \in \NN}$ is the greatest depth-bounded fuzzy simulation between~$\mA$ and~$\mAp$. Let $\varphi = \bigwedge\!\Phi = \{\tuple{u,u'}\!:\!1\}$. It is easy to see that $\varphi$ does not satisfy the condition~\eqref{eq: HFHAJ 2} and is therefore not a fuzzy simulation between $\mA$ and $\mAp$. Theorem~\ref{theorem: HDKAK} is not applicable to this case because $\mAp$ is not image-finite.
\myend
\end{example}

\begin{example}\label{example: OFJAH 2}
Let $\mL = \tuple{L, \leq, \fand, \fto, 0, 1}$ be any linear complete residuated lattice such that $\fand$ is not continuous.\footnote{We do not have any concrete example of such a lattice, but we assume that such lattices exist, as the condition of being continuous with respect to infima does not seem to follow from the axioms of linear complete residuated lattices.} Thus, there exist $a \in L$ and an infinite decreasing chain $b_0 > b_1 > b_2 > \ldots$ of elements of $L$ such that 
\[
	a \fand \bigwedge_{n \in \NN}\! b_n \ \neq\ \bigwedge_{n \in \NN}\! (a \fand b_n), 
\]
which means
\[
a \fand \bigwedge_{n \in \NN}\! b_n \ <\ \bigwedge_{n \in \NN}\! (a \fand b_n).
\]
Let $b = \bigwedge_{n \in \NN} b_n$ and let $c$ be the value of the RHS of the above inequality. We have $a \fand b < c \leq a \fand b_n$ for all $n \in \NN$. 
Consider the fuzzy automata $\mA = \tuple{A, \deltaA, \sigmaA, \tauA}$ and $\mAp = \tuple{A', \deltaAp, \sigmaAp, \tauAp}$ over the alphabet $\Sigma = \{s\}$ with
\begin{itemize}
\item $A = \{u,v\}$, $\sigmaA = \{u\!:\!1\}$, $\tauA = \{v\!:\!1\}$ and $\deltaA_s = \{\tuple{u,v}\!:\!1\}$,
\item $A' = \{u',v'\}$, $\sigmaAp = \{u'\!:\!1\}$, $\tauAp = \{v'\!:\!1\}$ and $\deltaAp_s = \{\tuple{u',v'}\!:\!a\}$.
\end{itemize}
For $n \in \NN$, let $\varphi_n = \{\tuple{u,u'}\!:\!c, \tuple{v,v'}\!:\!b_n\}$ be a fuzzy relation between $A$ and $A'$. It is easy to check that $\Phi = (\varphi_n)_{n \in \NN}$ is a depth-bounded fuzzy simulation between~$\mA$ and~$\mAp$. Let $\varphi = \bigwedge\!\Phi = \{\tuple{u,u'}\!:\!c, \tuple{v,v'}\!:\!b\}$. It is easy to see that $\varphi$ does not satisfy the condition~\eqref{eq: HFHAJ 2} and is therefore not a fuzzy simulation between $\mA$ and $\mAp$. Theorem~\ref{theorem: HDKAK} is not applicable to this case because $\fand$ is not continuous.
\myend
\end{example}

The following corollary states that, under certain light conditions, the greatest depth-bounded fuzzy simulation between two fuzzy automata is a convergent sequence of fuzzy relations that approximate the greatest fuzzy simulation between the fuzzy automata.

\begin{corollary}\label{cor: HGRKW}
Let $\Phi = (\varphi_n)_{n \in \NN}$ be the greatest depth-bounded fuzzy simulation between fuzzy automata $\mA$ and $\mAp$. If $\mL$ satisfies the join-meet distributivity law, $\fand$ is continuous and $\mAp$ is image-finite, then $\bigwedge\!\Phi$ is the greatest fuzzy simulation between $\mA$ and $\mAp$.
\end{corollary}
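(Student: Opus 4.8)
The plan is to combine Theorem~\ref{theorem: HDKAK} with a short maximality argument, so that almost all of the analytic work is already done for us. First I would invoke Theorem~\ref{theorem: HDKAK}: since $\Phi = (\varphi_n)_{n \in \NN}$ is a depth-bounded fuzzy simulation between $\mA$ and $\mAp$, the standing hypotheses that $\mL$ satisfies the join-meet distributivity law, that $\fand$ is continuous, and that $\mAp$ is image-finite guarantee that $\varphi = \bigwedge\!\Phi$ is itself a fuzzy simulation between $\mA$ and $\mAp$. This settles one half of the claim; it remains only to show that $\varphi$ dominates every other fuzzy simulation between $\mA$ and $\mAp$.

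For the maximality, the key observation is that a single fuzzy simulation can be regarded as a constant depth-bounded fuzzy simulation. Concretely, let $\psi$ be an arbitrary fuzzy simulation between $\mA$ and $\mAp$, and consider the constant sequence $\Psi = (\psi)_{n \in \NN}$. I would verify the three defining conditions of a depth-bounded fuzzy simulation for $\Psi$: condition~\eqref{eq: HFKJA 1} holds trivially because the sequence is constant; condition~\eqref{eq: HFKJA 2} is exactly~\eqref{eq: HFHAJ 3} for $\psi$; and condition~\eqref{eq: HFKJA 3} is exactly~\eqref{eq: HFHAJ 2} for $\psi$, which holds for every $n \geq 1$ at once since~\eqref{eq: HFHAJ 2} carries no dependence on an index. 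Hence $\Psi$ is a depth-bounded fuzzy simulation between $\mA$ and $\mAp$.

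Since $\Phi$ is, by hypothesis, the greatest depth-bounded fuzzy simulation between $\mA$ and $\mAp$, I obtain $\Psi \leq \Phi$, that is, $\psi \leq \varphi_n$ for every $n \in \NN$. Taking the infimum over $n$ then yields $\psi \leq \bigwedge_{n \in \NN} \varphi_n = \bigwedge\!\Phi = \varphi$. As $\psi$ was an arbitrary fuzzy simulation and $\varphi$ is itself a fuzzy simulation by the first step, this shows that $\varphi = \bigwedge\!\Phi$ is the greatest fuzzy simulation between $\mA$ and $\mAp$, completing the argument.

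I do not anticipate a genuine obstacle here, precisely because the substantive content—the commutation of $\bigwedge$ with composition and with $S(\cdot,\cdot)$, which is where continuity of $\fand$ and image-finiteness of $\mAp$ are actually consumed—is fully encapsulated in Theorem~\ref{theorem: HDKAK}. The only point that warrants a moment of care is the embedding of a fuzzy simulation as the constant depth-bounded fuzzy simulation $(\psi)_{n \in \NN}$; once this is in place, the maximality of $\Phi$ together with the elementary passage to the infimum does the rest.
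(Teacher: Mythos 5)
Your proof is correct and follows essentially the same route as the paper: apply Theorem~\ref{theorem: HDKAK} for one direction, then embed a fuzzy simulation as a constant sequence and use the maximality of $\Phi$ for the other. The only cosmetic difference is that you compare against an arbitrary fuzzy simulation $\psi$ rather than against the greatest one (whose existence the paper cites from~\cite{FB4FA}), which makes your argument marginally more self-contained but otherwise identical.
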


\begin{proof}
Assuming that the premises of the assertion hold, we can apply Theorem~\ref{theorem: HDKAK} and conclude that $\bigwedge\!\Phi$ is a fuzzy simulation between $\mA$ and $\mAp$. Let $\varphi$ be the greatest fuzzy simulation between $\mA$ and $\mAp$ (it always exists~\cite{FB4FA}). Thus, $\bigwedge\!\Phi \leq \varphi$. On the other hand, the sequence that consists of infinitely many $\varphi$ is clearly a depth-bounded fuzzy simulation between $\mA$ and $\mAp$. Hence, $\varphi \leq \varphi_n$ for all $n \in \NN$ and, consequently, $\varphi \leq \bigwedge\!\Phi$. Therefore, $\bigwedge\!\Phi = \varphi$.
\myend
\end{proof}

Example~\ref{example: OFJAH 1} shows that the premise ``$\mAp$ is image-finite'' is necessary for Corollary~\ref{cor: HGRKW}, because the sequence $(\varphi_n)_{n \in \NN}$ used in this example is the greatest depth-bounded fuzzy simulation between~$\mA$ and~$\mAp$. We do not have any example to show that Corollary~\ref{cor: HGRKW} can be strengthened, e.g., by removing the premise ``$\fand$ is continuous'' (the problem remains open).

\begin{corollary}\label{cor: HRKQA}
Let $\Phi = (\varphi_n)_{n \in \NN}$ be the greatest depth-bounded fuzzy auto-simulation of $\mA$. If $\mL$ satisfies the join-meet distributivity law, $\fand$ is continuous and $\mA$ is image-finite, then 
\begin{enumerate}[(a)]
\item $\bigwedge\!\Phi$ is the greatest fuzzy auto-simulation of $\mA$,
\item each $\varphi_n$ is a fuzzy pre-order,
\item $\normS{\Phi}{\mA}{\mA} = 1$.
\end{enumerate}
\end{corollary}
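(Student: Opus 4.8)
The plan is to obtain every part from the maximality of $\Phi$ together with the results already proved in the excerpt, so that almost no fresh computation is needed. For part~(a), I would simply invoke Corollary~\ref{cor: HGRKW} in the special case $\mAp = \mA$. By definition, a depth-bounded fuzzy auto-simulation of $\mA$ is a depth-bounded fuzzy simulation between $\mA$ and itself, and a fuzzy auto-simulation is a fuzzy simulation between $\mA$ and itself; the three hypotheses (join-meet distributivity, continuity of $\fand$, and image-finiteness of $\mA$) are exactly those of Corollary~\ref{cor: HGRKW} with $\mAp$ replaced by $\mA$. Hence $\bigwedge\!\Phi$ is the greatest fuzzy auto-simulation of $\mA$.

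For part~(b) I would treat reflexivity and transitivity of each $\varphi_n$ separately. For reflexivity, the idea is to exhibit the constant sequence $(id_A)_{n \in \NN}$ as a competitor: condition~\eqref{eq: HFKJA 1} holds trivially, and since $id_A \circ f = f$ and $f \circ id_A = f$, conditions~\eqref{eq: HFKJA 2} and~\eqref{eq: HFKJA 3} collapse to $\tauA \leq \tauA$ and $\deltaA_s \leq \deltaA_s$. Thus $(id_A)_{n \in \NN}$ is a depth-bounded fuzzy auto-simulation of $\mA$, and maximality of $\Phi$ forces $id_A \leq \varphi_n$ for all $n$. For transitivity, I would apply Proposition~\ref{prop: JHFNW}(b) with $\mAp = \mAdp = \mA$ and $\Psi = \Phi$ to conclude that $\Phi \circ \Phi = (\varphi_n \circ \varphi_n)_{n \in \NN}$ is again a depth-bounded fuzzy auto-simulation of $\mA$; maximality then yields $\Phi \circ \Phi \leq \Phi$, i.e. $\varphi_n \circ \varphi_n \leq \varphi_n$ for every $n$. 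Together these give that each $\varphi_n$ is a fuzzy pre-order.

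Part~(c) would then follow immediately from the reflexivity established above. From $id_A \leq \varphi_n$ we get $id_A \leq \varphi_n^{-1}$, hence $\sigmaA = \sigmaA \circ id_A \leq \sigmaA \circ \varphi_n^{-1}$, so $\sigmaA(x) \leq (\sigmaA \circ \varphi_n^{-1})(x)$ for every $x \in A$. By~\eqref{fop: GDJSK 30} each factor $(\sigmaA(x) \fto (\sigmaA \circ \varphi_n^{-1})(x))$ equals $1$, giving $\normS{\varphi_n}{\mA}{\mA} = 1$ for every $n$ and therefore $\normS{\Phi}{\mA}{\mA} = \bigwedge_{n \in \NN} \normS{\varphi_n}{\mA}{\mA} = 1$.

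I do not expect any genuine obstacle here, as the whole corollary is a packaging of earlier machinery: part~(a) reduces to Corollary~\ref{cor: HGRKW}, transitivity to Proposition~\ref{prop: JHFNW}(b), and reflexivity together with part~(c) to the maximality of $\Phi$. The only points demanding care are verifying that the identity sequence truly satisfies the defining conditions of a depth-bounded fuzzy auto-simulation, and keeping track of the inverse when moving from $id_A \leq \varphi_n$ to the norm computation (using $id_A^{-1} = id_A$).
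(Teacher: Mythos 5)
Your proposal is correct and follows essentially the same route as the paper: part (a) by specializing Corollary~\ref{cor: HGRKW}, and part (b) by using the constant identity sequence for reflexivity and Proposition~\ref{prop: JHFNW}(b) with maximality for transitivity. The only (immaterial) difference is in part (c), where you compute the norm directly from reflexivity via~\eqref{fop: GDJSK 30}, while the paper instead notes that the identity sequence has norm $1$ and invokes the norm-monotonicity assertion of Proposition~\ref{prop: JHFNW}(a) together with maximality of $\Phi$.
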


\begin{proof}
\begin{enumerate}[(a)]
\item It follows from Corollary~\ref{cor: HGRKW}.
\item Note that the sequence $\Psi$ that consists of infinite many $id_A$ is a depth-bounded fuzzy auto-simulation of $\mA$. Hence, each $\varphi_n$ is reflexive. By Proposition~\ref{prop: JHFNW}, $\Phi \circ \Phi \leq \Phi$. Hence, each $\varphi_n$ is also transitive. Therefore, each $\varphi_n$ is a fuzzy pre-order.
\item Note that $\normS{\Psi}{\mA}{\mA} = 1$. By Proposition~\ref{prop: JHFNW}, $\normS{\Psi}{\mA}{\mA} \leq \normS{\Phi}{\mA}{\mA}$. Hence, $\normS{\Phi}{\mA}{\mA} = 1$.
\myend
\end{enumerate}
\end{proof}

%\subsection{Preservation of the Recognized Language}

The following theorem states a kind of fuzzy preservation of the fuzzy length-bounded languages recognized by a fuzzy automaton under depth-bounded fuzzy simulations. If $(\varphi_n)_{n \in \NN}$ is a depth-bounded fuzzy simulation between $\mA$ and $\mAp$, then for $\tuple{x,x'} \in A \times A'$, the fuzzy degree in which $\bLdb(\mA_x)$ is a subset of $\bLdb(\mAp_{x'})$ is greater than or equal to $\varphi_n(x,x')$ and, furthermore, the fuzzy degree in which $\bLdb(\mA)$ is a subset of $\bLdb(\mAp)$ is greater than or equal to $\normS{\varphi_n}{\mA}{\mAp}$.

\begin{theorem}\label{theorem: JHFLS}
Let $\Phi = (\varphi_n)_{n \in \NN}$ be a depth-bounded fuzzy simulation between fuzzy automata~$\mA$ and~$\mAp$. Then, for every $n \in \NN$ and every $\tuple{x,x'} \in A \times A'$:
\begin{eqnarray}
\varphi_n(x,x') & \leq & S(\bLdb(\mA_x), \bLdb(\mAp_{x'})), \label{eq: JHFLS 1} \\[0.5ex]
\normS{\varphi_n}{\mA}{\mAp} & \leq & S(\bLdb(\mA), \bLdb(\mAp)). \label{eq: JHFLS 2}
\end{eqnarray}
\end{theorem}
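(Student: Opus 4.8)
The plan is to establish \eqref{eq: JHFLS 1} first and then obtain \eqref{eq: JHFLS 2} from it. For \eqref{eq: JHFLS 1}, by the definition of $S$ and the adjunction \eqref{fop: GDJSK 00}, it suffices to prove that $\varphi_n(x,x') \fand \bL(\mA_x)(w) \leq \bL(\mAp_{x'})(w)$ for every word $w$ with $|w| \leq n$; words of length greater than $n$ are automatic, since then $\bLdb(\mA_x)(w) = 0$ and the corresponding residuum equals $1$. The key computational facts I would record are the one-step unfoldings $\bL(\mA_x)(s_1 \cdots s_k) = \bigvee_{y \in A} \deltaA_{s_1}(x,y) \fand \bL(\mA_y)(s_2 \cdots s_k)$ and its primed analogue for $\mAp$, both following from the definition of $\bL$ and from $\sigma^{\mA_x} = \{x:1\}$.

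I would then argue by induction on $k = |w|$, proving the displayed inequality for all indices $n \geq k$ at once (this synchronisation is essential). In the base case $k = 0$, the value of $\bL(\mA_x)$ on the empty word is $\tauA(x)$ and that of $\bL(\mAp_{x'})$ is $\tauAp(x')$; since $\varphi_n \leq \varphi_0$ by \eqref{eq: HFKJA 1} and condition \eqref{eq: HFKJA 2} gives $\varphi_0(x,x') \fand \tauA(x) \leq (\varphi_0^{-1} \circ \tauA)(x') \leq \tauAp(x')$, monotonicity \eqref{fop: GDJSK 10} finishes the case. For $k \geq 1$, write $w = s_1 w'$ with $|w'| = k-1$; after distributing $\fand$ over $\bigvee$ via \eqref{fop: GDJSK 155}, it suffices to bound $\varphi_n(x,x') \fand \deltaA_{s_1}(x,y) \fand \bL(\mA_y)(w')$ for each $y \in A$. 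Condition \eqref{eq: HFKJA 3}, read pointwise at $(x',y)$ and keeping on the left only the summand $a = x$, yields $\varphi_n(x,x') \fand \deltaA_{s_1}(x,y) \leq \bigvee_{y' \in A'} \deltaAp_{s_1}(x',y') \fand \varphi_{n-1}(y,y')$; multiplying by $\bL(\mA_y)(w')$, distributing again, and applying the induction hypothesis at the smaller index $n-1 \geq k-1$ to the pair $(y,y')$ and the word $w'$ (giving $\varphi_{n-1}(y,y') \fand \bL(\mA_y)(w') \leq \bL(\mAp_{y'})(w')$) produces $\bigvee_{y'} \deltaAp_{s_1}(x',y') \fand \bL(\mAp_{y'})(w') = \bL(\mAp_{x'})(w)$. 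Taking the supremum over $y$ closes the induction.

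To deduce \eqref{eq: JHFLS 2}, I would again reduce by \eqref{fop: GDJSK 00} to showing $\normS{\varphi_n}{\mA}{\mAp} \fand \bL(\mA)(w) \leq \bL(\mAp)(w)$ for $|w| \leq n$, and expand $\bL(\mA)(w) = \bigvee_{x} \sigmaA(x) \fand \bL(\mA_x)(w)$ (with the primed analogue). Distributing over this supremum, it suffices to bound $\normS{\varphi_n}{\mA}{\mAp} \fand \sigmaA(x) \fand \bL(\mA_x)(w)$ for each $x$. From $\normS{\varphi_n}{\mA}{\mAp} = S(\sigma^\mA, \sigmaAp \circ \varphi_n^{-1})$ together with \eqref{fop: GDJSK 10} and \eqref{fop: GDJSK 60} one obtains $\sigmaA(x) \fand \normS{\varphi_n}{\mA}{\mAp} \leq (\sigmaAp \circ \varphi_n^{-1})(x) = \bigvee_{x'} \sigmaAp(x') \fand \varphi_n(x,x')$; feeding in the already proved \eqref{eq: JHFLS 1} in the form $\varphi_n(x,x') \fand \bL(\mA_x)(w) \leq \bL(\mAp_{x'})(w)$ and taking suprema over $x'$ and then $x$ yields the bound $\bL(\mAp)(w)$.

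I expect the main obstacle to be the bookkeeping in the inductive step of \eqref{eq: JHFLS 1}: because the transition condition \eqref{eq: HFKJA 3} lowers the superscript from $n$ to $n-1$, the induction on word length must be carried with the index synchronised, i.e.\ proving the statement for all $n \geq k$ rather than for a fixed $n$. Every interchange of $\fand$ with a supremum must be justified by \eqref{fop: GDJSK 155}. Unlike Theorem~\ref{theorem: HDKAK}, the argument requires neither the join-meet distributivity law nor continuity of $\fand$, since only the complete-lattice identity \eqref{fop: GDJSK 155} is invoked.
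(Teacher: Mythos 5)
Your proposal is correct and follows essentially the same route as the paper: both prove \eqref{eq: JHFLS 1} by transferring $\varphi_n$ across transitions through repeated use of \eqref{eq: HFKJA 3} followed by \eqref{eq: HFKJA 1} and \eqref{eq: HFKJA 2} at the terminal step (your word-length induction with synchronized index is exactly the paper's telescoping chain of relational inequalities, written pointwise), and both then derive \eqref{eq: JHFLS 2} from \eqref{eq: JHFLS 1} by expanding the norm $\normS{\varphi_n}{\mA}{\mAp}$ and re-aggregating over initial states. The only differences are presentational: the paper argues at the level of relation composition in part one and stays on the residuum side in part two (via \eqref{fop: GDJSK 80a}, \eqref{fop: GDJSK 230}, \eqref{fop: GDJSK 100} and \eqref{fop: GDJSK 210}), whereas you work pointwise and use the adjunction \eqref{fop: GDJSK 00} to reason on the product side.
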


\begin{proof}
In order to prove~\eqref{eq: JHFLS 1}, it is sufficient to show that, for every $u \in \Sigma^*$ with $|u| \leq n$, 
	\begin{equation}
		\varphi_n(x,x') \leq (\bL(\mA_x)(u) \fto \bL(\mAp_{x'})(u)), \label{eq: FSKWN}
	\end{equation}
	or equivalently,
	\[ \varphi_n(x,x') \fand \bL(\mA_x)(u) \leq \bL(\mAp_{x'})(u). \]
	Let $u = s_1\ldots s_k$, with $0 \leq k \leq n$. 
	The above inequality is equivalent to
	\[ \varphi_n(x,x') \fand (\deltaA_{s_1}\circ\ldots\circ\deltaA_{s_k}\circ\tauA)(x) \leq (\deltaAp_{s_1}\circ\ldots\circ\deltaAp_{s_k}\circ\tauAp)(x'). \]
	It suffices to show that 
	\[ \varphi_n^{-1}\circ\deltaA_{s_1}\circ\ldots\circ\deltaA_{s_k}\circ\tauA \leq \deltaAp_{s_1}\circ\ldots\circ\deltaAp_{s_k}\circ\tauAp. \]
	Since $\circ$ is associative, by~\eqref{eq: HFKJA 3}, \eqref{eq: HFKJA 2}, \eqref{eq: HFKJA 1} and~\eqref{fop: GDJSK 10}, we have 
	\begin{eqnarray*}
		& & \varphi_n^{-1}\circ\deltaA_{s_1}\circ\ldots\circ\deltaA_{s_k}\circ\tauA \\
		& \leq & \deltaAp_{s_1}\circ\varphi_{n-1}^{-1}\circ\deltaA_{s_2}\circ\ldots\circ\deltaA_{s_k}\circ\tauA \\
		& \leq & \deltaAp_{s_1}\circ\deltaAp_{s_2}\circ\varphi_{n-2}^{-1}\circ\deltaA_{s_3}\circ\ldots\circ\deltaA_{s_k}\circ\tauA \\
		& \leq & \ldots \\
		& \leq & \deltaAp_{s_1}\circ\deltaAp_{s_2}\circ\ldots\circ\deltaAp_{s_k}\circ\varphi_{n-k}^{-1}\circ\tauA \\
		& \leq & \deltaAp_{s_1}\circ\deltaAp_{s_2}\circ\ldots\circ\deltaAp_{s_k}\circ\varphi_{0}^{-1}\circ\tauA \\
		& \leq & \deltaAp_{s_1}\circ\ldots\circ\deltaAp_{s_k}\circ\tauAp.
	\end{eqnarray*}

Let us now prove~\eqref{eq: JHFLS 2}. Starting from the LHS of~\eqref{eq: JHFLS 2}, by the definition of the norm, we get:
\begin{equation}\label{eq.NormVarphinmAmAp}
\normS{\varphi_n}{\mA}{\mAp} = \bigwedge_{x \in A} \left(\sigmaA(x) \fto (\sigmaAp \circ \varphi_n^{-1})(x)\right).
\end{equation}
Now, let us choose some arbitrary $x \in A$ and focus on the expression $(\sigmaAp \circ \varphi_n^{-1})(x)$. It can be expressed as follows:
\[
(\sigmaAp \circ \varphi_n^{-1})(x) = \bigvee_{x' \in A'} \left(\sigmaAp(x') \fand \varphi_n(x, x')\right).
\]
Since~\eqref{eq: JHFLS 1} holds, we further get:
\begin{eqnarray*}
    (\sigmaAp \circ \varphi_n^{-1})(x) & \le & \bigvee_{x' \in A'} \left(\sigmaAp(x') \fand S(\bLdb(\mA_x), \bLdb(\mAp_{x'}))\right) \\
    & = & \bigvee_{x' \in A'} \left(\sigmaAp(x') \fand \bigwedge_{\substack{\ u \in \Sigma^{*} \\ |u| \le n}} (\bL(\mA_x)(u) \fto \bL(\mAp_{x'})(u))\right) \\
    & \le & \bigvee_{x' \in A'} \left(\sigmaAp(x') \fand (\bL(\mA_x)(u) \fto \bL(\mAp_{x'})(u))\right), 
\end{eqnarray*}
where $u \in \Sigma^{*}$ is an arbitrary word with $|u| \le n$ and the last inequality is justified by~\eqref{fop: GDJSK 10}. Now, applying~\eqref{fop: GDJSK 80a}, we obtain:
\[
(\sigmaAp \circ \varphi_n^{-1})(x) \le \bigvee_{x' \in A'} \left(\bL(\mA_x)(u) \fto (\sigmaAp(x') \fand \bL(\mAp_{x'})(u))\right).
\]
We further apply~\eqref{fop: GDJSK 230} and get the following:
\begin{eqnarray*}
(\sigmaAp \circ \varphi_n^{-1})(x) & \le & \left(\bL(\mA_x)(u) \fto \bigvee_{x' \in A'} (\sigmaAp(x') \fand \bL(\mAp_{x'})(u))\right) \\
& = & \left(\bL(\mA_x)(u) \fto \bL(\mAp)(u)\right).
\end{eqnarray*}
Now, we successively use~\eqref{fop: GDJSK 20}
and~\eqref{fop: GDJSK 100} to evaluate the expression $\sigmaA(x) \fto (\sigmaAp \circ \varphi_n^{-1})(x)$ in the following way:
\begin{eqnarray*}
\left(\sigmaA(x) \fto (\sigmaAp \circ \varphi_n^{-1})(x)\right) & \le & \left(\sigmaA(x) \fto (\bL(\mA_x)(u) \fto \bL(\mAp)(u))\right) \\
& \le & \left((\sigmaA(x) \fand \bL(\mA_x)(u)) \fto \bL(\mAp)(u)\right).
\end{eqnarray*}
By substituting the last inequality into~\eqref{eq.NormVarphinmAmAp} and by using~\eqref{fop: GDJSK 210}, we get:
\begin{eqnarray*}
\normS{\varphi_n}{\mA}{\mAp} & \le & \bigwedge_{x \in A} \left( (\sigmaA(x) \fand \bL(\mA_x)(u)) \fto \bL(\mAp)(u) \right)\\
& = & \left( \bigvee_{x \in A} (\sigmaA(x) \fand \bL(\mA_x)(u))  \right) \fto \bL(\mAp)(u) \\
& = & \left(\bL(\mA)(u) \fto \bL(\mAp)(u)\right).
\end{eqnarray*}
As the last inequality is valid for every word $u \in \Sigma^{*}$ with $|u| \le n$, we finally obtain:
\[
\normS{\varphi_n}{\mA}{\mAp} \le \bigwedge_{\substack{\ u \in \Sigma^{*} \\ |u| \le n}} \left(\bL(\mA)(u) \fto \bL(\mAp)(u)\right)
= S(\bLdb(\mA), \bLdb(\mAp)),
\]
and the proof is completed.
\myend
\end{proof}

The following corollary states that, if $\Phi$ is a depth-bounded fuzzy simulation between $\mA$ and $\mAp$, then the fuzzy degree in which the fuzzy language recognized by $\mA$ is a subset of the fuzzy language recognized by $\mAp$ is greater than or equal to $\nZdbs$. It follows immediately from~\eqref{eq: JHFLS 2}.   

\begin{corollary}\label{cor: JHDJA}
If $\Phi$ is a depth-bounded fuzzy simulation between fuzzy automata $\mA$ and $\mAp$, then 
\[ \nZdbs \leq S(\bL(\mA), \bL(\mAp)). \]
\end{corollary}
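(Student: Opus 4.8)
The plan is to read $\nZdbs$ as the infimum over $n$ of the norms $\normS{\varphi_n}{\mA}{\mAp}$ and then feed each such norm into the length-bounded preservation statement~\eqref{eq: JHFLS 2}, finally letting the length bound grow so as to exhaust all of $\Sigma^*$. First I would recall that, by the definition~\eqref{eq: JDJAKJ} of the norm of a depth-bounded fuzzy simulation, $\nZdbs = \bigwedge_{n \in \NN} \normS{\varphi_n}{\mA}{\mAp}$, so in particular $\nZdbs \leq \normS{\varphi_n}{\mA}{\mAp}$ for every $n \in \NN$. Applying~\eqref{eq: JHFLS 2} from Theorem~\ref{theorem: JHFLS} then immediately gives $\nZdbs \leq S(\bLdb(\mA), \bLdb(\mAp))$ for every $n \in \NN$.

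Next I would unfold the right-hand side and discard the contributions of long words. By definition $S(\bLdb(\mA), \bLdb(\mAp)) = \bigwedge_{w \in \Sigma^*} (\bLdb(\mA)(w) \fto \bLdb(\mAp)(w))$, and for any $w$ with $|w| > n$ one has $\bLdb(\mA)(w) = 0$, so by~\eqref{fop: GDJSK 30} the corresponding factor $\bLdb(\mA)(w) \fto \bLdb(\mAp)(w)$ equals $1$ and drops out of the infimum; for $|w| \leq n$ the truncation is inactive, so $\bLdb(\mA)(w) = \bL(\mA)(w)$ and $\bLdb(\mAp)(w) = \bL(\mAp)(w)$. Hence $\nZdbs \leq S(\bLdb(\mA), \bLdb(\mAp)) = \bigwedge_{|w| \leq n} (\bL(\mA)(w) \fto \bL(\mAp)(w))$ for every $n \in \NN$.

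The final step is to exhaust $\Sigma^*$ by varying $n$. I would fix an arbitrary word $w \in \Sigma^*$ and instantiate the previous bound at $n = |w|$; then $w$ lies among the words of length at most $n$, so the displayed infimum is bounded above by its $w$-indexed term, giving $\nZdbs \leq (\bL(\mA)(w) \fto \bL(\mAp)(w))$. As $w$ was arbitrary, taking the infimum over all $w \in \Sigma^*$ yields $\nZdbs \leq \bigwedge_{w \in \Sigma^*}(\bL(\mA)(w) \fto \bL(\mAp)(w)) = S(\bL(\mA), \bL(\mAp))$, which is the claim.

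I do not expect a substantive obstacle: the statement is essentially a corollary of~\eqref{eq: JHFLS 2}, and the only point requiring care is the bookkeeping of the truncation, namely verifying that words longer than the current bound contribute the neutral value $1$ to the residuated infimum so that the length-bounded subsethood degrees form an increasing family whose limit is exactly the full-language degree $S(\bL(\mA), \bL(\mAp))$.
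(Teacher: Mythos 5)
Your proof is correct and takes essentially the same route as the paper, whose entire argument is the remark that the corollary ``follows immediately from~\eqref{eq: JHFLS 2}''. You simply make explicit the bookkeeping the paper leaves implicit: each $\normS{\varphi_n}{\mA}{\mAp}$ dominates $\nZdbs$, words longer than $n$ contribute the neutral value $1$ to the truncated subsethood degree, and letting $n$ vary exhausts all of $\Sigma^*$.
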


%\subsection{The Hennessy-Milner Property}

In the rest of this section, we present a result on the Hennessy-Milner property of depth-bounded fuzzy simulations between fuzzy automata. It gives a logical characterization of the greatest depth-bounded fuzzy simulation between two fuzzy automata.

The language $\mFs$ defined in~\cite{FB4FA} is the smallest set of formulas over~$\Sigma$ and~$\mL$ such that:
\begin{itemize}
	\item $\tau \in \mFs$; 
	\item if $s \in \Sigma$ and $w \in \mFs$, then $(s \circ \alpha) \in \mFs$;
	\item if $a \in L$ and $\alpha \in \mFs$, then $(a \to \alpha) \in \mFs$;
	\item if $\alpha, \beta \in \mFs$, then $(\alpha \land \beta) \in \mFs$.
\end{itemize}

For $n \in \NN$, let $\mFdbs{n}$ be the sublanguage of $\mFs$ that consists of formulas with the nesting depth of $\circ$ less than or equal to $n$. Formally, $(\mFdbs{n})_{n \in \NN}$ is the family of the smallest sets of formulas over~$\Sigma$ and~$\mL$ such that:
\begin{itemize}
\item $\tau \in \mFdbs{n}$; 
\item if $s \in \Sigma$ and $\alpha \in \mFdbs{n}$, then $(s \circ \alpha) \in \mFdbs{n+1}$;
\item if $a \in L$ and $\alpha \in \mFdbs{n}$, then $(a \to \alpha) \in \mFdbs{n}$;
\item if $\alpha, \beta \in \mFdbs{n}$, then $(\alpha \land \beta) \in \mFdbs{n}$.
\end{itemize}

We have 
\[ \mFs = \bigvee_{n \in \NN} \mFdbs{n}. \]

The meaning of formulas of $\mFs$ is explained in~\cite{FB4FA} as follows. 
Given a fuzzy automaton $\mA$ (over $\Sigma$ and $\mL$), a state $x \in A$ and a formula $\alpha \in \mFs$, the fuzzy degree in which $x$ has the property $\alpha$ is denoted by $\alpha^\mA(x)$. It is a value from $L$ with the following intuition:
\begin{itemize}
\item $\tau^\mA(x)$ is the degree in which $x$ is a terminal state;
\item $(s \circ \alpha)^\mA(x)$ is the degree in which executing the action $s$ at the state $x$ may lead to a state with the property $\alpha$;  
\item $(a \to \alpha)^\mA(x)$ is the degree in which $\alpha^\mA(x) \geq a$;
\item $(\alpha \land \beta)^\mA(x)$ is the degree in which $x$ has both the properties $\alpha$ and $\beta$. 
\end{itemize} 
Formally, the value $\alpha^\mA(x)$ for $\alpha \in \mFs \setminus \{\tau\}$ and $x \in A$ is defined inductively as follows~\cite{FB4FA}: 
\begin{eqnarray*}
(s \circ \alpha)^\mA(x) & = & (\deltaA_s \circ \alpha^\mA)(x) \\
(a \to \alpha)^\mA(x) & = & a \fto \alpha^\mA(x) \\
(\alpha \land \beta)^\mA(x) & = & \alpha^\mA(x) \land \beta^\mA(x).
\end{eqnarray*}
Thus, for $\alpha \in \mFs$, $\alpha^\mA$ is a fuzzy subset of~$A$.

\begin{example}\label{example: LJDNS 1b}
Let $\tuple{L, \leq}$, $\Sigma$, $\mA$ and $\mAp$ be as in Examples~\ref{example: LJDNS 0} and~\ref{example: LJDNS 1}. 
Consider the following formula 
\[ \alpha = (s \circ s \circ (0.9 \to \tau)) \] 
of $\mFdbs{2}$. We have 
\begin{eqnarray*}
\alpha^\mA(u) & = & 0.4 \fand 0.5 \fand (0.9 \fto 1) \\
\alpha^\mAp(u') & = &  0.5 \fand 0.4 \fand (0.9 \fto 0.8), 
\end{eqnarray*}
which give 
\begin{center}
\begin{NiceTabular}{@{}p{6em}p{6.5em}p{6.5em}p{6.5em}@{}}[]
    \toprule
     & \textbf{G\"odel} & \textbf{{\L}ukasiewicz} & \textbf{Product} \\
    \midrule
    $\alpha^\mA(u)$ & 0.4 & 0 & 0.2 \\
    $\alpha^\mAp(u')$ & 0.4 & 0 & 8/45 \\
    \bottomrule
\end{NiceTabular}
\end{center}
\myend
\end{example}

The following lemma is a generalization of the assertion~\eqref{eq: JHFLS 1} of Theorem~\ref{theorem: JHFLS}, as it implies that, if $(\varphi_n)_{n \in \NN}$ is a depth-bounded fuzzy simulation between fuzzy automata $\mA$ and $\mAp$, then for every $n \in \NN$ and every $\tuple{x,x'} \in A \times A'$:
\begin{equation}\label{eq: HDJHS}
	\varphi_n(x,x') \leq \bigwedge_{\alpha \in \mFdbs{n}}\! (\alpha^\mA(x) \fto \alpha^\mAp(x')).
\end{equation}
This inequality states that the formulas of $\mFdbs{n}$ are fuzzily preserved by $\varphi_n$. 

\begin{lemma}\label{lemma: KHSBS}
If $\Phi = (\varphi_n)_{n \in \NN}$ is a depth-bounded fuzzy simulation between fuzzy automata~$\mA$ and~$\mAp$, then for every $n \in \NN$ and every $\alpha \in \mFdbs{n}$: 
\[
	\varphi_n^{-1} \circ \alpha^\mA \leq \alpha^\mAp.
\]
\end{lemma}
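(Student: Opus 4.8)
The plan is to prove the inequality by induction on the structure of the formula $\alpha$, keeping careful track of how the \emph{index} of the relation in $\varphi_n^{-1} \circ \alpha^\mA$ is tied to the nesting depth of $\circ$ in $\alpha$. The point to watch is that the constructor $s \circ (\cdot)$ raises the depth by one and forces a shift from $\varphi_n$ to $\varphi_{n-1}$, whereas the constructors $a \to (\cdot)$ and $(\cdot) \land (\cdot)$ leave both the depth and the index unchanged. For the base case $\alpha = \tau$ we have $\tau \in \mFdbs{n}$ for every $n$, so we must establish $\varphi_n^{-1} \circ \tauA \leq \tauAp$ for \emph{all} $n$. This follows by combining the initial condition~\eqref{eq: HFKJA 2}, namely $\varphi_0^{-1} \circ \tauA \leq \tauAp$, with the decreasing property~\eqref{eq: HFKJA 1}: since $\varphi_n \leq \varphi_0$ we get $\varphi_n^{-1} \leq \varphi_0^{-1}$, and monotonicity of composition (a consequence of~\eqref{fop: GDJSK 10}) yields $\varphi_n^{-1} \circ \tauA \leq \varphi_0^{-1} \circ \tauA \leq \tauAp$.

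The key inductive case is $\alpha = (s \circ \beta)$ with $\beta \in \mFdbs{n-1}$ and $n \geq 1$. Here I would unfold $(s \circ \beta)^\mA = \deltaA_s \circ \beta^\mA$ and argue entirely at the operator level:
\[
\varphi_n^{-1} \circ (s \circ \beta)^\mA = \varphi_n^{-1} \circ \deltaA_s \circ \beta^\mA \leq \deltaAp_s \circ \varphi_{n-1}^{-1} \circ \beta^\mA \leq \deltaAp_s \circ \beta^\mAp = (s \circ \beta)^\mAp,
\]
where the first inequality is the transition-compatibility condition~\eqref{eq: HFKJA 3} together with monotonicity of $\circ$ (from~\eqref{fop: GDJSK 10}), and the second is the induction hypothesis applied to $\beta$ at index $n-1$, again using monotonicity of $\circ$. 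This is precisely the step that consumes the index decrement, and it explains why~\eqref{eq: HFKJA 3} must relate $\varphi_n$ to $\varphi_{n-1}$ rather than to itself.

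The two remaining cases preserve the index $n$ and reduce to residuation facts from Lemma~\ref{lemma: JHFJW}. For $\alpha = (a \to \beta)$ with $\beta \in \mFdbs{n}$, I would first record the general inequality $R \circ (a \fto g) \leq a \fto (R \circ g)$, valid for any fuzzy relation $R$ and fuzzy set $g$; this follows pointwise by applying~\eqref{fop: GDJSK 80a} inside the supremum and then~\eqref{fop: GDJSK 230}. Instantiating $R = \varphi_n^{-1}$ and $g = \beta^\mA$, and then invoking the induction hypothesis together with the monotonicity of $a \fto (\cdot)$ from~\eqref{fop: GDJSK 20}, gives $\varphi_n^{-1} \circ (a \to \beta)^\mA \leq a \fto (\varphi_n^{-1} \circ \beta^\mA) \leq a \fto \beta^\mAp = (a \to \beta)^\mAp$. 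For $\alpha = (\beta \land \gamma)$ with $\beta,\gamma \in \mFdbs{n}$, monotonicity of $\circ$ gives $\varphi_n^{-1} \circ (\beta^\mA \land \gamma^\mA) \leq (\varphi_n^{-1} \circ \beta^\mA) \land (\varphi_n^{-1} \circ \gamma^\mA)$, and applying the induction hypothesis to each conjunct completes the case.

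I expect the only real subtlety — rather than a genuine obstacle — to be the bookkeeping of the index against the depth: the induction must be phrased so that the hypothesis is available at the correct index ($n-1$ for the subformula of $s \circ \beta$, and $n$ for the subformulas of $a \to \beta$ and $\beta \land \gamma$), and the base case must cover every $n$ via the decreasing property~\eqref{eq: HFKJA 1}. Once this is set up correctly, each case is a direct application of the composition monotonicity and the residuated-lattice identities already collected in Lemma~\ref{lemma: JHFJW}. Finally, I would note that the operator inequality $\varphi_n^{-1} \circ \alpha^\mA \leq \alpha^\mAp$ is, by the adjunction~\eqref{fop: GDJSK 00}, exactly the pointwise statement $\varphi_n(x,x') \fand \alpha^\mA(x) \leq \alpha^\mAp(x')$, which yields the fuzzy-preservation inequality~\eqref{eq: HDJHS} announced before the lemma.
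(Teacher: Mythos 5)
Your proposal is correct and follows essentially the same route as the paper's proof: structural induction on $\alpha$ with the identical four-case analysis, using \eqref{eq: HFKJA 1}, \eqref{fop: GDJSK 10} and \eqref{eq: HFKJA 2} for the base case $\tau$, \eqref{eq: HFKJA 3} with monotonicity of $\circ$ and the inductive hypothesis at index $n-1$ for $(s \circ \beta)$, and \eqref{fop: GDJSK 80a}, \eqref{fop: GDJSK 20} respectively \eqref{fop: GDJSK 10} for the $(a \to \beta)$ and $(\beta \land \gamma)$ cases. The only difference is presentational: you make explicit the supremum step (via \eqref{fop: GDJSK 230}) behind the operator inequality $\varphi_n^{-1} \circ (a \fto \beta^\mA) \leq (a \fto \varphi_n^{-1} \circ \beta^\mA)$, which the paper leaves implicit.
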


The proof of this lemma is given in the appendix.

The following theorem is about the Hennessy-Milner property of depth-bounded fuzzy simulations between fuzzy automata. 
Its proof is given in the appendix.

\begin{theorem}\label{theorem: HDJHA}
Suppose that $\mL$ is linear and $\fand$ is continuous. 
Let $\mA$ and $\mAp$ be fuzzy automata, where $\mAp$ is image-finite. 
For $n \in \NN$, let $\varphi_n : A \times A' \to L$ be the fuzzy relation defined as follows:
\[
	\varphi_n(x,x') = \bigwedge_{\alpha \in \mFdbs{n}}\! (\alpha^\mA(x) \fto \alpha^\mAp(x')).
\]
%for $\tuple{x,x'} \in A \times A'$. 
Then, $(\varphi_n)_{n \in \NN}$ is the greatest depth-bounded fuzzy simulation between $\mA$ and $\mAp$. 
\end{theorem}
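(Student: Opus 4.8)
The plan is to establish two things: that $\Phi=(\varphi_n)_{n\in\NN}$ is itself a depth-bounded fuzzy simulation between $\mA$ and $\mAp$, and that it dominates every other such simulation. The domination is the quick half. Given an arbitrary depth-bounded fuzzy simulation $\Psi=(\psi_n)_{n\in\NN}$, Lemma~\ref{lemma: KHSBS} yields $\psi_n^{-1}\circ\alpha^\mA\le\alpha^\mAp$ for every $\alpha\in\mFdbs{n}$; evaluating at a state and isolating a single summand turns this into $\psi_n(x,x')\fand\alpha^\mA(x)\le\alpha^\mAp(x')$, so by the adjunction~\eqref{fop: GDJSK 00} we get $\psi_n(x,x')\le(\alpha^\mA(x)\fto\alpha^\mAp(x'))$, and taking the infimum over $\alpha\in\mFdbs{n}$ gives $\psi_n\le\varphi_n$. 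Hence $\Psi\le\Phi$, and it remains only to show that $\Phi$ satisfies~\eqref{eq: HFKJA 1}--\eqref{eq: HFKJA 3}.

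The first two conditions are immediate. Since every formula of nesting depth at most $n-1$ also has nesting depth at most $n$, we have $\mFdbs{n-1}\subseteq\mFdbs{n}$, so $\varphi_n$ is an infimum over a larger index set than $\varphi_{n-1}$, giving $\varphi_n\le\varphi_{n-1}$, i.e.\ \eqref{eq: HFKJA 1}. For~\eqref{eq: HFKJA 2}, instantiating the infimum with $\tau\in\mFdbs{0}$ gives $\varphi_0(x,x')\le(\tauA(x)\fto\tauAp(x'))$, whence $\varphi_0(x,x')\fand\tauA(x)\le\tauAp(x')$ by~\eqref{fop: GDJSK 00}, and therefore $\varphi_0^{-1}\circ\tauA\le\tauAp$.

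The crux is the transfer condition~\eqref{eq: HFKJA 3}. Evaluating $\varphi_n^{-1}\circ\deltaA_s\le\deltaAp_s\circ\varphi_{n-1}^{-1}$ at a pair $\tuple{x',y}$ and discarding the outer supremum on the left, it suffices to prove, for all $x\in A$, $x'\in A'$, $y\in A$ and $s\in\Sigma$,
\[ \varphi_n(x,x')\fand\deltaA_s(x,y)\ \le\ R,\qquad R:=\bigvee_{z'\in A'}\deltaAp_s(x',z')\fand\varphi_{n-1}(y,z'). \]
By image-finiteness of $\mAp$ the set $V=\{z'\mid\deltaAp_s(x',z')>0\}$ is finite, and only its members contribute to $R$. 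The key device is a \emph{characteristic formula}: for any choice of $\gamma_{z'}\in\mFdbs{n-1}$ for each $z'\in V$, put
\[ \beta\ =\ (\tauA(y)\to\tau)\ \land\!\!\bigwedge_{z'\in V}(\gamma_{z'}^\mA(y)\to\gamma_{z'}), \]
which lies in $\mFdbs{n-1}$ (the thresholds $\tauA(y),\gamma_{z'}^\mA(y)$ are constants in $L$). A direct computation gives $\beta^\mA(y)=1$, since every conjunct evaluates at $y$ to some $a\fto a=1$, and $\beta^\mAp(z')\le(\gamma_{z'}^\mA(y)\fto\gamma_{z'}^\mAp(z'))$ for each $z'\in V$, because the meet is below that conjunct.

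Now $(s\circ\beta)\in\mFdbs{n}$, so the definition of $\varphi_n$ together with~\eqref{fop: GDJSK 00} gives $\varphi_n(x,x')\fand(s\circ\beta)^\mA(x)\le(s\circ\beta)^\mAp(x')$; combining $(s\circ\beta)^\mA(x)\ge\deltaA_s(x,y)\fand\beta^\mA(y)=\deltaA_s(x,y)$ with the bound on $\beta^\mAp$ yields
\[ \varphi_n(x,x')\fand\deltaA_s(x,y)\ \le\ \bigvee_{z'\in V}\deltaAp_s(x',z')\fand(\gamma_{z'}^\mA(y)\fto\gamma_{z'}^\mAp(z')). \]
Since this holds for every tuple $(\gamma_{z'})_{z'\in V}$, I would take the infimum over all such tuples. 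The final and most delicate step is to show that this infimum equals $R$: enumerating the finite set $V$ and eliminating the coordinates $\gamma_{z'}$ one at a time, each elimination uses the join-meet distributivity law~\eqref{eq.JMDL} (available because $\mL$ is linear) to pull $\bigwedge_{\gamma_{z'}}$ through the finite $\bigvee$, and the continuity of $\fand$ to pull it through the product, turning the $z'$-summand into $\deltaAp_s(x',z')\fand\bigwedge_{\gamma}(\gamma^\mA(y)\fto\gamma^\mAp(z'))=\deltaAp_s(x',z')\fand\varphi_{n-1}(y,z')$ by the very definition of $\varphi_{n-1}$. Finiteness of $V$ is exactly what makes this finite elimination legitimate. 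I expect this sup--inf interchange, together with pinning down the characteristic formula $\beta$, to be the main obstacle; it is here that all three hypotheses---linearity, continuity of $\fand$, and image-finiteness of $\mAp$---are consumed.
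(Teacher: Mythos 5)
Your proof is correct, and it shares the paper's overall skeleton: greatestness follows from Lemma~\ref{lemma: KHSBS} via adjunction, conditions~\eqref{eq: HFKJA 1} and~\eqref{eq: HFKJA 2} are read off directly, and the transfer condition~\eqref{eq: HFKJA 3} is attacked with characteristic formulas $s\circ\bigwedge_{z'}(\gamma_{z'}^\mA(y)\to\gamma_{z'})$ whose value at $y$ is $1$. The genuine difference is the endgame. The paper argues by contradiction: if transfer fails, linearity (as totality of the order) turns failure of $\leq$ into $>$, continuity of $\fand$ extracts one witness formula per successor with a strict inequality, linearity again keeps the inequality strict under the finite join over the successors $y'_1,\dots,y'_m$, and the single formula $s\circ(\beta_{y'_1}\land\dots\land\beta_{y'_m})$ then contradicts the definition of $\varphi_n$. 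You instead argue directly: you obtain an upper bound for every tuple $(\gamma_{z'})_{z'\in V}$ at once, take the infimum over all tuples, and convert it into $\bigvee_{z'\in V}\deltaAp_s(x',z')\fand\varphi_{n-1}(y,z')$ by eliminating one coordinate at a time via the join-meet distributivity law and continuity of $\fand$; this interchange is sound precisely because each coordinate occurs in exactly one summand and $V$ is finite, so your sketch of that step does go through. Your route buys two things. First, linearity is consumed only through the join-meet distributivity law, so your argument in fact proves the theorem under the weaker hypothesis that $\mL$ satisfies~\eqref{eq.JMDL}, matching the hypotheses of Theorems~\ref{theorem: HDKAK} and~\ref{theorem: HDKAK 2}, whereas the paper's contradiction step needs genuine totality of the order. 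Second, your extra conjunct $(\tauA(y)\to\tau)$ keeps the characteristic formula well-formed when the successor set $V$ is empty, an edge case in which the paper's conjunction $\beta_{y'_1}\land\dots\land\beta_{y'_m}$ is empty and hence not literally a formula of the language. What the paper's argument buys in exchange is brevity: it never needs the infimum-over-tuples construction or the coordinate-elimination argument.
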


%===============================================================================

\section{Depth-Bounded Fuzzy Bisimulations between Fuzzy Automata}
\label{section: fbs}

In this section, we formally define depth-bounded fuzzy bisimulations between fuzzy automata and study their properties. In particular, we investigate their relationship with fuzzy bisimulations, the fuzzy invariance of the fuzzy length-bounded languages recognized by a fuzzy automaton under depth-bounded fuzzy bisimulations, as well as the Hennessy-Milner property of such bisimulations.

A~{\em depth-bounded fuzzy bisimulation} between fuzzy automata $\mA$ and $\mAp$ is a sequence $\Phi = (\varphi_n)_{n \in \NN}$ of fuzzy relations \mbox{$\varphi_n: A \times A' \to L$} such that:
\begin{itemize}
\item $\Phi$ is a depth-bounded fuzzy simulation between $\mA$ and $\mAp$, 
\item $\Phi^{-1}$ is a depth-bounded fuzzy simulation between $\mAp$ and $\mA$. 
\end{itemize}
That is, $(\varphi_n)_{n \in \NN}$ is a depth-bounded fuzzy bisimulation between $\mA$ and $\mAp$ if it satisfies the following conditions:
\begin{eqnarray}
\varphi_n & \leq & \varphi_{n-1}\quad \textrm{for all $n \geq 1$} \nonumber \\
\varphi_0^{-1} \circ \tauA & \leq & \tauAp \nonumber \\
\varphi_0 \circ \tauAp & \leq & \tauA \label{eq: HFKJA 4} \\
\varphi_n^{-1} \circ \deltaA_s & \leq & \deltaAp_s \circ \varphi_{n-1}^{-1}\quad \textrm{for all $s \in \Sigma$ and $n \geq 1$} \nonumber \\
\varphi_n \circ \deltaAp_s & \leq & \deltaA_s \circ \varphi_{n-1}\quad\, \textrm{for all $s \in \Sigma$ and $n \geq 1$}. \label{eq: HFKJA 5}
\end{eqnarray}

%A~{\em depth-bounded fuzzy auto-bisimulation} of a fuzzy automaton $\mA$ is a depth-bounded fuzzy bisimulation between $\mA$ and itself.
%
The {\em norm} of a depth-bounded fuzzy bisimulation $\Phi$ between $\mA$ and $\mAp$ is defined to be: 
\[ \nZdbbs \ =\ \nZdbs \land \normS{\Phi^{-1}}{\mAp}{\mA}. \]
Additionally, a depth-bounded fuzzy bisimulation between $\mA$ and itself is called a~{\em depth-bounded fuzzy auto-bisimulation} of~$\mA$.

\begin{example}\label{example: LJDNS 2}
Let $\tuple{L, \leq}$, $\Sigma$, $\mA$ and $\mAp$ be as in Example~\ref{example: LJDNS 0}. Recall that the fuzzy automata $\mA$ and $\mAp$ are depicted in Figure~\ref{fig: HDBKA} (on page~\pageref{fig: HDBKA}). 
For the cases where $\fand$ is the G\"odel, {\L}ukasiewicz or product t-norm, it can be checked that the sequence $\Phi = (\varphi_n)_{n \in \NN}$ specified below is the greatest depth-bounded fuzzy bisimulation between $\mA$ and~$\mAp$. 
\begin{center}
    \begin{NiceTabular}{@{}p{7em}p{25em}p{7em}@{}}[]
        \toprule
        \textbf{T-norm} & \textbf{Depth-bounded fuzzy bisimulation} & \textbf{Norm}\\
        \textbf{($\fand$)} & \textbf{($\Phi = (\varphi_n)_{n \in \NN}$)} & \textbf{($\nZdbs$)}\\
        \midrule
        G\"odel & $\varphi_0 = \{\tuple{u,u'}\!:\!1, \tuple{v,v'}\!:\!0.8\}$ & $\nZdbs = 0.4$ \\
                & $\varphi_n = \{\tuple{u,u'}\!:\!0.4, \tuple{v,v'}\!:\!0.4\}$ for all $n \geq 1$ &  \\
         \midrule
        {\L}ukasiewicz & $\varphi_0 = \{\tuple{u,u'}\!:\!1, \tuple{u,v'}\!:\!0.2, \tuple{v,v'}\!:\!0.8\}$ & $\nZdbs = 0.5$ \\
	             & $\varphi_1 = \{\tuple{u,u'}\!:\!0.7, \tuple{u,v'}\!:\!0.2, \tuple{v,v'}\!:\!0.7\}$ & \\
	           & $\varphi_2 = \{\tuple{u,u'}\!:\!0.6, \tuple{u,v'}\!:\!0.2, \tuple{v,v'}\!:\!0.6\}$ & \\
	           & $\varphi_n = \{\tuple{u,u'}\!:\!0.5, \tuple{u,v'}\!:\!0.2, \tuple{v,v'}\!:\!0.5\}$ for all $n \geq 3$ & 
 	 \\
         \midrule
        product & $\varphi_0 = \{\tuple{u,u'}\!:\!1, \tuple{v,v'}\!:\!0.8\}$ & $\nZdbs = 0$  \\
	        & $\varphi_n =\{\tuple{u,u'}\!:\!0.8^{n+1}, \tuple{v,v'}\!:\!0.8^{n+1}\}$ for all $n \geq 1$ & 
 	\\
        \bottomrule
    \end{NiceTabular}
\end{center}
\myend
\end{example}

The following proposition is a counterpart of Proposition~\ref{prop: JHFNW}.

\begin{proposition}\label{prop: JHFNW 2}
Let $\mA$, $\mAp$ and $\mAdp$ be fuzzy automata. Furthermore, let $\mathbf{\Phi}$ be a family of depth-bounded fuzzy bisimulations between $\mA$ and $\mAp$, and $\Psi$ a depth-bounded fuzzy bisimulation between $\mAp$ and $\mAdp$. Then, the following properties hold:
\begin{enumerate}[(a)]
\item If $\Phi_1, \Phi_2 \in \mathbf{\Phi}$ and $\Phi_1 \le \Phi_2$, then 
\[ \normBS{\Phi_1}{\mA}{\mAp} \leq \normBS{\Phi_2}{\mA}{\mAp}. \]
\item If $\Phi \in \mathbf{\Phi}$, then the inverse $\Phi^{-1}$ is a depth-bounded fuzzy bisimulation between $\mAp$ and $\mA$ with
\[ \normBS{\Phi^{-1}}{\mAp}{\mA} = \normBS{\Phi}{\mA}{\mAp}. \]  
\item If $\Phi \in \mathbf{\Phi}$, then the composition $\Phi \circ \Psi$ is a depth-bounded fuzzy bisimulation between $\mA$ and $\mAdp$ such that 
\[ \normBS{\Phi}{\mA}{\mAp} \fand \normBS{\Psi}{\mAp}{\mAdp} \leq \normBS{\Phi \circ \Psi}{\mA}{\mAdp}. \]
\item The meet $\bigvee\!\mathbf{\Phi}$ is also a depth-bounded fuzzy bisimulation between $\mA$ and $\mAp$. 
\item There always exists the greatest depth-bounded fuzzy bisimulation between $\mA$ and~$\mAp$.
\end{enumerate}
\end{proposition}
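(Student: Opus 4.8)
The plan is to reduce every part to the already-established Proposition~\ref{prop: JHFNW} for depth-bounded fuzzy simulations, exploiting the defining feature that $\Phi$ is a depth-bounded fuzzy bisimulation between $\mA$ and $\mAp$ exactly when $\Phi$ is a depth-bounded fuzzy simulation between $\mA$ and $\mAp$ \emph{and} $\Phi^{-1}$ is a depth-bounded fuzzy simulation between $\mAp$ and $\mA$. The norm decomposes accordingly as $\normBS{\Phi}{\mA}{\mAp} = \normS{\Phi}{\mA}{\mAp} \land \normS{\Phi^{-1}}{\mAp}{\mA}$, so each bisimulation assertion will follow by applying the corresponding simulation result to $\Phi$ and to $\Phi^{-1}$ separately and then recombining the two bounds through the meet.

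For part~(a), note that $\Phi_1 \le \Phi_2$ implies $\Phi_1^{-1} \le \Phi_2^{-1}$. Applying Proposition~\ref{prop: JHFNW}(a) to the simulations $\Phi_1,\Phi_2$ and to their inverses yields $\normS{\Phi_1}{\mA}{\mAp} \le \normS{\Phi_2}{\mA}{\mAp}$ and $\normS{\Phi_1^{-1}}{\mAp}{\mA} \le \normS{\Phi_2^{-1}}{\mAp}{\mA}$; taking meets and using the monotonicity of $\land$ gives the claim. Part~(b) is immediate from the symmetry of the defining conditions: since $(\Phi^{-1})^{-1} = \Phi$, the pair of conditions witnessing that $\Phi$ is a bisimulation between $\mA$ and $\mAp$ is precisely the pair witnessing that $\Phi^{-1}$ is a bisimulation between $\mAp$ and $\mA$. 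The norm equality then follows from the commutativity of $\land$, as $\normBS{\Phi^{-1}}{\mAp}{\mA} = \normS{\Phi^{-1}}{\mAp}{\mA} \land \normS{\Phi}{\mA}{\mAp} = \normBS{\Phi}{\mA}{\mAp}$.

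Part~(c) is the main step. Using $(\Phi \circ \Psi)^{-1} = \Psi^{-1} \circ \Phi^{-1}$, Proposition~\ref{prop: JHFNW}(b) applied to $\Phi,\Psi$ shows that $\Phi \circ \Psi$ is a depth-bounded fuzzy simulation between $\mA$ and $\mAdp$, while the same proposition applied to $\Psi^{-1},\Phi^{-1}$ shows that $(\Phi \circ \Psi)^{-1}$ is one between $\mAdp$ and $\mA$; hence $\Phi \circ \Psi$ is a depth-bounded fuzzy bisimulation. For the norm inequality I would expand the left-hand side via the decomposition and bound it below each of the two simulation norms of $\Phi \circ \Psi$. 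Concretely, monotonicity of $\fand$ (by~\eqref{fop: GDJSK 10}) gives $\normBS{\Phi}{\mA}{\mAp} \fand \normBS{\Psi}{\mAp}{\mAdp} \le \normS{\Phi}{\mA}{\mAp} \fand \normS{\Psi}{\mAp}{\mAdp} \le \normS{\Phi \circ \Psi}{\mA}{\mAdp}$ by Proposition~\ref{prop: JHFNW}(b); symmetrically, using commutativity of $\fand$, the same left-hand side is $\le \normS{\Psi^{-1}}{\mAdp}{\mAp} \fand \normS{\Phi^{-1}}{\mAp}{\mA} \le \normS{(\Phi \circ \Psi)^{-1}}{\mAdp}{\mA}$. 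Since the left-hand side lies below both simulation norms of $\Phi \circ \Psi$, it lies below their meet, which is exactly $\normBS{\Phi \circ \Psi}{\mA}{\mAdp}$. The only delicate bookkeeping here is matching the order of the factors of $\fand$ with the composition order in $\Psi^{-1} \circ \Phi^{-1}$, which is resolved by commutativity; this is the step I expect to require genuine care.

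Finally, part~(d) follows by applying Proposition~\ref{prop: JHFNW}(c) twice: to the family $\mathbf{\Phi}$ it gives that $\bigvee\!\mathbf{\Phi}$ is a depth-bounded fuzzy simulation between $\mA$ and $\mAp$, and to the family $\{\Phi^{-1} \mid \Phi \in \mathbf{\Phi}\}$, together with the level-wise identity $(\bigvee\!\mathbf{\Phi})^{-1} = \bigvee\{\Phi^{-1} \mid \Phi \in \mathbf{\Phi}\}$ (inverse commutes with join), it gives that $(\bigvee\!\mathbf{\Phi})^{-1}$ is a depth-bounded fuzzy simulation between $\mAp$ and $\mA$; hence $\bigvee\!\mathbf{\Phi}$ is a depth-bounded fuzzy bisimulation. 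Part~(e) is then the instance of~(d) taken over the family of \emph{all} depth-bounded fuzzy bisimulations between $\mA$ and $\mAp$, which is non-empty since the constant sequence of empty fuzzy relations trivially satisfies the defining conditions; its join is the greatest such bisimulation.
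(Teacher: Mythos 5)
Your proof is correct and follows essentially the same route as the paper's: every assertion is reduced to Proposition~\ref{prop: JHFNW} via the decomposition of a depth-bounded fuzzy bisimulation into a simulation $\Phi$ and an inverse simulation $\Phi^{-1}$ (together with $(\Phi\circ\Psi)^{-1}=\Psi^{-1}\circ\Phi^{-1}$, $(\bigvee\!\mathbf{\Phi})^{-1}=\bigvee\{\Phi^{-1}\mid\Phi\in\mathbf{\Phi}\}$, and the meet decomposition of the norm). The paper's own proof is just a terser statement of exactly these reductions, so your more explicit write-up, including the two-sided bound in part~(c), is a faithful elaboration of it.
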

\begin{proof}
The assertion~(a) follows from~\eqref{fop: GDJSK 10} and~\eqref{fop: GDJSK 20}, while~(b) follows directly from the definition of depth-bounded fuzzy bisimulations. The assertion~(c) follows from the assertions~(b) of Proposition~\ref{prop: JHFNW} and the current proposition, while~(d) follows from the assertion~(c) of Proposition~\ref{prop: JHFNW} and the assertion~(b) of the current proposition, as well as from the fact that $(\bigvee\!\mathbf{\Phi})^{-1} = \bigvee\{\Phi^{-1} \mid \Phi \in \mathbf{\Phi}\}$. Finally, (e)~follows directly from~(d).
\myend
\end{proof}

The following theorem relates depth-bounded fuzzy bisimulations to fuzzy bisimulations. Roughly speaking, a depth-bounded fuzzy bisimulation is a decreasing sequence of fuzzy relations that, under certain light conditions, converges to a fuzzy bisimulation.

\begin{theorem}\label{theorem: HDKAK 2}
Suppose that $\mL$ satisfies the join-meet distributivity law and $\fand$ is continuous. Let $\Phi = (\varphi_n)_{n \in \NN}$ be a depth-bounded fuzzy bisimulation between image-finite fuzzy automata $\mA$ and $\mAp$. Then, the fuzzy relation $\varphi = \bigwedge\!\Phi$ is a fuzzy bisimulation between $\mA$ and $\mAp$ with $\nZbs = \nZdbbs$.
\end{theorem}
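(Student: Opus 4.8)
The plan is to reduce the statement to two applications of Theorem~\ref{theorem: HDKAK}, together with the elementary observation that forming infima commutes with forming inverses.

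First, I would unfold the definition of a depth-bounded fuzzy bisimulation: since $\Phi = (\varphi_n)_{n \in \NN}$ is a depth-bounded fuzzy bisimulation between $\mA$ and $\mAp$, by definition $\Phi$ is a depth-bounded fuzzy simulation between $\mA$ and $\mAp$ and $\Phi^{-1} = (\varphi_n^{-1})_{n \in \NN}$ is a depth-bounded fuzzy simulation between $\mAp$ and $\mA$. As $\mAp$ is image-finite and $\mL$ satisfies the join-meet distributivity law with $\fand$ continuous, Theorem~\ref{theorem: HDKAK} applied to $\Phi$ gives that $\varphi = \bigwedge\!\Phi$ is a fuzzy simulation between $\mA$ and $\mAp$ with $\nZs = \nZdbs$.

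Second, I would apply Theorem~\ref{theorem: HDKAK} a second time, now to $\Phi^{-1}$ between $\mAp$ and $\mA$; this is legitimate because $\mA$ is image-finite. The conclusion is that $\bigwedge\!(\Phi^{-1})$ is a fuzzy simulation between $\mAp$ and $\mA$ with $\normS{\bigwedge(\Phi^{-1})}{\mAp}{\mA} = \normS{\Phi^{-1}}{\mAp}{\mA}$. The one genuinely new (though trivial) step is to identify $\bigwedge\!(\Phi^{-1})$ with $\varphi^{-1}$: for every $x \in A$ and $x' \in A'$ we have $(\bigwedge\!\Phi)^{-1}(x',x) = \bigwedge_{n \in \NN} \varphi_n(x,x') = \bigwedge_{n \in \NN} \varphi_n^{-1}(x',x) = \big(\bigwedge(\Phi^{-1})\big)(x',x)$, so $\varphi^{-1} = \bigwedge\!(\Phi^{-1})$. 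Consequently $\varphi^{-1}$ is a fuzzy simulation between $\mAp$ and $\mA$, and since $\varphi$ is already a fuzzy simulation between $\mA$ and $\mAp$, the definition of a fuzzy bisimulation shows that $\varphi$ is a fuzzy bisimulation between $\mA$ and $\mAp$.

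Finally, the norm equality follows by bookkeeping. Unfolding the definitions of the two bisimulation norms, $\nZbs = \nZs \land \normS{\varphi^{-1}}{\mAp}{\mA}$ and $\nZdbbs = \nZdbs \land \normS{\Phi^{-1}}{\mAp}{\mA}$. Substituting the two equalities $\nZs = \nZdbs$ and $\normS{\varphi^{-1}}{\mAp}{\mA} = \normS{\Phi^{-1}}{\mAp}{\mA}$ established above yields $\nZbs = \nZdbbs$. All the real content is carried by Theorem~\ref{theorem: HDKAK}, which already absorbs the distributivity, continuity and image-finiteness hypotheses; the only thing to watch here is that the image-finiteness assumption is invoked on the correct automaton in each application --- on $\mAp$ for the simulation direction and on $\mA$ for the inverse direction --- which is precisely why the theorem demands that both $\mA$ and $\mAp$ be image-finite.
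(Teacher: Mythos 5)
Your proposal is correct and follows essentially the same route as the paper, which proves Theorem~\ref{theorem: HDKAK 2} by noting that it ``directly follows from Theorem~\ref{theorem: HDKAK} and the fact that $(\bigwedge\!\Phi)^{-1} = \bigwedge(\Phi^{-1})$.'' Your write-up merely makes explicit the two applications of Theorem~\ref{theorem: HDKAK} (to $\Phi$ with $\mAp$ image-finite, and to $\Phi^{-1}$ with $\mA$ image-finite) and the norm bookkeeping that the paper leaves implicit.
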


This theorem is a counterpart of Theorem~\ref{theorem: HDKAK}. It directly follows from Theorem~\ref{theorem: HDKAK} and the fact that $(\bigwedge\!\Phi)^{-1} =\bigwedge(\Phi^{-1})$.

The following corollary states that, under certain light conditions, the greatest depth-bounded fuzzy bisimulation between two fuzzy automata is a convergent sequence of fuzzy relations that approximate the greatest fuzzy bisimulation between the fuzzy automata. 

\begin{corollary}\label{cor: HGRKW 2}
Let $\Phi = (\varphi_n)_{n \in \NN}$ be the greatest depth-bounded fuzzy bisimulation between image-finite fuzzy automata $\mA$ and $\mAp$. If $\mL$ satisfies the join-meet distributivity law and $\fand$ is continuous, then $\bigwedge\!\Phi$ is the greatest fuzzy bisimulation between~$\mA$ and~$\mAp$.
\end{corollary}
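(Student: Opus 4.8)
The plan is to mirror the argument used for Corollary~\ref{cor: HGRKW}, replacing the role of Theorem~\ref{theorem: HDKAK} by its bisimulation counterpart, Theorem~\ref{theorem: HDKAK 2}. First I would invoke Theorem~\ref{theorem: HDKAK 2}: since $\mL$ satisfies the join-meet distributivity law, $\fand$ is continuous, and both $\mA$ and $\mAp$ are image-finite, the infimum $\bigwedge\!\Phi$ of the greatest depth-bounded fuzzy bisimulation is itself a fuzzy bisimulation between $\mA$ and $\mAp$.

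Next I would introduce the greatest fuzzy bisimulation $\varphi$ between $\mA$ and $\mAp$, whose existence is guaranteed by~\cite{FB4FA}. Because $\bigwedge\!\Phi$ is a fuzzy bisimulation, the maximality of $\varphi$ yields $\bigwedge\!\Phi \leq \varphi$.

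For the reverse inequality, I would exhibit the constant sequence $\Psi$ consisting of infinitely many copies of $\varphi$ and check that it is a depth-bounded fuzzy bisimulation between $\mA$ and $\mAp$. The decreasing condition~\eqref{eq: HFKJA 1} holds trivially, while the remaining conditions~\eqref{eq: HFKJA 2}, \eqref{eq: HFKJA 3}, \eqref{eq: HFKJA 4} and~\eqref{eq: HFKJA 5} collapse, for a constant sequence, exactly into the fuzzy-bisimulation clauses~\eqref{eq: HFHAJ 3}, \eqref{eq: HFHAJ 2}, \eqref{eq: HFHAJ 6} and~\eqref{eq: HFHAJ 5}, all of which $\varphi$ satisfies by assumption. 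Since $\Phi$ is the greatest depth-bounded fuzzy bisimulation, we get $\Psi \leq \Phi$, hence $\varphi \leq \varphi_n$ for every $n \in \NN$, and consequently $\varphi \leq \bigwedge\!\Phi$. Combining the two inequalities gives $\bigwedge\!\Phi = \varphi$, so $\bigwedge\!\Phi$ is the greatest fuzzy bisimulation.

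I do not expect a serious obstacle, as the statement is a direct analogue of Corollary~\ref{cor: HGRKW}; the only point requiring a moment's care is the verification that the constant sequence $\Psi$ qualifies as a depth-bounded fuzzy bisimulation, which is routine once one observes that each conjunct of the definition degenerates into the corresponding fuzzy-bisimulation clause. It is worth noting that, unlike Corollary~\ref{cor: HGRKW}, both automata must be assumed image-finite here, since Theorem~\ref{theorem: HDKAK 2} needs the image-finiteness of $\mA$ and $\mAp$ simultaneously in order to handle the simulation conditions in both directions.
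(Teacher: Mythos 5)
Your proposal is correct and matches the paper's intended argument exactly: the paper proves this corollary by remarking that one repeats the proof of Corollary~\ref{cor: HGRKW} with Theorem~\ref{theorem: HDKAK 2} in place of Theorem~\ref{theorem: HDKAK}, which is precisely the two-inequality argument (via the constant sequence of the greatest fuzzy bisimulation) that you spell out. Your closing observation about why both automata must be image-finite here, unlike in Corollary~\ref{cor: HGRKW}, is also accurate, since Theorem~\ref{theorem: HDKAK 2} needs image-finiteness on both sides to handle the simulation conditions in both directions.
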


This corollary is a counterpart of Corollary~\ref{cor: HGRKW}. It can be proved as done for Corollary~\ref{cor: HGRKW}, but using Theorem~\ref{theorem: HDKAK 2} instead of~\ref{theorem: HDKAK} and replacing the occurrences of ``simulation'' with ``bisimulation''.

The following corollary is a counterpart of Corollary~\ref{cor: HRKQA}.

\begin{corollary}\label{cor: HRKQA 2}
Let $\Phi = (\varphi_n)_{n \in \NN}$ be the greatest depth-bounded fuzzy auto-bisimulation of $\mA$. If $\mL$ satisfies the join-meet distributivity law, $\fand$ is continuous and $\mA$ is image-finite, then 
\begin{enumerate}[(a)]
\item $\bigwedge\!\Phi$ is the greatest fuzzy auto-bisimulation of $\mA$,
\item each $\varphi_n$ is a fuzzy equivalence,
\item $\normBS{\Phi}{\mA}{\mA} = 1$.
\end{enumerate}
\end{corollary}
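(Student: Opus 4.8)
The plan is to prove Corollary~\ref{cor: HRKQA 2} by following the blueprint of Corollary~\ref{cor: HRKQA}, which proves the simulation analogue, but replacing simulations with bisimulations at each step. The three assertions will be established in order, and each should reduce to an already-available result.

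For part~(a), I would simply invoke Corollary~\ref{cor: HGRKW 2}: if $\mA$ is image-finite, then $\mA$ plays the role of both $\mA$ and $\mAp$ in that corollary (an auto-bisimulation is a bisimulation between $\mA$ and itself), so $\bigwedge\!\Phi$ is the greatest fuzzy bisimulation of $\mA$ with itself, i.e.\ the greatest fuzzy auto-bisimulation of~$\mA$. Here one uses that the image-finiteness of both automata in Corollary~\ref{cor: HGRKW 2} is satisfied because the single automaton $\mA$ is image-finite.

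For part~(b), the key observation is that the constant sequence $\Psi$ consisting of infinitely many copies of $id_A$ is a depth-bounded fuzzy auto-bisimulation of $\mA$: the identity relation is reflexive, symmetric, and trivially satisfies the transition and terminal conditions in both directions. Since $\Phi$ is the \emph{greatest} such sequence, $\Psi \leq \Phi$, which forces each $\varphi_n$ to be reflexive ($id_A \leq \varphi_n$). For symmetry, I would use that $\Phi^{-1}$ is also a depth-bounded fuzzy auto-bisimulation of $\mA$ (by the definition of depth-bounded fuzzy bisimulation, part~(b) of Proposition~\ref{prop: JHFNW 2}), and hence by maximality $\Phi^{-1} \leq \Phi$; combined with $\Phi \leq \Phi^{-1}$ (obtained by applying the inverse to the former), we get $\Phi = \Phi^{-1}$, so each $\varphi_n$ is symmetric. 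For transitivity, I would apply the composition result, part~(c) of Proposition~\ref{prop: JHFNW 2}, to conclude $\Phi \circ \Phi$ is a depth-bounded fuzzy auto-bisimulation, whence $\Phi \circ \Phi \leq \Phi$ by maximality, giving $\varphi_n \circ \varphi_n \leq \varphi_n$ for each $n$. Thus each $\varphi_n$ is a fuzzy equivalence.

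For part~(c), I would note that the norm of the identity sequence $\Psi$ satisfies $\normBS{\Psi}{\mA}{\mA} = 1$ (the degree of self-subsumption of $\sigmaA$ is~$1$), and then use part~(a) of Proposition~\ref{prop: JHFNW 2}, which gives monotonicity of the norm: since $\Psi \leq \Phi$, we have $\normBS{\Psi}{\mA}{\mA} \leq \normBS{\Phi}{\mA}{\mA}$, forcing $\normBS{\Phi}{\mA}{\mA} = 1$. The main obstacle, if any, is not computational but lies in carefully justifying the symmetry argument in part~(b): one must confirm that applying the inverse operation to the inequality $\Phi^{-1} \leq \Phi$ indeed yields $\Phi \leq \Phi^{-1}$, which holds because the inverse is order-preserving componentwise. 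Beyond that, the proof is a routine transfer of the auto-simulation argument, relying entirely on the maximality of~$\Phi$ together with the closure properties in Proposition~\ref{prop: JHFNW 2}.
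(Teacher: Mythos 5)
Your proposal is correct and follows essentially the same route as the paper's own proof: part (a) via Corollary~\ref{cor: HGRKW 2}, part (b) by using the constant identity sequence for reflexivity, $\Phi^{-1} \leq \Phi$ (hence $\Phi = \Phi^{-1}$) for symmetry, and $\Phi \circ \Phi \leq \Phi$ for transitivity — all via maximality and Proposition~\ref{prop: JHFNW 2} — and part (c) by norm monotonicity applied to the identity sequence. The only difference is the immaterial ordering of the symmetry and transitivity arguments.
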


\begin{proof}
Part~(a) follows from Corollary~\ref{cor: HGRKW 2}. To prove~(b), observe that the sequence $\Psi$ that consists of infinite many $id_A$ is a depth-bounded fuzzy auto-bisimulation of $\mA$. Hence, each $\varphi_n$ is reflexive. By Proposition~\ref{prop: JHFNW 2}, $\Phi \circ \Phi \leq \Phi$. Hence, each $\varphi_n$ is transitive. 
Also by Proposition~\ref{prop: JHFNW 2}, $\Phi^{-1}$ is a depth-bounded fuzzy auto-bisimulation of $\mA$. Hence, $\Phi^{-1} \leq \Phi$. Consequently, $\Phi \leq \Phi^{-1}$ and then $\Phi = \Phi^{-1}$. Therefore, each $\varphi_n$ is a fuzzy equivalence. Finally, to prove (c), observe that $\normBS{\Psi}{\mA}{\mA} = 1$. By Proposition~\ref{prop: JHFNW 2}, $\normBS{\Psi}{\mA}{\mA} \leq \normBS{\Phi}{\mA}{\mA}$. Hence, $\normBS{\Phi}{\mA}{\mA} = 1$.
\myend
\end{proof}

%\subsection{Invariance of the Recognized Language}

The following theorem is a counterpart of Theorem~\ref{theorem: JHFLS}. It states a kind of fuzzy invariance of the fuzzy length-bounded languages recognized by a fuzzy automaton under depth-bounded fuzzy bisimulations. If $(\varphi_n)_{n \in \NN}$ is a depth-bounded fuzzy bisimulation between $\mA$ and $\mAp$, then for $\tuple{x,x'} \in A \times A'$, the fuzzy degree in which $\bLdb(\mA_x)$ and $\bLdb(\mAp_{x'})$ are equivalent is greater than or equal to $\varphi_n(x,x')$ and, furthermore, the fuzzy degree in which $\bLdb(\mA)$ and $\bLdb(\mAp)$ are equivalent is greater than or equal to $\normBS{\varphi_n}{\mA}{\mAp}$.

\begin{theorem}\label{theorem: HGFKW}
Let $\Phi = (\varphi_n)_{n \in \NN}$ be a depth-bounded fuzzy bisimulation between fuzzy automata~$\mA$ and~$\mAp$. Then, for every $n \in \NN$ and every $\tuple{x,x'} \in A \times A'$:
\begin{eqnarray}
\varphi_n(x,x') & \leq & E(\bLdb(\mA_x), \bLdb(\mAp_{x'})), \label{eq: HGFKW 1} \\[0.5ex]
\normBS{\varphi_n}{\mA}{\mAp} & \leq & E(\bLdb(\mA), \bLdb(\mAp)). \label{eq: HGFKW 2} 
\end{eqnarray}
\end{theorem}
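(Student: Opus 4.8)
The plan is to obtain Theorem~\ref{theorem: HGFKW} as a two-fold application of Theorem~\ref{theorem: JHFLS}, one for each direction of the bisimulation. The key observation is that the biresiduum-based equality degree $E$ decomposes into a pair of subsethood degrees: since $x \fequiv y = (x \fto y) \land (y \fto x)$ and, in a complete lattice, the operation $\land$ commutes with arbitrary infima, that is $\bigwedge_a (p_a \land q_a) = \big(\bigwedge_a p_a\big) \land \big(\bigwedge_a q_a\big)$, we have for all fuzzy sets $f,g$ over a common domain
\[ E(g,f) = S(g,f) \land S(f,g). \]
Hence it suffices to bound $\varphi_n(x,x')$ and $\normBS{\varphi_n}{\mA}{\mAp}$ above by each of the two relevant subsethood degrees separately and then combine the bounds with $\land$.

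By the definition of a depth-bounded fuzzy bisimulation, $\Phi$ is a depth-bounded fuzzy simulation between $\mA$ and $\mAp$, while $\Phi^{-1} = (\varphi_n^{-1})_{n \in \NN}$ is a depth-bounded fuzzy simulation between $\mAp$ and $\mA$. I would apply Theorem~\ref{theorem: JHFLS} to $\Phi$, obtaining for all $n$ and $\tuple{x,x'}$ that $\varphi_n(x,x') \leq S(\bLdb(\mA_x), \bLdb(\mAp_{x'}))$ and $\normS{\varphi_n}{\mA}{\mAp} \leq S(\bLdb(\mA), \bLdb(\mAp))$, and apply it to $\Phi^{-1}$, obtaining $\varphi_n^{-1}(x',x) \leq S(\bLdb(\mAp_{x'}), \bLdb(\mA_x))$ and $\normS{\varphi_n^{-1}}{\mAp}{\mA} \leq S(\bLdb(\mAp), \bLdb(\mA))$.

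For~\eqref{eq: HGFKW 1}, I would use the identity $\varphi_n^{-1}(x',x) = \varphi_n(x,x')$ to see that $\varphi_n(x,x')$ is simultaneously below $S(\bLdb(\mA_x), \bLdb(\mAp_{x'}))$ and $S(\bLdb(\mAp_{x'}), \bLdb(\mA_x))$; taking $\land$ and applying the decomposition above yields $\varphi_n(x,x') \leq E(\bLdb(\mA_x), \bLdb(\mAp_{x'}))$. For~\eqref{eq: HGFKW 2}, I would invoke the definition $\normBS{\varphi_n}{\mA}{\mAp} = \normS{\varphi_n}{\mA}{\mAp} \land \normS{\varphi_n^{-1}}{\mAp}{\mA}$ given by~\eqref{eq.FuzzyBiSimNorm}; combining the two norm inequalities with $\land$ bounds it by $S(\bLdb(\mA), \bLdb(\mAp)) \land S(\bLdb(\mAp), \bLdb(\mA)) = E(\bLdb(\mA), \bLdb(\mAp))$.

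I do not anticipate a genuine obstacle: the statement is essentially a corollary of Theorem~\ref{theorem: JHFLS}. The only points needing care are the elementary decomposition of $E$ into two copies of $S$ and the bookkeeping of the inverse relation, in particular the symmetric exchange of the roles of $\mA$ and $\mAp$ together with the identity $\varphi_n^{-1}(x',x) = \varphi_n(x,x')$. Everything else follows directly from the already-established preservation result for depth-bounded fuzzy simulations.
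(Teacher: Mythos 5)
Your proposal is correct and takes essentially the same route as the paper: the paper likewise obtains \eqref{eq: HGFKW 1} by applying assertion \eqref{eq: JHFLS 1} of Theorem~\ref{theorem: JHFLS} to both $\Phi$ and its inverse $\Phi^{-1}$ (citing Proposition~\ref{prop: JHFNW 2}(b) for the fact that $\Phi^{-1}$ is a depth-bounded fuzzy simulation between $\mAp$ and $\mA$), and obtains \eqref{eq: HGFKW 2} from assertion \eqref{eq: JHFLS 2} combined with the definition \eqref{eq.FuzzyBiSimNorm} of the bisimulation norm. Your explicit decomposition $E(g,f) = S(g,f) \land S(f,g)$ merely spells out the elementary complete-lattice fact (infima commute with binary meets) that the paper leaves implicit.
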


The assertion~\eqref{eq: HGFKW 1} follows from the assertion~\eqref{eq: JHFLS 1} of Theorem~\ref{theorem: JHFLS} and the second assertion of Proposition~\eqref{prop: JHFNW 2}. 
The assertion~\eqref{eq: HGFKW 2} follows from the assertion~\eqref{eq: JHFLS 2} of Theorem~\ref{theorem: JHFLS}. 

The following corollary is a counterpart of Corollary~\ref{cor: JHDJA}. It states that, if $\Phi$ is a depth-bounded fuzzy bisimulation between~$\mA$ and~$\mAp$, then the fuzzy degree in which the fuzzy languages recognized by~$\mA$ and~$\mAp$ are equivalent is greater than or equal to $\nZdbbs$. It follows immediately from~\eqref{eq: HGFKW 2}.   

\begin{corollary}\label{cor: JHDJA 2}
If $\Phi$ is a depth-bounded fuzzy bisimulation between fuzzy automata $\mA$ and $\mAp$, then 
\[ \nZdbbs \leq E(\bL(\mA), \bL(\mAp)). \]
\end{corollary}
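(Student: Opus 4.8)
The plan is to derive the claim directly from assertion~\eqref{eq: HGFKW 2} of Theorem~\ref{theorem: HGFKW}, which bounds the biresiduum of the length-truncated languages by the norm of a single component $\varphi_n$. The only preparatory fact I would isolate is how the norm $\nZdbbs$ of the whole sequence $\Phi = (\varphi_n)_{n \in \NN}$ decomposes into the norms $\normBS{\varphi_n}{\mA}{\mAp}$ of its individual components.

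First I would unfold the definition of the norm of a depth-bounded fuzzy bisimulation. By definition, $\nZdbbs = \nZdbs \land \normS{\Phi^{-1}}{\mAp}{\mA}$, and by~\eqref{eq: JDJAKJ} together with its counterpart applied to the depth-bounded fuzzy simulation $\Phi^{-1}$ between $\mAp$ and $\mA$ we have $\nZdbs = \bigwedge_{n \in \NN} \normS{\varphi_n}{\mA}{\mAp}$ and $\normS{\Phi^{-1}}{\mAp}{\mA} = \bigwedge_{n \in \NN} \normS{\varphi_n^{-1}}{\mAp}{\mA}$. Since the meet of two infima taken over the same index set equals the infimum of the pointwise meets, and since $\normBS{\varphi_n}{\mA}{\mAp} = \normS{\varphi_n}{\mA}{\mAp} \land \normS{\varphi_n^{-1}}{\mAp}{\mA}$ by~\eqref{eq.FuzzyBiSimNorm}, I obtain
\[
\nZdbbs \ =\ \bigwedge_{n \in \NN} \normBS{\varphi_n}{\mA}{\mAp}.
\]
In particular, $\nZdbbs \leq \normBS{\varphi_n}{\mA}{\mAp}$ for every $n \in \NN$.

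Next I would fix an arbitrary word $w \in \Sigma^*$ and set $k = |w|$. Because $\bL^{\leq k}(\mA)(w) = \bL(\mA)(w)$ and $\bL^{\leq k}(\mAp)(w) = \bL(\mAp)(w)$ whenever $|w| \leq k$, the infimum defining $E(\bL^{\leq k}(\mA), \bL^{\leq k}(\mAp))$ has $(\bL(\mA)(w) \fequiv \bL(\mAp)(w))$ among its factors, so $E(\bL^{\leq k}(\mA), \bL^{\leq k}(\mAp)) \leq (\bL(\mA)(w) \fequiv \bL(\mAp)(w))$. Instantiating~\eqref{eq: HGFKW 2} at $n = k$ and chaining it with the bound from the previous paragraph yields
\[
\nZdbbs \ \leq\ \normBS{\varphi_k}{\mA}{\mAp} \ \leq\ E(\bL^{\leq k}(\mA), \bL^{\leq k}(\mAp)) \ \leq\ (\bL(\mA)(w) \fequiv \bL(\mAp)(w)).
\]
Since $w \in \Sigma^*$ was arbitrary, taking the infimum over all words gives $\nZdbbs \leq \bigwedge_{w \in \Sigma^*} (\bL(\mA)(w) \fequiv \bL(\mAp)(w)) = E(\bL(\mA), \bL(\mAp))$, which is exactly the claim.

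I do not anticipate any genuine obstacle: the whole argument is a bookkeeping passage from the length-bounded quantities of Theorem~\ref{theorem: HGFKW} to the unbounded languages, together with the routine handling of infima in a complete lattice. The only point deserving a moment's care is aligning the truncation level with the length of each word, so that~\eqref{eq: HGFKW 2} is applied at exactly the level $k = |w|$ at which the length-bounded language agrees with the full language on~$w$.
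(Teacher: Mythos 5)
Your proof is correct and takes essentially the same route as the paper, which simply states that the corollary follows immediately from~\eqref{eq: HGFKW 2}; your argument just spells out the bookkeeping behind that word \enquote{immediately}, namely the identity $\nZdbbs = \bigwedge_{n \in \NN} \normBS{\varphi_n}{\mA}{\mAp}$ and the instantiation of~\eqref{eq: HGFKW 2} at the truncation level $k = |w|$ for each word~$w$. No gaps.
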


%\subsection{The Hennessy-Milner Property}

In the rest of this section, we present a result on the Hennessy-Milner property of depth-bounded fuzzy bisimulations between fuzzy automata. It gives a logical characterization of the greatest depth-bounded fuzzy bisimulation between two fuzzy automata.

The set $\mFbs$ of formulas is defined in~\cite{FB4FA} in a similar way as for $\mFs$, but using expressions of the form $(a \leftrightarrow \alpha)$ instead of $(a \to \alpha)$. 
For $n \in \NN$, let $\mFdbbs{n}$ be the sublanguage of $\mFbs$ that consists of formulas with the nesting depth of $\circ$ less than or equal to $n$. Formally, $(\mFdbbs{n})_{n \in \NN}$ is the family of the smallest sets of formulas over~$\Sigma$ and~$\mL$ such that:
\begin{itemize}
	\item $\tau \in \mFdbbs{n}$; 
	\item if $s \in \Sigma$ and $\alpha \in \mFdbbs{n}$, then $(s \circ \alpha) \in \mFdbbs{n+1}$;
	\item if $a \in L$ and $\alpha \in \mFdbbs{n}$, then $(a \leftrightarrow \alpha) \in \mFdbbs{n}$;
	\item if $\alpha, \beta \in \mFdbbs{n}$, then $(\alpha \land \beta) \in \mFdbbs{n}$.
\end{itemize}

We have 
\[ \mFbs = \bigvee_{n \in \NN} \mFdbbs{n}. \]

The value $\alpha^\mA(x)$ for $\alpha \in \mFbs \setminus \{\tau\}$ and $x \in A$ is defined analogously as for the case where $\alpha \in \mFs \setminus \{\tau\}$, except that $(a \leftrightarrow \alpha)^\mA(x)$ is defined to be $(a \fequiv \alpha^\mA(x))$, i.e., the fuzzy degree in which $\alpha^\mA(x)$ is equal to~$a$.  

The following lemma is a counterpart of Lemma~\ref{lemma: KHSBS} and a generalization of the assertion~\eqref{eq: HGFKW 1} of Theorem~\ref{theorem: HGFKW}, as it implies that, if $(\varphi_n)_{n \in \NN}$ is a depth-bounded fuzzy bisimulation between fuzzy automata~$\mA$ and~$\mAp$, then for every $n \in \NN$ and $\tuple{x,x'} \in A \times A'$:
\begin{equation}\label{eq: HDJHS 2}
\varphi_n(x,x') \leq \bigwedge_{\alpha \in \mFdbbs{n}}\! (\alpha^\mA(x) \fequiv \alpha^\mAp(x')).
\end{equation}
This inequality states that the formulas of $\mFdbbs{n}$ are fuzzily invariant under $\varphi_n$. 

\begin{lemma}\label{lemma: KHSBS 2}
If $\Phi = (\varphi_n)_{n \in \NN}$ is a depth-bounded fuzzy bisimulation between fuzzy automata~$\mA$ and~$\mAp$, then for every $n \in \NN$ and every $\alpha \in \mFdbbs{n}$: 
\begin{eqnarray}
\varphi_n^{-1} \circ \alpha^\mA & \leq & \alpha^\mAp \label{eq: KHSBS 2a} \\
\varphi_n \circ \alpha^\mAp & \leq & \alpha^\mA. \label{eq: KHSBS 2b}
\end{eqnarray}
\end{lemma}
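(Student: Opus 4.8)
The plan is to prove the two inequalities \eqref{eq: KHSBS 2a} and \eqref{eq: KHSBS 2b} \emph{simultaneously}, by induction on the structure of the formula $\alpha$, establishing the claim for every index $n$ with $\alpha \in \mFdbbs{n}$. Treating both directions at once is essential, because the clause $(a \leftrightarrow \beta)$ is governed by the symmetric biresiduum $\fequiv$, so any bound on $\varphi_n$ against such a formula must draw on the induction hypotheses for $\beta$ in both directions at the same time. For the base case $\alpha = \tau$, where $\tau^\mA = \tauA$ and $\tau^\mAp = \tauAp$, condition \eqref{eq: HFKJA 1} gives $\varphi_n \leq \varphi_0$, so by monotonicity of composition (via \eqref{fop: GDJSK 10}) together with \eqref{eq: HFKJA 2} and \eqref{eq: HFKJA 4} we get $\varphi_n^{-1} \circ \tauA \leq \varphi_0^{-1} \circ \tauA \leq \tauAp$ and $\varphi_n \circ \tauAp \leq \varphi_0 \circ \tauAp \leq \tauA$.

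The inductive cases $(s \circ \beta)$ and $(\beta \land \gamma)$ are routine. For $\alpha = (s \circ \beta) \in \mFdbbs{n+1}$ with $\beta \in \mFdbbs{n}$, I would compute $\varphi_{n+1}^{-1} \circ (s \circ \beta)^\mA = \varphi_{n+1}^{-1} \circ \deltaA_s \circ \beta^\mA$, apply the transition condition \eqref{eq: HFKJA 3} to replace $\varphi_{n+1}^{-1} \circ \deltaA_s$ by $\deltaAp_s \circ \varphi_n^{-1}$, and then use the induction hypothesis $\varphi_n^{-1} \circ \beta^\mA \leq \beta^\mAp$ together with the monotonicity of $\circ$ to reach $\deltaAp_s \circ \beta^\mAp = (s \circ \beta)^\mAp$; the companion inequality \eqref{eq: KHSBS 2b} uses \eqref{eq: HFKJA 5} in the mirror-image computation. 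For $\alpha = (\beta \land \gamma)$, I would bound each summand $\varphi_n(x,x') \fand (\beta^\mA(x) \land \gamma^\mA(x))$ above by both $\varphi_n(x,x') \fand \beta^\mA(x)$ and $\varphi_n(x,x') \fand \gamma^\mA(x)$, take suprema over $x$, apply the two induction hypotheses, and combine through the meet; again the reverse direction is symmetric.

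The crux is the case $\alpha = (a \leftrightarrow \beta)$, with $(a \leftrightarrow \beta)^\mA(x) = a \fequiv \beta^\mA(x)$. First I would extract from the two induction hypotheses for $\beta$ the pointwise inequalities $\varphi_n(x,x') \fand \beta^\mA(x) \leq \beta^\mAp(x')$ and $\varphi_n(x,x') \fand \beta^\mAp(x') \leq \beta^\mA(x)$; by the adjunction \eqref{fop: GDJSK 00} these become $\varphi_n(x,x') \leq (\beta^\mA(x) \fto \beta^\mAp(x'))$ and $\varphi_n(x,x') \leq (\beta^\mAp(x') \fto \beta^\mA(x))$, whose conjunction is the key estimate $\varphi_n(x,x') \leq (\beta^\mA(x) \fequiv \beta^\mAp(x'))$. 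Now I would invoke \eqref{fop: GDJSK 150b} with $z := a$ to obtain $(\beta^\mA(x) \fequiv \beta^\mAp(x')) \leq (a \fequiv \beta^\mA(x)) \fequiv (a \fequiv \beta^\mAp(x'))$, and since a biresiduum lies below either of its two residua, this is in turn below $(a \fequiv \beta^\mA(x)) \fto (a \fequiv \beta^\mAp(x'))$. Hence $\varphi_n(x,x') \leq (a \fequiv \beta^\mA(x)) \fto (a \fequiv \beta^\mAp(x'))$, so \eqref{fop: GDJSK 00} gives $\varphi_n(x,x') \fand (a \fequiv \beta^\mA(x)) \leq (a \fequiv \beta^\mAp(x'))$, and taking the supremum over $x \in A$ through the definition of composition yields \eqref{eq: KHSBS 2a}. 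Because the key estimate and $\fequiv$ are symmetric, the same chain with the roles of the two residua exchanged delivers \eqref{eq: KHSBS 2b}. I expect the whole difficulty to be concentrated in this single use of the biresiduum inequality \eqref{fop: GDJSK 150b}, which is precisely the ingredient that the simulation-only Lemma~\ref{lemma: KHSBS} does not need.
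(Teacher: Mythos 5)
Your proposal is correct and follows essentially the same route as the paper's proof: structural induction on $\alpha$, with the cases $\tau$, $(s \circ \beta)$ and $(\beta \land \gamma)$ handled as in Lemma~\ref{lemma: KHSBS}, and the crux case $(a \leftrightarrow \beta)$ resolved by combining both induction hypotheses into the pointwise estimate $\varphi_n(x,x') \leq (\beta^\mA(x) \fequiv \beta^\mAp(x'))$ and then applying~\eqref{fop: GDJSK 150b}. You merely spell out explicitly the adjunction steps and the passage from the biresiduum bound back to \eqref{eq: KHSBS 2a} and \eqref{eq: KHSBS 2b}, which the paper leaves implicit.
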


The proof of this lemma is given in the appendix.

The following theorem is about the Hennessy-Milner property of depth-bounded fuzzy bisimulations between fuzzy automata. 
Its proof is given in the appendix.

\begin{theorem}\label{theorem: HDJHA 2}
Suppose that $\mL$ is linear and $\fand$ is continuous. Let $\mA$ and $\mAp$ be image-finite fuzzy automata. For $n \in \NN$, let \mbox{$\varphi_n : A \times A' \to L$} be the fuzzy relation defined as follows:
\[
	\varphi_n(x,x') = \bigwedge_{\alpha \in \mFdbbs{n}}\! (\alpha^\mA(x) \fequiv \alpha^\mAp(x')).
\]
Then, $(\varphi_n)_{n \in \NN}$ is the greatest depth-bounded fuzzy bisimulation between $\mA$ and $\mAp$. 
\end{theorem}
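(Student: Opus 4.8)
The plan is to split the statement into two halves: first that the sequence $\Phi=(\varphi_n)_{n\in\NN}$ is itself a depth-bounded fuzzy bisimulation between $\mA$ and $\mAp$, and then that it dominates every such sequence. The second half is essentially free: Lemma~\ref{lemma: KHSBS 2} yields the inequality~\eqref{eq: HDJHS 2}, so every depth-bounded fuzzy bisimulation $\Psi=(\psi_n)_{n\in\NN}$ satisfies $\psi_n(x,x')\le\bigwedge_{\alpha\in\mFdbbs{n}}(\alpha^\mA(x)\fequiv\alpha^\mAp(x'))=\varphi_n(x,x')$, i.e.\ $\Psi\le\Phi$. Thus once $\Phi$ is shown to be a depth-bounded fuzzy bisimulation, it is automatically the greatest one. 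A convenient preliminary observation is that, since $\fequiv$ is symmetric, $\varphi_n^{-1}$ is exactly the relation obtained from the same definition with $\mA$ and $\mAp$ interchanged; hence it suffices to verify the simulation-type conditions \eqref{eq: HFKJA 1}, \eqref{eq: HFKJA 2} and \eqref{eq: HFKJA 3} for $\Phi$, because the symmetric conditions \eqref{eq: HFKJA 4} and \eqref{eq: HFKJA 5} then follow by running the same argument on $\varphi_n^{-1}$ (both $\mA$ and $\mAp$ are image-finite, so the swap is legitimate).

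The monotonicity condition \eqref{eq: HFKJA 1} is immediate from the inclusion $\mFdbbs{n-1}\subseteq\mFdbbs{n}$, since an infimum over a larger index set is smaller. For the terminal condition \eqref{eq: HFKJA 2} I would use that $\tau\in\mFdbbs{0}$ and $\tau^\mA=\tauA$, giving $\varphi_0(x,x')\le(\tauA(x)\fequiv\tauAp(x'))\le(\tauA(x)\fto\tauAp(x'))$; composing with $\tauA$ and applying \eqref{fop: GDJSK 60} yields $\varphi_0^{-1}\circ\tauA\le\tauAp$, and the symmetric $\fto$-direction gives \eqref{eq: HFKJA 4}.

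The real obstacle is the transition condition \eqref{eq: HFKJA 3}, i.e.\ $\varphi_n^{-1}\circ\deltaA_s\le\deltaAp_s\circ\varphi_{n-1}^{-1}$. Fixing $x\in A$, $x'\in A'$ and $y\in A$, it reduces to $d\le c$ where $d=\varphi_n(x,x')\fand\deltaA_s(x,y)$ and $c=\bigvee_{y'\in A'}(\deltaAp_s(x',y')\fand\varphi_{n-1}(y,y'))$. I would argue by contradiction using linearity: assume $c<d$. By image-finiteness of $\mAp$, the set $S=\{y'\mid\deltaAp_s(x',y')>0\}$ is finite, and for each $y'\in S$ we have $\deltaAp_s(x',y')\fand\varphi_{n-1}(y,y')\le c<d$. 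Continuity of $\fand$ and \eqref{eq.ContTNorm} give $\deltaAp_s(x',y')\fand\varphi_{n-1}(y,y')=\bigwedge_{\beta\in\mFdbbs{n-1}}\bigl(\deltaAp_s(x',y')\fand(\beta^\mA(y)\fequiv\beta^\mAp(y'))\bigr)$, so this infimum is strictly below $d$ and a single witness $\beta_{y'}\in\mFdbbs{n-1}$ satisfies $\deltaAp_s(x',y')\fand(\beta_{y'}^\mA(y)\fequiv\beta_{y'}^\mAp(y'))<d$. The key device is to normalise each witness: set $\gamma_{y'}=(\beta_{y'}^\mA(y)\leftrightarrow\beta_{y'})\in\mFdbbs{n-1}$, so $\gamma_{y'}^\mA(y)=1$ while $\gamma_{y'}^\mAp(y')=(\beta_{y'}^\mA(y)\fequiv\beta_{y'}^\mAp(y'))$ retains the small value. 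Conjoining over the finite set $S$ gives $\gamma=\bigwedge_{y'\in S}\gamma_{y'}\in\mFdbbs{n-1}$ with $\gamma^\mA(y)=1$ and $\gamma^\mAp(y'')\le\gamma_{y''}^\mAp(y'')$ for every $y''\in S$.

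Finally I would feed $\gamma$ back through the one-step formula $s\circ\gamma\in\mFdbbs{n}$. From the definition of $\varphi_n$ and $\fequiv\le\fto$ we obtain $\varphi_n(x,x')\fand(s\circ\gamma)^\mA(x)\le(s\circ\gamma)^\mAp(x')$, and since $(s\circ\gamma)^\mA(x)\ge\deltaA_s(x,y)\fand\gamma^\mA(y)=\deltaA_s(x,y)$, this gives $d\le(s\circ\gamma)^\mAp(x')=\bigvee_{y'\in S}(\deltaAp_s(x',y')\fand\gamma^\mAp(y'))$. But each summand is $\le\deltaAp_s(x',y')\fand\gamma_{y'}^\mAp(y')<d$, and a finite supremum of elements strictly below $d$ is strictly below $d$ in a linear lattice, yielding $d<d$, a contradiction. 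Hence $d\le c$, condition \eqref{eq: HFKJA 3} holds, and \eqref{eq: HFKJA 5} follows by the symmetric swap, so $\Phi$ is a depth-bounded fuzzy bisimulation and therefore the greatest one. I expect the crux to be precisely this formula-construction step — producing one $\mFdbbs{n-1}$ formula that is saturated at $y$ yet simultaneously separates $y$ from all competing $y'\in S$ — with continuity and linearity entering exactly to extract and recombine the finitely many witnesses.
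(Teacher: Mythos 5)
Your proposal is correct and follows essentially the same route as the paper's proof: greatness comes from Lemma~\ref{lemma: KHSBS 2} (via inequality~\eqref{eq: HDJHS 2}), and the bisimulation conditions are verified by the same contradiction argument — extracting witness formulas using continuity and linearity, normalizing each witness as $(\beta^\mA(y)\leftrightarrow\beta)$ so that it evaluates to $1$ at $y$, conjoining over the finitely many successors, and feeding the result into $s\circ(\cdot)$ to contradict the definition of $\varphi_n$. The only cosmetic difference is that you dispatch conditions~\eqref{eq: HFKJA 4} and~\eqref{eq: HFKJA 5} by a symmetry swap of $\mA$ and $\mAp$ (justified by the symmetry of $\fequiv$ and the image-finiteness of both automata), where the paper proves~\eqref{eq: HFKJA 4} directly from the $\tau$ formula and states that~\eqref{eq: HFKJA 5} is proved analogously.
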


%===============================================================================

\section{Computing Depth-Bounded Fuzzy Simulations and Bisimulations}
\label{section: computation}

In this section, let $\mA = \tuple{A, \deltaA, \sigmaA, \tauA}$ and $\mAp = \tuple{A', \deltaAp, \sigmaAp, \tauAp}$ be finite fuzzy automata over a finite alphabet $\Sigma$ and let $k$ be a natural number. We present algorithms that, given such inputs, compute the component $\varphi_k$ of the greatest depth-bounded fuzzy simulation (respectively, bisimulation) $(\varphi_i)_{i \in \NN}$ between $\mA$ and $\mAp$. 
We denote \mbox{$n = |A| + |A'|$} and \mbox{$m = |\deltaA| + |\deltaAp|$}, where $|\deltaA|$ is the number of tuples $\tuple{x,s,y} \in A \times \Sigma \times A$ such that $\deltaA_s(x,y) > 0$ and $|\deltaAp|$ has a similar meaning. 
Our algorithms have a time complexity of order $O(k(m+n)n)$, under the assumption that $|\Sigma|$ is a constant and the fuzzy operations $\land$, $\lor$, $\fand$ and $\fto$ can be done in constant time.

We assume that the representation of $\Sigma$, $\mA$ and $\mAp$ satisfies the following conditions:
\begin{itemize}
\item $\Sigma = 0..(|\Sigma| - 1)$, $A = 0..(|A|-1)$, $A' = 0..(|A'|-1)$, where $0..h$ denotes the set $\{0,1,\ldots,h\}$; 
\item $\sigmaA$, $\tauA$, $\sigmaAp$ and $\tauAp$ are represented as arrays; 
\item $\deltaA$ is represented as the array $\Succ$ of dimensions $|\Sigma| \times |A|$ such that, for $s \in \Sigma$ and $x \in A$, $\Succ[s][x]$ is a list of all pairs $(y,d)$ with $y \in A$ and $d \in L$ such that $d = \deltaA_s(x,y) > 0$;
\item $\deltaAp$ is represented as the array $\Succ'$ of dimensions $|\Sigma'| \times |A'|$ in a similar way.  
\end{itemize}

As a preprocessing, we construct 
\begin{itemize}
\item the array $\Pred$ of dimensions $|\Sigma| \times |A|$ such that, for $s \in \Sigma$ and $y \in A$, $\Pred[s][y]$ is a list of all pairs $(x,d)$ with $x \in A$ and $d \in L$ such that $d = \deltaA_s(x,y) > 0$;
\item the array $\Pred'$ of dimensions $|\Sigma| \times |A'|$ such that, for $s \in \Sigma$ and $y' \in A'$, $\Pred'[s][y']$ is a list of all pairs $(x',d)$ with $x' \in A'$ and $d \in L$ such that $d = \deltaAp_s(x',y') > 0$.
\end{itemize}

The names $\Succ$ and $\Pred$ stand for ``successor'' and ``predecessor'', respectively. 
The construction of the arrays $\Pred$ and $\Pred'$ can be done in time $O(m+n)$. 
We will write $\Succ_s$, $\Pred_s$, $\Succ'_s$ and $\Pred'_s$ to denote $\Succ[s]$, $\Pred[s]$, $\Succ'[s]$ and $\Pred'[s]$, respectively. 

\newcommand{\Bound}{\mathit{bound}}

\begin{algorithm}[t]
\caption{\CompDBFS\label{algCompDBFS}}
\Input{finite fuzzy automata $\mA = \tuple{A, \deltaA, \sigmaA, \tauA}$ and $\mAp = \tuple{A', \deltaAp, \sigmaAp, \tauAp}$ over a finite alphabet $\Sigma$, where $\deltaA$ and $\deltaAp$ are represented by the arrays $\Succ$ and $\Succ'$ specified in this section, together with a natural number $k$.}
\Output{the component $\varphi_k$ of the greatest depth-bounded fuzzy simulation $(\varphi_i)_{i \in \NN}$ between $\mA$ and $\mAp$.}
\LocalVariables{arrays $\varphi, \psi: A \times A' \to L$.}

\BlankLine

construct the array $\Pred$ specified in this section\label{step: algCompDBFS 1}\;
\ForEach{\em $x \in A$ and $x' \in A'$\label{step: algCompDBFS 2}}{
	$\varphi[x][x'] := (\tauA[x] \fto \tauAp[x'])$\label{step: algCompDBFS 3}\;
}

\ForEach{\em $i$ from $1$ to $k$\label{step: algCompDBFS 4}}{
	$\changed := \False$\;
	make $\psi$ a (deep) copy of $\varphi$\label{step: algCompDBFS 6}\;
	
	\ForEach{\em $s \in \Sigma$, $x' \in A'$ and $y \in A$\label{step: algCompDBFS 7}}{
		$\Bound := 0$\label{step: algCompDBFS 8}\;
		\ForEach{$(y',d) \in \Succ'_s[x']$}{
			$\Bound := \Bound \lor (d \fand \psi[y][y'])$\label{step: algCompDBFS 10}\;
		}
		\ForEach{$(x,d) \in \Pred_s[y]$\label{step: algCompDBFS 11}}{
			\If{$\varphi[x][x'] \not\leq (d \fto \Bound)$}{
				$\varphi[x][x'] := \varphi[x][x'] \land (d \fto \Bound)$\;
				$\changed := \True$\label{step: algCompDBFS 14}\;
			}
		}
	}

	\lIf{not $\changed$\label{step: algCompDBFS 15}}{\Break}
}

\Return $\varphi$\;
\end{algorithm}

\begin{algorithm}[H]
	\caption{\CompDBFB\label{algCompDBFB}}
	\Input{finite fuzzy automata $\mA = \tuple{A, \deltaA, \sigmaA, \tauA}$ and $\mAp = \tuple{A', \deltaAp, \sigmaAp, \tauAp}$ over a finite alphabet $\Sigma$, where $\deltaA$ and $\deltaAp$ are represented by the arrays $\Succ$ and $\Succ'$ specified in this section, together with a natural number $k$.}
	\Output{the component $\varphi_k$ of the greatest depth-bounded fuzzy bisimulation $(\varphi_i)_{i \in \NN}$ between $\mA$ and $\mAp$.}
	\LocalVariables{arrays $\varphi, \psi: A \times A' \to L$.}
	
	\BlankLine
	
	construct the arrays $\Pred$ and $\Pred'$ specified in this section\label{step: algCompDBFB 1}\;
	
	\ForEach{\em $x \in A$ and $x' \in A'$\label{step: algCompDBFB 2}}{
		$\varphi[x][x'] := (\tauA[x] \fequiv \tauAp[x'])$\label{step: algCompDBFB 3}\;
	}
	
	\ForEach{\em $i$ from $1$ to $k$\label{step: algCompDBFB 4}}{
		$\changed := \False$\;
		make $\psi$ a (deep) copy of $\varphi$\label{step: algCompDBFB 6}\;
		
		\ForEach{\em $s \in \Sigma$, $x' \in A'$ and $y \in A$\label{step: algCompDBFB 7}}{
			$\Bound := 0$\label{step: algCompDBFB 8}\;
			\ForEach{$(y',d) \in \Succ'_s[x']$}{
				$\Bound := \Bound \lor (d \fand \psi[y][y'])$\label{step: algCompDBFB 10}\;
			}
			\ForEach{$(x,d) \in \Pred_s[y]$\label{step: algCompDBFB 11}}{
				\If{$\varphi[x][x'] \not\leq (d \fto \Bound)$}{
					$\varphi[x][x'] := \varphi[x][x'] \land (d \fto \Bound)$\;
					$\changed := \True$\label{step: algCompDBFB 14}\;
				}
			}
		}
		
		\ForEach{\em $s \in \Sigma$, $x \in A$ and $y' \in A$\label{step: algCompDBFB 15}}{
			$\Bound := 0$\label{step: algCompDBFB 16}\;
			\ForEach{$(y,d) \in \Succ_s[x]$}{
				$\Bound := \Bound \lor (d \fand \psi[y][y'])$\label{step: algCompDBFB 18}\;
			}
			\ForEach{$(x',d) \in \Pred'_s[y']$\label{step: algCompDBFB 19}}{
				\If{$\varphi[x][x'] \not\leq (d \fto \Bound)$}{
					$\varphi[x][x'] := \varphi[x][x'] \land (d \fto \Bound)$\;
					$\changed := \True$\label{step: algCompDBFB 22}\;
				}
			}
		}
		
		\lIf{not $\changed$\label{step: algCompDBFB 23}}{\Break}
	}
	
	\Return $\varphi$\;
\end{algorithm}

Algorithm~\ref{algCompDBFS} (on page~\pageref{algCompDBFS}) is our algorithm for computing the component $\varphi_k$ of the greatest depth-bounded fuzzy simulation $(\varphi_i)_{i \in \NN}$ between $\mA$ and $\mAp$. It first constructs the array $\Pred$ specified above. Then, by the statements \ref{step: algCompDBFS 2} and \ref{step: algCompDBFS 3}, it initially sets $\varphi$ to $\varphi_0$, which is the greatest fuzzy relation satisfying~\eqref{eq: HFKJA 2}. 
Consider the $i$-th iteration of the main loop in the statements \ref{step: algCompDBFS 4}-\ref{step: algCompDBFS 15} of the algorithm. Its aim is to update $\varphi$ from $\varphi_{i-1}$ to $\varphi_i$. That is, the invariant of the loop (which holds before each iteration) is $\varphi = \varphi_{i-1}$. First, a (deep) copy of $\varphi$ is created and stored in $\psi$, with the intention to keep $\varphi_{i-1}$ without modifications during the iteration of the loop. To guarantee the condition~\eqref{eq: HFKJA 3} with~$n$ replaced by~$i$, for each $s \in \Sigma$, $x' \in A'$ and $y \in A$, the statements~\ref{step: algCompDBFS 8}-\ref{step: algCompDBFS 10} set $\Bound$ to $(\deltaAp_s \circ \varphi_{i-1}^{-1})(x',y)$, then the statements~\ref{step: algCompDBFS 11}-\ref{step: algCompDBFS 14} minimally reduce $\varphi$ so that $(\varphi^{-1} \circ \deltaA_s)(x',y) \leq \Bound$. That is, starting from $\varphi = \varphi_{i-1}$, the statements \ref{step: algCompDBFS 7}-\ref{step: algCompDBFS 14} minimally reduce $\varphi$ so that $\varphi^{-1} \circ \deltaA_s \leq \deltaAp_s \circ \varphi_{i-1}^{-1}$. By~\eqref{eq: HFKJA 1} and~\eqref{eq: HFKJA 3} with $n$ replaced by $i$, this must result in $\varphi = \varphi_i$ and the mentioned invariant of the loop holds. The main loop (in the statements \ref{step: algCompDBFS 4}-\ref{step: algCompDBFS 15}) uses the flag $\changed$, which is turned off before each iteration and turned on when $\varphi$ is reduced. So, if $\changed = \False$ at the end of an iteration, then the greatest fixpoint for $\varphi$ has been reached and the loop can be terminated. We conclude that Algorithm~\ref{algCompDBFS} is correct. This directly follows from the justified invariant of the main loop and the use of the flag $\changed$. 

We now analyze the complexity of Algorithm~\ref{algCompDBFS}:
\begin{itemize}
\item The statement~\ref{step: algCompDBFS 1} runs in time $O(m+n)$.
\item The loop in the statements~\ref{step: algCompDBFS 2} and~\ref{step: algCompDBFS 3} runs in time $O(n^2)$.
\item The statement~\ref{step: algCompDBFS 6} runs in time $O(n^2)$.
\item The loop in the statements~\ref{step: algCompDBFS 7}-\ref{step: algCompDBFS 14} runs in time $O((m+n)n)$. Recall that $|\Sigma|$ is a constant. 
\item The loop in the statements~\ref{step: algCompDBFS 4}-\ref{step: algCompDBFS 14} runs in time $O(k(m+n)n)$. 
\end{itemize} 
Totally, Algorithm~\ref{algCompDBFS} runs in time $O(k(m+n)n)$. 
We arrive at the following result.

\begin{theorem}
Algorithm~\ref{algCompDBFS} is correct. That is, given finite fuzzy automata $\mA$ and $\mAp$ over a finite alphabet $\Sigma$ together with a natural number $k$, it returns the component $\varphi_k$ of the greatest depth-bounded fuzzy simulation $(\varphi_i)_{i \in \NN}$ between $\mA$ and $\mAp$. It runs in time $O(k(m+n)n)$, where $n = |A| + |A'|$ and $m = |\deltaA| + |\deltaAp|$, under the assumption that $|\Sigma|$ is a constant and the fuzzy operations $\land$, $\lor$, $\fand$ and $\fto$ can be done in constant time. 
\end{theorem}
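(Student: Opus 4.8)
The plan is to first pin down an explicit recursive description of the greatest depth-bounded fuzzy simulation and then check, step by step, that the algorithm realizes this recursion. Define $\psi_0(x,x') = \tauA(x) \fto \tauAp(x')$ and, for $i \geq 1$, let $\psi_i$ be the greatest fuzzy relation $\psi \leq \psi_{i-1}$ satisfying $\psi^{-1} \circ \deltaA_s \leq \deltaAp_s \circ \psi_{i-1}^{-1}$ for every $s \in \Sigma$. I would first argue that $(\psi_i)_{i \in \NN}$ is the greatest depth-bounded fuzzy simulation between $\mA$ and $\mAp$. That $(\psi_i)_{i \in \NN}$ is itself such a simulation is immediate from the construction; maximality is proved by induction on $i$: given any depth-bounded fuzzy simulation $(\varphi_i)_{i \in \NN}$ with $\varphi_{i-1} \leq \psi_{i-1}$, monotonicity of $\circ$ gives $\varphi_i^{-1} \circ \deltaA_s \leq \deltaAp_s \circ \varphi_{i-1}^{-1} \leq \deltaAp_s \circ \psi_{i-1}^{-1}$, so $\varphi_i$ competes in the maximization defining $\psi_i$ and hence $\varphi_i \leq \psi_i$. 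The point I would stress is that the constraint determining $\psi_i$ refers to $\psi_{i-1}$ only, never to $\psi_i$ itself, so each component may be maximized greedily; enlarging $\psi_{i-1}$ only relaxes the constraint on $\psi_i$, so there is no circularity and no tension between components.

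Next I would exhibit the closed form of this maximization and match it to the code. Using the adjunction~\eqref{fop: GDJSK 00}, the constraint $\psi^{-1} \circ \deltaA_s \leq \deltaAp_s \circ \psi_{i-1}^{-1}$ together with $\psi \leq \psi_{i-1}$ is equivalent to
\[
	\psi(x,x') \leq \psi_{i-1}(x,x') \land \bigwedge_{s \in \Sigma} \bigwedge_{y \in A} \big(\deltaA_s(x,y) \fto (\deltaAp_s \circ \psi_{i-1}^{-1})(x',y)\big),
\]
so $\psi_i(x,x')$ equals the right-hand side. I would then verify that the $i$-th iteration of the main loop computes exactly this, under the loop invariant $\varphi = \varphi_{i-1}$ on entry (with $\varphi_0$ supplied by steps~\ref{step: algCompDBFS 2}--\ref{step: algCompDBFS 3}, which give the greatest relation satisfying~\eqref{eq: HFKJA 2}, again by adjunction). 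Concretely: the deep copy $\psi$ freezes $\varphi_{i-1}$; for each triple $(s,x',y)$ the first inner loop sets $\Bound = (\deltaAp_s \circ \psi_{i-1}^{-1})(x',y)$, where terms with $\deltaAp_s(x',y') = 0$ may be dropped, which is why ranging over $\Succ'_s[x']$ suffices; and the second inner loop meets $\varphi[x][x']$ with $\deltaA_s(x,y) \fto \Bound$ for every $(x,d) \in \Pred_s[y]$. Entries with $\deltaA_s(x,y) = 0$ contribute the factor $0 \fto \Bound = 1$ and so may be skipped, which justifies ranging over $\Pred_s[y]$ only. Accumulating these meets over all $(s,y)$ reproduces the displayed formula, so the iteration turns $\varphi_{i-1}$ into $\varphi_i$. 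The role of the frozen copy $\psi$ is essential: computing $\Bound$ from the live, already-reduced $\varphi$ would instead drive $\varphi$ toward a fixpoint and yield the wrong relation.

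For termination I would note that if no entry changes during iteration $i$, then $\varphi_i = \varphi_{i-1}$, so $\varphi_i$ satisfies the transition constraint with its own inverse on the right-hand side; the recursion then gives $\psi_j = \psi_i$ for all $j \geq i$, so the returned $\varphi$ equals $\varphi_k$ and the early exit is sound. For the complexity, the preprocessing of $\Pred$ costs $O(m+n)$, the initialization and each deep copy cost $O(n^2)$, and one pass of the inner block costs $O((m+n)n)$: the $\Bound$ computations sum to $O(nm)$ because, for each $y \in A$, they cost $\sum_{s,x'}|\Succ'_s[x']| = |\deltaAp|$; the $\Pred$ scans likewise sum to $O(nm)$ since for each $x' \in A'$ they cost $\sum_{s,y}|\Pred_s[y]| = |\deltaA|$; and the bookkeeping over all $O(n^2)$ triples (recall $|\Sigma|$ is constant) adds $O(n^2)$. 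Multiplying by at most $k$ iterations gives the claimed $O(k(m+n)n)$.

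The main obstacle I anticipate is the maximality argument of the first paragraph, specifically making precise that the greatest sequence arises from a \emph{forward} greedy recursion rather than from a \emph{simultaneous} greatest-fixpoint over all components; the monotonicity of $\circ$ in its second argument is exactly what removes the circularity. Everything else is a routine translation between the closed-form meet and the index ranges $\Succ'_s[x']$ and $\Pred_s[y]$ used in the loops, where the only care needed is checking that the dropped zero-valued entries affect neither the suprema nor the meets being computed.
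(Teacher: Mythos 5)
Your proposal is correct and follows essentially the same route as the paper: establish the loop invariant $\varphi = \varphi_{i-1}$, show that one pass of the inner loops turns $\varphi_{i-1}$ into $\varphi_i$, justify the early exit via the $\changed$ flag, and charge $O((m+n)n)$ per iteration of the main loop. The one difference worth noting is that you prove something the paper merely asserts: the paper's argument says that minimally reducing $\varphi = \varphi_{i-1}$ so that $\varphi^{-1} \circ \deltaA_s \leq \deltaAp_s \circ \varphi_{i-1}^{-1}$ ``must result in $\varphi = \varphi_i$,'' thereby taking for granted that the greatest depth-bounded fuzzy simulation is generated by the forward greedy recursion (each component maximized given only its predecessor). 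Your induction --- monotonicity of $\circ$ gives $\varphi_i^{-1} \circ \deltaA_s \leq \deltaAp_s \circ \varphi_{i-1}^{-1} \leq \deltaAp_s \circ \psi_{i-1}^{-1}$, hence $\varphi_i \leq \psi_i$ --- supplies exactly this missing justification, and your adjunction-based closed form for $\psi_i$ makes the phrase ``minimally reduce'' precise, including why ranging only over $\Succ'_s[x']$ and $\Pred_s[y]$ loses nothing. The termination argument and the complexity accounting match the paper's; your write-up is the same algorithmic argument carried out with a rigor the paper leaves implicit.
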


Algorithm~\ref{algCompDBFB} (on page~\pageref{algCompDBFB}) is our algorithm for computing the component $\varphi_k$ of the greatest depth-bounded fuzzy bisimulation $(\varphi_i)_{i \in \NN}$ between $\mA$ and $\mAp$. It can be explained analogously as done for Algorithm~\ref{algCompDBFS} before. The algorithm first constructs the arrays $\Pred$ and $\Pred'$ specified at the beginning of this section. Then, by the statements \ref{step: algCompDBFB 2} and \ref{step: algCompDBFB 3}, it initially sets $\varphi$ to $\varphi_0$, which is the greatest fuzzy relation satisfying~\eqref{eq: HFKJA 2} and~\eqref{eq: HFKJA 4}. 
Consider the $i$-th iteration of the main loop in the statements \ref{step: algCompDBFB 4}-\ref{step: algCompDBFB 23} of the algorithm. Its aim is to update $\varphi$ from $\varphi_{i-1}$ to $\varphi_i$. That is, the invariant of the loop (which holds before each iteration) is $\varphi = \varphi_{i-1}$. First, a (deep) copy of $\varphi$ is created and stored in $\psi$, with the intention to keep $\varphi_{i-1}$ without modifications during the iteration of the loop. 
To guarantee the condition~\eqref{eq: HFKJA 3} with $n$ replaced by $i$, for each $s \in \Sigma$, $x' \in A'$ and $y \in A$, the statements~\ref{step: algCompDBFB 8}-\ref{step: algCompDBFB 10} set $\Bound$ to $(\deltaAp_s \circ \varphi_{i-1}^{-1})(x',y)$, then the statements~\ref{step: algCompDBFB 11}-\ref{step: algCompDBFB 14} minimally reduce $\varphi$ so that $(\varphi^{-1} \circ \deltaA_s)(x',y) \leq \Bound$. 
Similarly, to guarantee the condition~\eqref{eq: HFKJA 5} with $n$ replaced by $i$, for each $s \in \Sigma$, $x \in A$ and $y' \in A'$, the statements~\ref{step: algCompDBFB 16}-\ref{step: algCompDBFB 18} set $\Bound$ to $(\deltaA_s \circ \varphi_{i-1})(x,y')$, then the statements~\ref{step: algCompDBFB 19}-\ref{step: algCompDBFB 22} minimally reduce $\varphi$ so that $(\varphi \circ \deltaAp_s)(x,y') \leq \Bound$. 
That is, starting from $\varphi = \varphi_{i-1}$, the statements \ref{step: algCompDBFB 7}-\ref{step: algCompDBFB 22} minimally reduce $\varphi$ so that 
$\varphi^{-1} \circ \deltaA_s \leq \deltaAp_s \circ \varphi_{i-1}^{-1}$ and
$\varphi \circ \deltaAp_s \leq \deltaA_s \circ \varphi_{i-1}$. 
By~\eqref{eq: HFKJA 1}, \eqref{eq: HFKJA 3} and~\eqref{eq: HFKJA 5} with $n$ replaced by $i$, this must result in $\varphi = \varphi_i$ and the mentioned invariant of the loop holds. The main loop (in the statements \ref{step: algCompDBFB 4}-\ref{step: algCompDBFB 23}) uses the flag $\changed$, which is turned off before each iteration and turned on when $\varphi$ is reduced. So, if $\changed = \False$ at the end of an iteration, then the greatest fixpoint for $\varphi$ has been reached and the loop can be terminated. We conclude that Algorithm~\ref{algCompDBFB} is correct. This directly follows from the justified invariant of the main loop and the use of the flag $\changed$. 
The complexity analysis for Algorithm~\ref{algCompDBFB} is similar to the one for Algorithm~\ref{algCompDBFS}. Thus, we arrive at the following result.

\begin{theorem}
Algorithm~\ref{algCompDBFB} is correct. That is, given finite fuzzy automata $\mA$ and $\mAp$ over a finite alphabet $\Sigma$ together with a natural number $k$, it returns the component $\varphi_k$ of the greatest depth-bounded fuzzy bisimulation $(\varphi_i)_{i \in \NN}$ between $\mA$ and $\mAp$. It runs in time $O(k(m+n)n)$, where $n = |A| + |A'|$ and $m = |\deltaA| + |\deltaAp|$, under the assumption that $|\Sigma|$ is a constant and the fuzzy operations $\land$, $\lor$, $\fand$ and $\fto$ can be done in constant time. 
\end{theorem}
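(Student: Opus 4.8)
The plan is to prove correctness through a loop invariant and then bound the running time by a direct accounting of the nested loops, closely paralleling the argument already given in the running text for Algorithm~\ref{algCompDBFS}. The cornerstone is a recursive, layer-by-layer characterization of the greatest depth-bounded fuzzy bisimulation $(\varphi_i)_{i\in\NN}$, whose existence is guaranteed by Proposition~\ref{prop: JHFNW 2}(e): namely, $\varphi_0$ is the greatest fuzzy relation satisfying~\eqref{eq: HFKJA 2} and~\eqref{eq: HFKJA 4}, and for $i\geq 1$, $\varphi_i$ is the greatest fuzzy relation $\psi$ with $\psi\leq\varphi_{i-1}$, $\psi^{-1}\circ\deltaA_s\leq\deltaAp_s\circ\varphi_{i-1}^{-1}$ and $\psi\circ\deltaAp_s\leq\deltaA_s\circ\varphi_{i-1}$ for all $s\in\Sigma$. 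I would establish this by showing that the sequence so constructed is itself a depth-bounded fuzzy bisimulation, and that any depth-bounded fuzzy bisimulation $(\psi_i)$ satisfies $\psi_i\leq\varphi_i$ for all $i$ by induction on $i$, using monotonicity of $\circ$ together with $\psi_{i-1}\leq\varphi_{i-1}$ to replace $\psi_{i-1}$ by $\varphi_{i-1}$ on the right-hand sides of the step conditions.

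Next I would verify the loop invariant that $\varphi=\varphi_{i-1}$ holds before the $i$th iteration of the main loop. For the base, statements~\ref{step: algCompDBFB 2}--\ref{step: algCompDBFB 3} assign $\varphi[x][x']:=(\tauA[x]\fequiv\tauAp[x'])$; by the adjunction~\eqref{fop: GDJSK 00}, the conditions~\eqref{eq: HFKJA 2} and~\eqref{eq: HFKJA 4} are equivalent to $\varphi_0(x,x')\leq(\tauA(x)\fto\tauAp(x'))$ and $\varphi_0(x,x')\leq(\tauAp(x')\fto\tauA(x))$ for all $x,x'$, so the greatest admissible $\varphi_0$ is exactly their biresiduum. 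For the inductive step, the deep copy into $\psi$ freezes $\varphi_{i-1}$, so that $\Bound$ computed in statements~\ref{step: algCompDBFB 8}--\ref{step: algCompDBFB 10} equals $(\deltaAp_s\circ\varphi_{i-1}^{-1})(x',y)$ and $\Bound$ computed in statements~\ref{step: algCompDBFB 16}--\ref{step: algCompDBFB 18} equals $(\deltaA_s\circ\varphi_{i-1})(x,y')$, independently of the updates made to $\varphi$. The loop over $\Pred_s[y]$ then replaces $\varphi[x][x']$ by $\varphi[x][x']\land(\deltaA_s(x,y)\fto\Bound)$, which, again by~\eqref{fop: GDJSK 00}, is precisely the minimal reduction enforcing $(\varphi^{-1}\circ\deltaA_s)(x',y)\leq\Bound$; the constraint is vacuous whenever $\deltaA_s(x,y)=0$ since then $0\fto\Bound=1$, which is exactly why iterating over the predecessor list alone suffices. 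The second inner loop is symmetric, exchanging the roles of $\mA$ and $\mAp$ and using $\Succ_s$ and $\Pred'_s$. Hence after both loops $\varphi$ is the greatest relation $\leq\varphi_{i-1}$ meeting the two step conditions with $\varphi_{i-1}$ on the right, i.e.\ $\varphi=\varphi_i$ by the recursive characterization.

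Correctness then follows: after $k$ iterations $\varphi=\varphi_k$. For the early exit, if $\changed=\False$ at the end of iteration $i$ then $\varphi_i=\varphi_{i-1}$, so re-running the identical computation would give $\varphi_{i+1}=\varphi_i$, and by induction the sequence is constant from index $i-1$ onwards; since $i\leq k$, the returned value equals $\varphi_k$. For the complexity I would account for each part separately: constructing $\Pred$ and $\Pred'$ costs $O(m+n)$, and the initialization and each deep copy cost $O(|A|\,|A'|)=O(n^2)$. In one pass of the first inner loop, computing all the $\Bound$ values costs $\sum_{s,x',y}|\Succ'_s[x']|=|A|\cdot|\deltaAp|=O(mn)$, the predecessor loops cost $\sum_{s,x',y}|\Pred_s[y]|=|A'|\cdot|\deltaA|=O(mn)$, and the $O(1)$ per-triple overhead contributes $O(|\Sigma|\,|A'|\,|A|)=O(n^2)$; summing gives $O(n^2+mn)=O((m+n)n)$, and the symmetric second loop is identical. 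As the main loop runs at most $k$ times, the total is $O(m+n)+O(k(m+n)n)=O(k(m+n)n)$, as claimed.

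I expect the main obstacle to be the recursive characterization of $(\varphi_i)$ in the first paragraph, since it is what licenses the purely local, single-layer update carried out by one iteration of the main loop; the delicate point is recognizing that freezing $\varphi_{i-1}$ in $\psi$, rather than reading the running $\varphi$, is exactly what makes the computed layer equal to $\varphi_i$ and not some strictly smaller relation.
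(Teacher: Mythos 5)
Your proposal is correct and takes essentially the same approach as the paper's own proof: the loop invariant $\varphi = \varphi_{i-1}$, the role of the frozen copy $\psi$ in making each iteration compute the minimal reduction enforcing the one-step conditions, the fixpoint justification of the early exit via the $\changed$ flag, and the same loop-by-loop complexity accounting. The only difference is one of rigor rather than route: you explicitly state and prove the recursive layer-by-layer characterization of the greatest depth-bounded fuzzy bisimulation (greatest relation below $\varphi_{i-1}$ satisfying the step conditions with $\varphi_{i-1}$ on the right), which the paper merely asserts when it claims that the minimal reduction must result in $\varphi = \varphi_i$ by~\eqref{eq: HFKJA 1}, \eqref{eq: HFKJA 3} and~\eqref{eq: HFKJA 5}.
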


We have implemented Algorithms~\ref{algCompDBFS} and~\ref{algCompDBFB} in Python and made the program publicly available~\cite{BFBA-prog}. The program uses the unit interval $[0,1]$ with the usual order as the lattice. The user can experiment with it using a built-in or user-defined t-norm together with its corresponding residuum. 

%===============================================================================

\section{Related Work}
\label{section: related work}

Earlier works on simulations or bisimulations for fuzzy systems include the works~\cite{CaoCK11,CaoSWC13} on crisp bisimulations between fuzzy transition systems (FTSs), the works~\cite{CiricIDB12,CiricIJD12,Jancic14} on simulations and bisimulations between fuzzy automata, the works~\cite{EleftheriouKN12,Fan15} on crisp/fuzzy bisimulations for modal logics and the works~\cite{ai/FanL14,IgnjatovicCS15} on crisp/fuzzy bisimulations for weighted/fuzzy social networks. Simulations and bisimulations introduced and studied in~\cite{CiricIDB12,CiricIJD12,Jancic14} are fuzzy relations, but they compare fuzzy automata in a crisp manner. 

Concerning logical characterizations of simulations and bisimulations, there are the works~\cite{DBLP:journals/ijar/PanC0C14,DBLP:journals/ijar/PanLC15,fuzzIEEE/NguyenN21,ijar/Nguyen21,DBLP:journals/fss/WuD16} on logical characterizations of crisp or fuzzy simulations between FTSs, the works~\cite{DBLP:journals/ijar/WuCHC18,DBLP:journals/fss/WuCBD18,DBLP:journals/fss/WuD16} on logical characterizations of crisp bisimulations between FTSs, the works~\cite{Fan15,aml/MartiM18,fuin/Diaconescu20,NguyenFSS2021} on logical characterizations of crisp or fuzzy bisimulations for modal logics, the works~\cite{FSS2020,tfs/NguyenN23,Nguyen-TFS2019} on logical characterizations of crisp or fuzzy bisimulations/bisimilarity for fuzzy description logics, and the work~\cite{ai/FanL14} on logical characterizations of crisp/fuzzy bisimulations for weighted social networks. 

Concerning computation of simulations and bisimulations, there are the works~\cite{CompCB-arxiv,StanimirovicSC2019,DBLP:journals/fss/WuCBD18} on computing crisp bisimulations for fuzzy structures (FTSs, fuzzy automata or fuzzy labeled graphs), the works~\cite{CiricIJD12,IgnjatovicCS15,MNS.22,isci/Nguyen23,FuzzyMinimaxNets,TFS2020} on computing fuzzy bisimulations for fuzzy structures (fuzzy automata, fuzzy social networks, fuzzy labeled graphs or fuzzy interpretations in fuzzy description logics), the works~\cite{CiricIJD12,FuzzyMinimaxNets,TFS2020} on computing fuzzy simulations for fuzzy structures (fuzzy automata, fuzzy labeled graphs or fuzzy interpretations in fuzzy description logics), and the work~\cite{DBLP:journals/jifs/Nguyen22} on computing crisp simulations between FTSs. 
The currently known algorithms for computing the greatest fuzzy simulation or bisimulation between two finite fuzzy labeled graphs (which can represents various kinds of fuzzy structures, including fuzzy automata) have an exponential time complexity when the {\L}ukasiewicz or product structure of fuzzy values is used~\cite{FuzzyMinimaxNets}. The procedure given in~\cite{IgnjatovicCS15} for computing the greatest fuzzy bisimulation between two fuzzy social networks runs in time $O(ln^5)$, where $n$ is the size of the networks and $l$ is the number of different fuzzy values generated by the procedure. This value $l$ may be infinite when the product structure is used, or arbitrarily big (independently from the input size) when the {\L}ukasiewicz structure is used. 

Other notable related works include~\cite{DuZ18a,jcss/CIRIC2010,MicicJS18,DivroodiHNN18,QiaoZ21,fss/QIAO2023,tfs/QIAO2023}. In~\cite{DivroodiHNN18} depth-bounded bisimulations between (crisp) interpretations in description logics are used for concept learning. 
The work~\cite{QiaoZ21} defines limited approximate simulations and bisimulations for the so-called quantitative fuzzy approximation spaces (QFASs). They act as crisp binary relations on the set of states of a QFAS. The works \cite{fss/QIAO2023,tfs/QIAO2023} introduce the notions of bisimulations for FTSs that are based on a fuzzy relational lifting method via fuzzy similarity measures induced by residua (implications) in complete residuated lattices.

%===============================================================================

\section{Conclusions}
\label{section: cons}

We have introduced the notions of depth-bounded fuzzy simulations/bisimulations between fuzzy automata. We have proved that, under some light conditions, they give approximations of fuzzy simulations/bisimulations (Theorems~\ref{theorem: HDKAK} and~\ref{theorem: HDKAK 2}). These approximations are good in that, on one hand, Theorem~\ref{theorem: JHFLS} (respectively, \ref{theorem: HGFKW}) states that the fuzzy length-bounded languages recognized by a fuzzy automaton are fuzzily preserved by depth-bounded fuzzy simulations (respectively, fuzzily invariant under depth-bounded fuzzy bisimulations), while on the other hand, they overcome the unexpected phenomenon of fuzzy simulations/bisimulations discussed in the introduction section. We have provided a logical characterization of the greatest depth-bounded fuzzy simulation or bisimulation between two fuzzy automata (Theorems~\ref{theorem: HDJHA} and~\ref{theorem: HDJHA 2}). We have also given polynomial-time algorithms for computing the $n$th component of the greatest depth-bounded fuzzy simulation (respectively, bisimulation) between two finite fuzzy automata. These algorithms are of a particular importance in the context that the currently known algorithms for computing the greatest fuzzy (bi)simulation between two finite fuzzy automata have an exponential time complexity when the {\L}ukasiewicz or product structure of fuzzy values is used~\cite{FuzzyMinimaxNets}.

%===============================================================================

\bibliography{BSfDL}

\begin{thebibliography}{44}
\expandafter\ifx\csname natexlab\endcsname\relax\def\natexlab#1{#1}\fi
\providecommand{\url}[1]{\texttt{#1}}
\providecommand{\href}[2]{#2}
\providecommand{\path}[1]{#1}
\providecommand{\DOIprefix}{doi:}
\providecommand{\ArXivprefix}{arXiv:}
\providecommand{\URLprefix}{URL: }
\providecommand{\Pubmedprefix}{pmid:}
\providecommand{\doi}[1]{\href{http://dx.doi.org/#1}{\path{#1}}}
\providecommand{\Pubmed}[1]{\href{pmid:#1}{\path{#1}}}
\providecommand{\bibinfo}[2]{#2}
\ifx\xfnm\relax \def\xfnm[#1]{\unskip,\space#1}\fi
%Type = Book
\bibitem[{B\v{e}lohl{\'a}vek(2002)}]{Belohlavek2002}
\bibinfo{author}{B\v{e}lohl{\'a}vek, R.}, \bibinfo{year}{2002}.
\newblock \bibinfo{title}{Fuzzy Relational Systems: Foundations and
  Principles}.
\newblock \bibinfo{publisher}{Kluwer}.
%Type = Article
\bibitem[{Cao et~al.(2011)Cao, Chen and Kerre}]{CaoCK11}
\bibinfo{author}{Cao, Y.}, \bibinfo{author}{Chen, G.}, \bibinfo{author}{Kerre,
  E.}, \bibinfo{year}{2011}.
\newblock \bibinfo{title}{Bisimulations for fuzzy-transition systems}.
\newblock \bibinfo{journal}{{IEEE} Trans. Fuzzy Systems} \bibinfo{volume}{19},
  \bibinfo{pages}{540--552}.
\newblock \DOIprefix\doi{10.1109/TFUZZ.2011.2117431}.
%Type = Article
\bibitem[{Cao et~al.(2013)Cao, Sun, Wang and Chen}]{CaoSWC13}
\bibinfo{author}{Cao, Y.}, \bibinfo{author}{Sun, S.}, \bibinfo{author}{Wang,
  H.}, \bibinfo{author}{Chen, G.}, \bibinfo{year}{2013}.
\newblock \bibinfo{title}{A behavioral distance for fuzzy-transition systems}.
\newblock \bibinfo{journal}{{IEEE} Trans. Fuzzy Systems} \bibinfo{volume}{21},
  \bibinfo{pages}{735--747}.
\newblock \DOIprefix\doi{10.1109/TFUZZ.2012.2230177}.
%Type = Article
\bibitem[{{\'C}iri{\'c} et~al.(2012a){\'C}iri{\'c}, Ignjatovi{\'c},
  Damljanovi{\'c} and Ba\v{s}i{\'c}}]{CiricIDB12}
\bibinfo{author}{{\'C}iri{\'c}, M.}, \bibinfo{author}{Ignjatovi{\'c}, J.},
  \bibinfo{author}{Damljanovi{\'c}, N.}, \bibinfo{author}{Ba\v{s}i{\'c}, M.},
  \bibinfo{year}{2012}a.
\newblock \bibinfo{title}{Bisimulations for fuzzy automata}.
\newblock \bibinfo{journal}{Fuzzy Sets and Systems} \bibinfo{volume}{186},
  \bibinfo{pages}{100--139}.
\newblock \DOIprefix\doi{10.1016/j.fss.2011.07.003}.
%Type = Article
\bibitem[{{\'C}iri{\'c} et~al.(2012b){\'C}iri{\'c}, Ignjatovi{\'c},
  Jan\u{c}i{\'c} and Damljanovi{\'c}}]{CiricIJD12}
\bibinfo{author}{{\'C}iri{\'c}, M.}, \bibinfo{author}{Ignjatovi{\'c}, J.},
  \bibinfo{author}{Jan\u{c}i{\'c}, I.}, \bibinfo{author}{Damljanovi{\'c}, N.},
  \bibinfo{year}{2012}b.
\newblock \bibinfo{title}{Computation of the greatest simulations and
  bisimulations between fuzzy automata}.
\newblock \bibinfo{journal}{Fuzzy Sets and Systems} \bibinfo{volume}{208},
  \bibinfo{pages}{22--42}.
\newblock \DOIprefix\doi{10.1016/j.fss.2012.05.006}.
%Type = Article
\bibitem[{\'{C}iri\'{c} et~al.(2010)\'{C}iri\'{c}, Stamenkovi\'{c},
  Ignjatovi\'{c} and Petkovi\'{c}}]{jcss/CIRIC2010}
\bibinfo{author}{\'{C}iri\'{c}, M.}, \bibinfo{author}{Stamenkovi\'{c}, A.},
  \bibinfo{author}{Ignjatovi\'{c}, J.}, \bibinfo{author}{Petkovi\'{c}, T.},
  \bibinfo{year}{2010}.
\newblock \bibinfo{title}{Fuzzy relation equations and reduction of fuzzy
  automata}.
\newblock \bibinfo{journal}{Journal of Computer and System Sciences}
  \bibinfo{volume}{76}, \bibinfo{pages}{609--633}.
\newblock \DOIprefix\doi{10.1016/j.jcss.2009.10.015}.
%Type = Article
\bibitem[{Diaconescu(2020)}]{fuin/Diaconescu20}
\bibinfo{author}{Diaconescu, D.}, \bibinfo{year}{2020}.
\newblock \bibinfo{title}{Modal equivalence and bisimilarity in many-valued
  modal logics with many-valued accessibility relations}.
\newblock \bibinfo{journal}{Fundam. Informaticae} \bibinfo{volume}{173},
  \bibinfo{pages}{177--189}.
\newblock \DOIprefix\doi{10.3233/FI-2020-1920}.
%Type = Article
\bibitem[{Divroodi et~al.(2018)Divroodi, Ha, Nguyen and Nguyen}]{DivroodiHNN18}
\bibinfo{author}{Divroodi, A.}, \bibinfo{author}{Ha, Q.T.},
  \bibinfo{author}{Nguyen, L.}, \bibinfo{author}{Nguyen, H.},
  \bibinfo{year}{2018}.
\newblock \bibinfo{title}{On the possibility of correct concept learning in
  description logics}.
\newblock \bibinfo{journal}{Vietnam J. Computer Science} \bibinfo{volume}{5},
  \bibinfo{pages}{3--14}.
%Type = Article
\bibitem[{Du and Zhu(2018)}]{DuZ18a}
\bibinfo{author}{Du, Y.}, \bibinfo{author}{Zhu, P.}, \bibinfo{year}{2018}.
\newblock \bibinfo{title}{Fuzzy approximations of fuzzy relational structures}.
\newblock \bibinfo{journal}{Int. J. Approx. Reason.} \bibinfo{volume}{98},
  \bibinfo{pages}{1--10}.
\newblock \DOIprefix\doi{10.1016/j.ijar.2018.04.003}.
%Type = Article
\bibitem[{Eleftheriou et~al.(2012)Eleftheriou, Koutras and
  Nomikos}]{EleftheriouKN12}
\bibinfo{author}{Eleftheriou, P.}, \bibinfo{author}{Koutras, C.},
  \bibinfo{author}{Nomikos, C.}, \bibinfo{year}{2012}.
\newblock \bibinfo{title}{Notions of bisimulation for {Heyting}-valued modal
  languages}.
\newblock \bibinfo{journal}{J. Log. Comput.} \bibinfo{volume}{22},
  \bibinfo{pages}{213--235}.
\newblock \DOIprefix\doi{10.1093/logcom/exq005}.
%Type = Article
\bibitem[{Fan and Liau(2014)}]{ai/FanL14}
\bibinfo{author}{Fan, T.}, \bibinfo{author}{Liau, C.}, \bibinfo{year}{2014}.
\newblock \bibinfo{title}{Logical characterizations of regular equivalence in
  weighted social networks}.
\newblock \bibinfo{journal}{Artif. Intell.} \bibinfo{volume}{214},
  \bibinfo{pages}{66--88}.
\newblock \DOIprefix\doi{10.1016/j.artint.2014.05.007}.
%Type = Article
\bibitem[{Fan(2015)}]{Fan15}
\bibinfo{author}{Fan, T.F.}, \bibinfo{year}{2015}.
\newblock \bibinfo{title}{Fuzzy bisimulation for {G\"{o}del} modal logic}.
\newblock \bibinfo{journal}{{IEEE} Trans. Fuzzy Systems} \bibinfo{volume}{23},
  \bibinfo{pages}{2387--2396}.
\newblock \DOIprefix\doi{10.1109/TFUZZ.2015.2426724}.
%Type = Book
\bibitem[{H{\'a}jek(1998)}]{Hajek1998}
\bibinfo{author}{H{\'a}jek, P.}, \bibinfo{year}{1998}.
\newblock \bibinfo{title}{Metamathematics of Fuzzy Logics}.
\newblock \bibinfo{publisher}{Kluwer Academic Publishers}.
%Type = Inproceedings
\bibitem[{Ignjatovi{\'c} et~al.(2015)Ignjatovi{\'c}, {\'C}iri{\'c} and
  Stankovi{\'c}}]{IgnjatovicCS15}
\bibinfo{author}{Ignjatovi{\'c}, J.}, \bibinfo{author}{{\'C}iri{\'c}, M.},
  \bibinfo{author}{Stankovi{\'c}, I.}, \bibinfo{year}{2015}.
\newblock \bibinfo{title}{Bisimulations in fuzzy social network analysis}, in:
  \bibinfo{booktitle}{Proceedings of IFSA-EUSFLAT-15},
  \bibinfo{publisher}{Atlantis Press}. pp. \bibinfo{pages}{404--411}.
%Type = Article
\bibitem[{Jancic(2014)}]{Jancic14}
\bibinfo{author}{Jancic, I.}, \bibinfo{year}{2014}.
\newblock \bibinfo{title}{Weak bisimulations for fuzzy automata}.
\newblock \bibinfo{journal}{Fuzzy Sets Syst.} \bibinfo{volume}{249},
  \bibinfo{pages}{49--72}.
\newblock \DOIprefix\doi{10.1016/j.fss.2013.10.006}.
%Type = Article
\bibitem[{Marti and Metcalfe(2018)}]{aml/MartiM18}
\bibinfo{author}{Marti, M.}, \bibinfo{author}{Metcalfe, G.},
  \bibinfo{year}{2018}.
\newblock \bibinfo{title}{Expressivity in chain-based modal logics}.
\newblock \bibinfo{journal}{Arch. Math. Log.} \bibinfo{volume}{57},
  \bibinfo{pages}{361--380}.
\newblock \DOIprefix\doi{10.1007/s00153-017-0573-4}.
%Type = Article
\bibitem[{Mici{\'c} et~al.(2018)Mici{\'c}, Jan\v{c}i{\'c} and
  Stanimirovi{\'c}}]{MicicJS18}
\bibinfo{author}{Mici{\'c}, I.}, \bibinfo{author}{Jan\v{c}i{\'c}, Z.},
  \bibinfo{author}{Stanimirovi{\'c}, S.}, \bibinfo{year}{2018}.
\newblock \bibinfo{title}{Computation of the greatest right and left invariant
  fuzzy quasi-orders and fuzzy equivalences}.
\newblock \bibinfo{journal}{Fuzzy Sets and Systems} \bibinfo{volume}{339},
  \bibinfo{pages}{99--118}.
\newblock \DOIprefix\doi{10.1016/j.fss.2017.09.004}.
%Type = Article
\bibitem[{Micić et~al.(2022)Micić, Nguyen and Stanimirović}]{MNS.22}
\bibinfo{author}{Micić, I.}, \bibinfo{author}{Nguyen, L.A.},
  \bibinfo{author}{Stanimirović, S.}, \bibinfo{year}{2022}.
\newblock \bibinfo{title}{Characterization and computation of approximate
  bisimulations for fuzzy automata}.
\newblock \bibinfo{journal}{Fuzzy Sets and Systems} \bibinfo{volume}{442},
  \bibinfo{pages}{331--350}.
%Type = Article
\bibitem[{Nguyen(2019)}]{Nguyen-TFS2019}
\bibinfo{author}{Nguyen, L.}, \bibinfo{year}{2019}.
\newblock \bibinfo{title}{Bisimilarity in fuzzy description logics under the
  {Zadeh} semantics}.
\newblock \bibinfo{journal}{{IEEE} Trans. Fuzzy Systems} \bibinfo{volume}{27},
  \bibinfo{pages}{1151--1161}.
\newblock \DOIprefix\doi{10.1109/TFUZZ.2018.2871004}.
%Type = Article
\bibitem[{Nguyen(2021)}]{ijar/Nguyen21}
\bibinfo{author}{Nguyen, L.}, \bibinfo{year}{2021}.
\newblock \bibinfo{title}{Characterizing fuzzy simulations for fuzzy labeled
  transition systems in fuzzy propositional dynamic logic}.
\newblock \bibinfo{journal}{Int. J. Approx. Reason.} \bibinfo{volume}{135},
  \bibinfo{pages}{21--37}.
\newblock \DOIprefix\doi{10.1016/j.ijar.2021.04.006}.
%Type = Article
\bibitem[{Nguyen(2022a)}]{DBLP:journals/jifs/Nguyen22}
\bibinfo{author}{Nguyen, L.}, \bibinfo{year}{2022}a.
\newblock \bibinfo{title}{Computing crisp simulations for fuzzy labeled
  transition systems}.
\newblock \bibinfo{journal}{J. Intell. Fuzzy Syst.} \bibinfo{volume}{42},
  \bibinfo{pages}{3067--3078}.
\newblock \DOIprefix\doi{10.3233/JIFS-210792}.
%Type = Article
\bibitem[{Nguyen(2022b)}]{NguyenFSS2021}
\bibinfo{author}{Nguyen, L.}, \bibinfo{year}{2022}b.
\newblock \bibinfo{title}{Logical characterizations of fuzzy bisimulations in
  fuzzy modal logics over residuated lattices}.
\newblock \bibinfo{journal}{Fuzzy Sets and Systems} \bibinfo{volume}{431},
  \bibinfo{pages}{70--93}.
\newblock \DOIprefix\doi{10.1016/j.fss.2021.08.009}.
%Type = Article
\bibitem[{Nguyen(2023a)}]{isci/Nguyen23}
\bibinfo{author}{Nguyen, L.}, \bibinfo{year}{2023}a.
\newblock \bibinfo{title}{Computing the fuzzy partition corresponding to the
  greatest fuzzy auto-bisimulation of a fuzzy graph-based structure under the
  {G{\"{o}}del} semantics}.
\newblock \bibinfo{journal}{Inf. Sci.} \bibinfo{volume}{630},
  \bibinfo{pages}{482--506}.
\newblock \DOIprefix\doi{10.1016/j.ins.2023.02.029}.
%Type = Article
\bibitem[{Nguyen(2023b)}]{FB4FA}
\bibinfo{author}{Nguyen, L.}, \bibinfo{year}{2023}b.
\newblock \bibinfo{title}{Fuzzy simulations and bisimulations between fuzzy
  automata}.
\newblock \bibinfo{journal}{International Journal of Approximate Reasoning}
  \bibinfo{volume}{155}, \bibinfo{pages}{113--131}.
\newblock \DOIprefix\doi{10.1016/j.ijar.2023.02.002}.
%Type = Misc
\bibitem[{Nguyen(2023c)}]{BFBA-prog}
\bibinfo{author}{Nguyen, L.}, \bibinfo{year}{2023}c.
\newblock \bibinfo{title}{An implementation in {Python} of the agorithms
  provided in the current paper}.
\newblock \bibinfo{howpublished}{Available at
  \url{www.mimuw.edu.pl/~nguyen/BFBA}}.
%Type = Article
\bibitem[{Nguyen et~al.(2020)Nguyen, Ha, Nguyen, Nguyen and Tran}]{FSS2020}
\bibinfo{author}{Nguyen, L.}, \bibinfo{author}{Ha, Q.T.},
  \bibinfo{author}{Nguyen, N.}, \bibinfo{author}{Nguyen, T.},
  \bibinfo{author}{Tran, T.L.}, \bibinfo{year}{2020}.
\newblock \bibinfo{title}{Bisimulation and bisimilarity for fuzzy description
  logics under the {G\"odel} semantics}.
\newblock \bibinfo{journal}{Fuzzy Sets and Systems} \bibinfo{volume}{388},
  \bibinfo{pages}{146--178}.
\newblock \DOIprefix\doi{10.1016/j.fss.2019.08.004}.
%Type = Article
\bibitem[{Nguyen et~al.(2023)Nguyen, Mici{\'c} and
  Stanimirovi{\'c}}]{FuzzyMinimaxNets}
\bibinfo{author}{Nguyen, L.}, \bibinfo{author}{Mici{\'c}, I.},
  \bibinfo{author}{Stanimirovi{\'c}, S.}, \bibinfo{year}{2023}.
\newblock \bibinfo{title}{Fuzzy minimax nets}.
\newblock \bibinfo{journal}{IEEE Transactions on Fuzzy Systems}
  \DOIprefix\doi{10.1109/TFUZZ.2023.3237936}.
%Type = Inproceedings
\bibitem[{Nguyen and Nguyen(2021)}]{fuzzIEEE/NguyenN21}
\bibinfo{author}{Nguyen, L.}, \bibinfo{author}{Nguyen, N.},
  \bibinfo{year}{2021}.
\newblock \bibinfo{title}{Characterizing crisp simulations and crisp directed
  simulations between fuzzy labeled transition systems by using fuzzy modal
  logics}, in: \bibinfo{booktitle}{Proceedings of {FUZZ-IEEE} 2021},
  \bibinfo{publisher}{{IEEE}}. pp. \bibinfo{pages}{1--7}.
\newblock \DOIprefix\doi{10.1109/FUZZ45933.2021.9494504}.
%Type = Article
\bibitem[{Nguyen and Nguyen(2023)}]{tfs/NguyenN23}
\bibinfo{author}{Nguyen, L.}, \bibinfo{author}{Nguyen, N.},
  \bibinfo{year}{2023}.
\newblock \bibinfo{title}{Logical characterizations of crisp bisimulations in
  fuzzy description logics}.
\newblock \bibinfo{journal}{{IEEE} Trans. Fuzzy Syst.} \bibinfo{volume}{31},
  \bibinfo{pages}{1294--1304}.
\newblock \DOIprefix\doi{10.1109/TFUZZ.2022.3198853}.
%Type = Article
\bibitem[{{Nguyen} and {Tran}(2021)}]{TFS2020}
\bibinfo{author}{{Nguyen}, L.}, \bibinfo{author}{{Tran}, D.},
  \bibinfo{year}{2021}.
\newblock \bibinfo{title}{Computing fuzzy bisimulations for fuzzy structures
  under the {G\"odel} semantics}.
\newblock \bibinfo{journal}{IEEE Transactions on Fuzzy Systems}
  \bibinfo{volume}{29}, \bibinfo{pages}{1715--1724}.
\newblock \DOIprefix\doi{10.1109/TFUZZ.2020.2985000}.
%Type = Article
\bibitem[{Nguyen and Tran(2023)}]{CompCB-arxiv}
\bibinfo{author}{Nguyen, L.}, \bibinfo{author}{Tran, D.}, \bibinfo{year}{2023}.
\newblock \bibinfo{title}{Computing crisp bisimulations for fuzzy structures}.
\newblock \bibinfo{journal}{CoRR} \bibinfo{volume}{abs/2010.15671}.
%Type = Article
\bibitem[{Pan et~al.(2014)Pan, Cao, Zhang and
  Chen}]{DBLP:journals/ijar/PanC0C14}
\bibinfo{author}{Pan, H.}, \bibinfo{author}{Cao, Y.}, \bibinfo{author}{Zhang,
  M.}, \bibinfo{author}{Chen, Y.}, \bibinfo{year}{2014}.
\newblock \bibinfo{title}{Simulation for lattice-valued doubly labeled
  transition systems}.
\newblock \bibinfo{journal}{Int. J. Approx. Reason.} \bibinfo{volume}{55},
  \bibinfo{pages}{797--811}.
\newblock \DOIprefix\doi{10.1016/j.ijar.2013.11.009}.
%Type = Article
\bibitem[{Pan et~al.(2015)Pan, Li and Cao}]{DBLP:journals/ijar/PanLC15}
\bibinfo{author}{Pan, H.}, \bibinfo{author}{Li, Y.}, \bibinfo{author}{Cao, Y.},
  \bibinfo{year}{2015}.
\newblock \bibinfo{title}{Lattice-valued simulations for quantitative
  transition systems}.
\newblock \bibinfo{journal}{Int. J. Approx. Reason.} \bibinfo{volume}{56},
  \bibinfo{pages}{28--42}.
\newblock \DOIprefix\doi{10.1016/j.ijar.2014.10.001}.
%Type = Article
\bibitem[{Qiao and Zhu(2021)}]{QiaoZ21}
\bibinfo{author}{Qiao, S.}, \bibinfo{author}{Zhu, P.}, \bibinfo{year}{2021}.
\newblock \bibinfo{title}{Limited approximate bisimulations and the
  corresponding rough approximations}.
\newblock \bibinfo{journal}{Int. J. Approx. Reason.} \bibinfo{volume}{130},
  \bibinfo{pages}{50--82}.
\newblock \DOIprefix\doi{10.1016/j.ijar.2020.12.005}.
%Type = Article
\bibitem[{Qiao et~al.(2023a)Qiao, Zhu and Feng}]{tfs/QIAO2023}
\bibinfo{author}{Qiao, S.}, \bibinfo{author}{Zhu, P.}, \bibinfo{author}{Feng,
  J.e.}, \bibinfo{year}{2023}a.
\newblock \bibinfo{title}{Fuzzy bisimulations for nondeterministic fuzzy
  transition systems}.
\newblock \bibinfo{journal}{IEEE Transactions on Fuzzy Systems}
  \bibinfo{volume}{31}, \bibinfo{pages}{2450--2463}.
\newblock \DOIprefix\doi{10.1109/TFUZZ.2022.3227400}.
%Type = Article
\bibitem[{Qiao et~al.(2023b)Qiao, Zhu and Pedrycz}]{fss/QIAO2023}
\bibinfo{author}{Qiao, S.}, \bibinfo{author}{Zhu, P.},
  \bibinfo{author}{Pedrycz, W.}, \bibinfo{year}{2023}b.
\newblock \bibinfo{title}{Approximate bisimulations for fuzzy-transition
  systems}.
\newblock \bibinfo{journal}{Fuzzy Sets and Systems} ,
  \bibinfo{pages}{108533}\DOIprefix\doi{10.1016/j.fss.2023.108533}.
%Type = Article
\bibitem[{Stanimirovi{\'c} et~al.(2022)Stanimirovi{\'c}, Mici{\'c} and
  {\'C}iri{\'c}}]{SMC.20}
\bibinfo{author}{Stanimirovi{\'c}, S.}, \bibinfo{author}{Mici{\'c}, I.},
  \bibinfo{author}{{\'C}iri{\'c}, M.}, \bibinfo{year}{2022}.
\newblock \bibinfo{title}{Approximate bisimulations for fuzzy automata over
  complete {Heyting} algebras}.
\newblock \bibinfo{journal}{{IEEE} Trans. Fuzzy Syst.} \bibinfo{volume}{30},
  \bibinfo{pages}{437--447}.
\newblock \DOIprefix\doi{10.1109/TFUZZ.2020.3039968}.
%Type = Article
\bibitem[{Stanimirovi{\'c} et~al.(2019)Stanimirovi{\'c}, Stamenkovi{\'c} and
  {\'C}iri{\'c}}]{StanimirovicSC2019}
\bibinfo{author}{Stanimirovi{\'c}, S.}, \bibinfo{author}{Stamenkovi{\'c}, A.},
  \bibinfo{author}{{\'C}iri{\'c}, M.}, \bibinfo{year}{2019}.
\newblock \bibinfo{title}{Improved algorithms for computing the greatest right
  and left invariant boolean matrices and their application}.
\newblock \bibinfo{journal}{Filomat} \bibinfo{volume}{33},
  \bibinfo{pages}{2809--2831}.
%Type = Article
\bibitem[{Wu et~al.(2018a)Wu, Chen, Han and Chen}]{DBLP:journals/ijar/WuCHC18}
\bibinfo{author}{Wu, H.}, \bibinfo{author}{Chen, T.}, \bibinfo{author}{Han,
  T.}, \bibinfo{author}{Chen, Y.}, \bibinfo{year}{2018}a.
\newblock \bibinfo{title}{Bisimulations for fuzzy transition systems
  revisited}.
\newblock \bibinfo{journal}{Int. J. Approx. Reason.} \bibinfo{volume}{99},
  \bibinfo{pages}{1--11}.
\newblock \DOIprefix\doi{10.1016/j.ijar.2018.04.010}.
%Type = Article
\bibitem[{Wu et~al.(2018b)Wu, Chen, Bu and Deng}]{DBLP:journals/fss/WuCBD18}
\bibinfo{author}{Wu, H.}, \bibinfo{author}{Chen, Y.}, \bibinfo{author}{Bu, T.},
  \bibinfo{author}{Deng, Y.}, \bibinfo{year}{2018}b.
\newblock \bibinfo{title}{Algorithmic and logical characterizations of
  bisimulations for non-deterministic fuzzy transition systems}.
\newblock \bibinfo{journal}{Fuzzy Sets Syst.} \bibinfo{volume}{333},
  \bibinfo{pages}{106--123}.
\newblock \DOIprefix\doi{10.1016/j.fss.2017.02.008}.
%Type = Article
\bibitem[{Wu and Deng(2016)}]{DBLP:journals/fss/WuD16}
\bibinfo{author}{Wu, H.}, \bibinfo{author}{Deng, Y.}, \bibinfo{year}{2016}.
\newblock \bibinfo{title}{Logical characterizations of simulation and
  bisimulation for fuzzy transition systems}.
\newblock \bibinfo{journal}{Fuzzy Sets Syst.} \bibinfo{volume}{301},
  \bibinfo{pages}{19--36}.
\newblock \DOIprefix\doi{10.1016/j.fss.2015.09.012}.
%Type = Article
\bibitem[{Yang and Li(2018a)}]{YL.18b}
\bibinfo{author}{Yang, C.}, \bibinfo{author}{Li, Y.}, \bibinfo{year}{2018}a.
\newblock \bibinfo{title}{Approximate bisimulation relations for fuzzy
  automata}.
\newblock \bibinfo{journal}{Soft Computing} \bibinfo{volume}{22},
  \bibinfo{pages}{4535--4547}.
%Type = Article
\bibitem[{Yang and Li(2018b)}]{YL.18a}
\bibinfo{author}{Yang, C.}, \bibinfo{author}{Li, Y.}, \bibinfo{year}{2018}b.
\newblock \bibinfo{title}{$\epsilon$-bisimulation relations for fuzzy
  automata}.
\newblock \bibinfo{journal}{IEEE Transactions on Fuzzy Systems}
  \bibinfo{volume}{26}, \bibinfo{pages}{2017--2029}.
%Type = Article
\bibitem[{Yang and Li(2020)}]{DBLP:journals/fss/YangL20}
\bibinfo{author}{Yang, C.}, \bibinfo{author}{Li, Y.}, \bibinfo{year}{2020}.
\newblock \bibinfo{title}{Approximate bisimulations and state reduction of
  fuzzy automata under fuzzy similarity measures}.
\newblock \bibinfo{journal}{Fuzzy Sets Syst.} \bibinfo{volume}{391},
  \bibinfo{pages}{72--95}.
\newblock \DOIprefix\doi{10.1016/j.fss.2019.07.010}.

\end{thebibliography}
\bibliographystyle{elsarticle-harv}

%===============================================================================

\newpage
\appendix

\section{Additional Proofs}

The proofs given in this appendix are similar to the ones of~\cite{FB4FA} (as depth-bounded fuzzy simulations/bisimulations are approximations of fuzzy simulations/bisimulations). We present them here to make the article self-contained. 

\begin{proof}[of Lemma~\ref{lemma: KHSBS}]
	Let $\Phi = (\varphi_n)_{n \in \NN}$ be a depth-bounded fuzzy simulation between~$\mA$ and~$\mAp$.
	We prove the lemma by induction on the structure of~$\alpha$. 
	\begin{itemize}
		\item Case $\alpha = \tau$: The assertion follows from~\eqref{eq: HFKJA 1}, \eqref{fop: GDJSK 10} and \eqref{eq: HFKJA 2}.
		
		\item Case $\alpha = (s \circ \beta)$: We have $1 \leq |\alpha| \leq n$. By~\eqref{eq: HFKJA 3}, \eqref{fop: GDJSK 10} and the induction assumption, we have 
		\[ \varphi_n^{-1} \circ \alpha^\mA = \varphi_n^{-1} \circ \deltaA_s \circ \beta^\mA \leq \deltaAp_s \circ \varphi_{n-1}^{-1} \circ \beta^\mA \leq  \deltaAp_s \circ \beta^\mAp = \alpha^\mAp. \]
		
		\item Case $\alpha = (a \to \beta)$: By~\eqref{fop: GDJSK 80a}, \eqref{fop: GDJSK 20} and the induction assumption, we have  
		\[ \varphi_n^{-1} \circ \alpha^\mA = \varphi_n^{-1} \circ (a \fto \beta^\mA)  \leq (a \fto \varphi_n^{-1} \circ \beta^\mA) \leq (a \fto \beta^\mAp) = \alpha^\mAp. \]
		
		\item Case $\alpha = (\beta \land \gamma)$: By~\eqref{fop: GDJSK 10} and the induction assumption, we have 
		\[ \varphi_n^{-1} \circ \alpha^\mA = \varphi_n^{-1} \circ (\beta^\mA \land \gamma^\mA) \leq (\varphi_n^{-1} \circ \beta^\mA) \land (\varphi_n^{-1} \circ \gamma^\mA) \leq \beta^\mAp \land \gamma^\mAp = \alpha^\mAp. \]
	\end{itemize}

	\vspace{-2.5em}
	
	\myend	
\end{proof}

\begin{proof}[of Theorem~\ref{theorem: HDJHA}]
	By the consequence~\eqref{eq: HDJHS} of Lemma~\ref{lemma: KHSBS}, it suffices to prove that $\Phi = (\varphi_n)_{n \in \NN}$ is a depth-bounded fuzzy simulation between $\mA$ and $\mAp$. 
	Since $\mFdbs{n-1} \subseteq \mFdbs{n}$ for $n \geq 1$, $\Phi$ satisfies the condition~\eqref{eq: HFKJA 1}. 
	By definition, for every $n \in \NN$ and $\tuple{x,x'} \in A \times A'$, 
	\[
	\varphi_n(x,x') \leq (\tau^\mA(x) \fto \tau^\mAp(x')), 
	\]
	which implies 
	\(
	\varphi_0(x,x') \fand \tauA(x) \leq \tauAp(x'). 
	\)
	Therefore, \eqref{eq: HFKJA 2} holds. To prove~\eqref{eq: HFKJA 3}, it suffices to show that, for every $n \geq 1$, $s \in \Sigma$, $\tuple{x',y} \in A' \times A$ and $x \in A$, there exists $y' \in A'$ such that  
	\[ \varphi_n(x,x') \fand \deltaA_s(x,y) \leq \deltaAp_s(x',y') \fand \varphi_{n-1}(y,y'). \]
	For a contradiction, suppose that there exist $n \geq 1$, $s \in \Sigma$, $\tuple{x',y} \in A' \times A$ and $x \in A$ such that, for every $y' \in A'$,   
	\[ \varphi_n(x,x') \fand \deltaA_s(x,y) > \deltaAp_s(x',y') \fand \varphi_{n-1}(y,y'). \] 
	Since $\fand$ is continuous, it follows that, for every $y' \in A'$, there exists $\alpha_{y'} \in \mFdbs{n-1}$ such that  
	\[ \varphi_n(x,x') \fand \deltaA_s(x,y) > \deltaAp_s(x',y') \fand (\alpha_{y'}^\mA(y) \fto \alpha_{y'}^\mAp(y')). \] 
	As $\mAp$ is image-finite, let $y'_1,\ldots,y'_m$ be all elements of $A'$ such that $\deltaAp_s(x',y') > 0$. For $1 \leq i \leq m$, let $\beta_{y'_i} = (\alpha_{y'_i}^\mA(y) \to \alpha_{y'_i})$. We have that, for every $1 \leq i \leq m$, $\beta_{y'_i}^\mA(y) = 1$ (by~\eqref{fop: GDJSK 30}) and 
	\[ \varphi_n(x,x') \fand \deltaA_s(x,y) > \deltaAp_s(x',y'_i) \fand \beta_{y'_i}^\mAp(y'_i). \] 
	Since $\mL$ is linear, it follows that 
	\begin{equation}\label{eq: HDJAA}
		\varphi_n(x,x') \fand \deltaA_s(x,y) > \bigvee_{1 \leq i \leq m}\!(\deltaAp_s(x',y'_i) \fand \beta_{y'_i}^\mAp(y'_i)).
	\end{equation}
	Let $\alpha = s \circ (\beta_{y'_1} \land \ldots \land \beta_{y'_m})$. 
	We have $\alpha \in \mFdbs{n}$. 
	By~\eqref{fop: GDJSK 10} and~\eqref{fop: GDJSK 40}, we have 
	\begin{eqnarray*}
		\alpha^\mA(x) & \geq & \deltaA_s(x,y) \\
		\alpha^\mAp(x') & \leq & \bigvee_{1 \leq i \leq m}\!(\deltaAp_s(x',y'_i) \fand \beta_{y'_i}^\mAp(y'_i)).
	\end{eqnarray*}
	By~\eqref{eq: HDJAA} and~\eqref{fop: GDJSK 10}, it follows that 
	\( \varphi_n(x,x') \fand \alpha^\mA(x) > \alpha^\mAp(x'), \)
	which is equivalent to 
	\( \varphi_n(x,x') > (\alpha^\mA(x) \fto \alpha^\mAp(x')). \)
	This contradicts the definition of~$\varphi_n$. 
	\myend
\end{proof}

\begin{proof}[of Lemma~\ref{lemma: KHSBS 2}]
	Let $\Phi = (\varphi_n)_{n \in \NN}$ be a depth-bounded fuzzy bisimulation between~$\mA$ and~$\mAp$. 
	We prove the lemma by induction on the structure of~$\alpha$. 
	The cases where $\alpha$ is of the form $\tau$, $(s \circ \beta)$ or $(\beta \land \gamma)$ can be dealt with analogously as done in the proof of Lemma~\ref{lemma: KHSBS}. Consider the case $\alpha = (a \leftrightarrow \beta)$. Let $\tuple{x,x'}$ be an arbitrary pair from $A \times A'$. 
	By the induction assumption (i.e., \eqref{eq: KHSBS 2a} and~\eqref{eq: KHSBS 2b} with $\alpha$ replaced by $\beta$), 
	\[ \varphi_n(x,x') \leq (\beta^\mA(x) \fequiv \beta^\mAp(x')). \]
	By~\eqref{fop: GDJSK 150b}, it follows that 
	\[ \varphi_n(x,x') \leq ((a \fequiv \beta^\mA(x)) \fequiv (a \fequiv \beta^\mAp(x'))), \]
	which means
	\[ \varphi_n(x,x') \leq (\alpha^\mA(x) \fequiv \alpha^\mAp(x')). \]
	As this holds for all $\tuple{x,x'} \in A \times A'$, we can derive \eqref{eq: KHSBS 2a} and~\eqref{eq: KHSBS 2b}.
	\myend
\end{proof}

\begin{proof}[of Theorem~\ref{theorem: HDJHA 2}]
	This proof is similar to the proof of Theorem~\ref{theorem: HDJHA}. By the consequence~\eqref{eq: HDJHS 2} of Lemma~\ref{lemma: KHSBS 2}, it suffices to prove that $\Phi = (\varphi_n)_{n \in \NN}$ is a depth-bounded fuzzy bisimulation between $\mA$ and $\mAp$. 
	
	Since $\mFdbbs{n-1} \subseteq \mFdbbs{n}$ for $n \geq 1$, $\Phi$ satisfies the condition~\eqref{eq: HFKJA 1}. 
	
	By definition, for every $n \in \NN$ and $\tuple{x,x'} \in A \times A'$, 
	\[
	\varphi_n(x,x') \leq (\tau^\mA(x) \fequiv \tau^\mAp(x')), 
	\]
	which implies 
	\begin{eqnarray*}
		\varphi_0(x,x') \fand \tauA(x) & \leq & \tauAp(x') \\
		\varphi_0(x,x') \fand \tauAp(x') & \leq & \tauA(x).
	\end{eqnarray*}
	Therefore, \eqref{eq: HFKJA 2} and~\eqref{eq: HFKJA 4} hold. 
	
	To prove~\eqref{eq: HFKJA 3}, it suffices to show that, for every $n \geq 1$, $s \in \Sigma$, $\tuple{x',y} \in A' \times A$ and $x \in A$, there exists $y' \in A'$ such that  
	\[ \varphi_n(x,x') \fand \deltaA_s(x,y) \leq \deltaAp_s(x',y') \fand \varphi_{n-1}(y,y'). \]
	For a contradiction, suppose that there exist $n \geq 1$, $s \in \Sigma$, $\tuple{x',y} \in A' \times A$ and $x \in A$ such that, for every $y' \in A'$,   
	\[ \varphi_n(x,x') \fand \deltaA_s(x,y) > \deltaAp_s(x',y') \fand \varphi_{n-1}(y,y'). \] 
	Since $\fand$ is continuous, it follows that, for every $y' \in A'$, there exists $\alpha_{y'} \in \mFdbbs{n-1}$ such that  
	\[ \varphi_n(x,x') \fand \deltaA_s(x,y) > \deltaAp_s(x',y') \fand (\alpha_{y'}^\mA(y) \fequiv \alpha_{y'}^\mAp(y')). \] 
	As $\mAp$ is image-finite, let $y'_1,\ldots,y'_m$ be all elements of $A'$ such that $\deltaAp_s(x',y') > 0$. For $1 \leq i \leq m$, let $\beta_{y'_i} = (\alpha_{y'_i}^\mA(y) \leftrightarrow \alpha_{y'_i})$. We have that, for every $1 \leq i \leq m$, $\beta_{y'_i}^\mA(y) = 1$ (by~\eqref{fop: GDJSK 30}) and 
	\[ \varphi_n(x,x') \fand \deltaA_s(x,y) > \deltaAp_s(x',y'_i) \fand \beta_{y'_i}^\mAp(y'_i). \] 
	Since $\mL$ is linear, it follows that 
	\begin{equation}\label{eq: HDJAA 2}
		\varphi_n(x,x') \fand \deltaA_s(x,y) > \bigvee_{1 \leq i \leq m}\!(\deltaAp_s(x',y'_i) \fand \beta_{y'_i}^\mAp(y'_i)).
	\end{equation}
	Let $\alpha = s \circ (\beta_{y'_1} \land \ldots \land \beta_{y'_m})$. 
	We have $\alpha \in \mFdbbs{n}$. 
	By~\eqref{fop: GDJSK 10} and~\eqref{fop: GDJSK 40}, we have 
	\begin{eqnarray*}
		\alpha^\mA(x) & \geq & \deltaA_s(x,y) \\
		\alpha^\mAp(x') & \leq & \bigvee_{1 \leq i \leq m}\!(\deltaAp_s(x',y'_i) \fand \beta_{y'_i}^\mAp(y'_i)).
	\end{eqnarray*}
	By~\eqref{eq: HDJAA 2} and~\eqref{fop: GDJSK 10}, it follows that 
	\( \varphi_n(x,x') \fand \alpha^\mA(x) > \alpha^\mAp(x'), \)
	which is equivalent to 
	\( \varphi_n(x,x') > (\alpha^\mA(x) \fto \alpha^\mAp(x')). \)
	This contradicts the definition of~$\varphi_n$. 
	
	The assertion~\eqref{eq: HFKJA 5} can be proved analogously. 
	\myend
\end{proof}

%===============================================================================

\end{document}